\documentclass[11pt]{amsart} 

\usepackage{amsmath,amsthm}
\usepackage{amssymb}
\usepackage{booktabs}
\usepackage[T1]{fontenc}
\usepackage[utf8]{inputenc}
\usepackage{lmodern}
\usepackage{amsmath,amssymb,amsthm,mathtools}
\usepackage{enumitem}
\usepackage{microtype}
\usepackage[hidelinks]{hyperref}
\usepackage[margin=1in]{geometry}

\usepackage{enumitem}

\usepackage{graphicx}

\usepackage{xcolor,framed,tikz,float,hyperref,url}

\definecolor{shadecolor}{gray}{0.9}
\usetikzlibrary{arrows.meta,calc}

\usepackage{comment}

\newtheorem{theorem}{Theorem}
\newtheorem{corollary}[theorem]{Corollary}
\newtheorem{lemma}[theorem]{Lemma}
\newtheorem{claim}[theorem]{Claim}
\newtheorem{proposition}[theorem]{Proposition}

\newtheorem{observation}[theorem]{Observation}

\theoremstyle{definition}
\newtheorem{definition}[theorem]{Definition}
\newtheorem{remark}[theorem]{Remark}

\newcommand{\Pclass}{\mathsf{P}}
\newcommand{\DTIME}{\mathsf{DTIME}}
\newcommand{\NP}{\mathsf{NP}}
\newcommand{\coNP}{\mathsf{coNP}}
\newcommand{\PH}{\mathsf{PH}}
\newcommand{\SigmaP}[2]{\mathsf{\Sigma}_{#1}^{{\mathsf P},{#2}}}
\newcommand{\PiP}[2]{\mathsf{\Pi}_{#1}^{{\mathsf P},{#2}}}
\newcommand{\DeltaP}[2]{\mathsf{\Delta}_{#1}^{{\mathsf P},{#2}}}
\newcommand{\Prb}{\mathbf{Pr}}

\newcommand{\one}{\mathbf{1}}
\newcommand{\bits}{\{0,1\}}

\newcommand{\poly}{\operatorname{poly}}
\newcommand{\Sipser}{\operatorname{Sipser}}
\newcommand{\PP}{\mathsf{PP}}
\newcommand{\PSPACE}{\mathsf{PSPACE}}
\newcommand{\ParityP}{\mathsf{\oplus P}}
\newcommand{\UP}{\mathrm{UP}}
\newcommand{\coUP}{\mathrm{coUP}}
\newcommand{\RP}{\mathrm{RP}}
\newcommand{\NPclass}{\mathsf{NP}}
\newcommand{\IP}{\mathsf{IP}}

\newcommand{\val}{\operatorname{val}}
\newcommand{\Prob}{\mathbf{Pr}}

\newcommand{\ii}{\mathrm{i.o.}}

\providecommand{\Pclass}{\mathsf{P}}
\providecommand{\NP}{\mathsf{NP}}
\providecommand{\coNP}{\mathsf{coNP}}
\providecommand{\DP}{\mathsf{DP}}
\providecommand{\BH}{\mathsf{BH}}

\providecommand{\bits}{\{0,1\}}
\providecommand{\Prb}{\mathbf P}

\begin{document}

\title[Separations with Immunity]{Separations with Immunity Relative to a Random Oracle in Computational Complexity}

\author{Gabriel Istrate}
\address{
Faculty of Mathematics and Computer Science\\ 
University of Bucharest\\
Bucharest, Romania}
\email{gabriel.istrate@unibuc.ro} 

\date{}

\subjclass[2020]{68Q15; Secondary 68Q17, 03D15.}
\keywords{random oracles, immune sets.}

\begin{abstract}
Rossman, Servedio, and Tan \cite{1} proved that the polynomial hierarchy is infinite
relative to a random oracle.  This paper strengthens this celebrated result by considering \emph{strong 
separations}, witnessed by immune languages.  

The main result of the paper shows that with respect to a random oracle the levels of the polynomial hierarchy separate with immunity.  Specifically, with probability one over a random oracle $A$, for every $k\geq 1$
there is a language in $\SigmaP{k}{A}$ that is immune to $\PiP{k}{A}$, and,
symmetrically, a language in $\PiP{k}{A}$ that is immune to $\SigmaP{k}{A}$. We thus extend classical results due to Bennett and Gill \cite{2} and Vereshchagin \cite{3,63}. 

We accomplish this by developing a "rare-or-wrong" approach to strong separations.  We highlight the flexibility of the approach by strengthening other separations with respect to a random oracle from the complexity-theoretic literature.  \end{abstract}

\maketitle

\tableofcontents 

\section{Introduction}
\label{sec:introduction}

Complexity theory is (or at least promised to be\footnote{\emph{How well} Complexity Theory has fulfilled this promise is another discussion; one that we prefer, however, not to undertake in this paper.}) a theory of distinctions: between what can and cannot
be efficiently computed, between different kinds of computational resources, and between apparently
similar models whose computational capabilities may turn out to differ in rather significant ways\footnote{Yes, I know, it sounds like a "stock LLM" introduction. I still like it though, especially after my changes and "humanization": I was writing philosophical, \emph{compare-and-contrast} introductions before the LLM era, e.g. \cite{65}.}.   
Yet not all distinctions between complexity classes are equally
informative.  To prove that a class $\mathcal D$ is not contained in a class $\mathcal C$ it is enough
to exhibit a single language in $\mathcal D\setminus\mathcal C$.  A stronger kind of separation
asks for a witness that has no infinite ``easy core'' in $\mathcal{C}$ at all.  This is the point of view embodied by
\emph{immunity}.

The notion itself migrated to Complexity Theory from (the older field of)
Recursion Theory\footnote{I prefer the old, historical  name of this area.}:  Post's construction of simple sets \cite{41} indicated 
that incomputability could be studied not only through the undecidability of a set as a whole, but also
through the absence of large recursively enumerable subsets.  The subsequent theory of immune and biimmune sets, including
the study of their degrees \cite{42}, made immunity a standard structural tool 
of recursion theory.  When imported into computational complexity (see e.g. Homer and Maass \cite{45}), the idea of immunity acquired a
resource-bounded interpretation: a language $L$ is $\mathcal C$-immune if it is infinite but contains
no infinite sublanguage in $\mathcal C$.  Thus a language being $\mathcal C$-immune witnesses  considerably more
than nonmembership in $\mathcal C$; it says that every infinite fragment of the witness is "$\mathcal{C}$-hard".

Should immunity be viewed simply as a technically stronger form of class separation? There is (some) evidence that the 
answer to the previous question is negative, and that the 
existence of 
immune languages within complexity classes is, indeed,  significant.  For instance, $P$-bi-immunity is prevalent (at least in $E$) \cite{49}, is equivalent to
a form of almost-everywhere polynomial-time hardness
\cite{43}, and implies strong distributional hardness
\cite{44,46}.  More strikingly, Pavan and Selman \cite{47}
showed that if, for some $\epsilon>0$, $\NP$ contains a
$\DTIME(2^{n^\epsilon})$-biimmune language, then natural notions of
$\NP$-completeness separate: there exists a language that is
$2$-Turing complete for $\NP$ but not $1$-truth-table complete\footnote{Hitchcock, Pavan and Vinodchandran later
obtained a related separation under the weaker hypothesis of partial
biimmunity \cite{48}.}.  In another paper \cite{54} they showed that the existence of $P$-biimmune sets has implications for average-case complexity. These results suggest
that immunity is not just a convenient strengthening of
non-membership: it captures a form of intractability with potentially important\footnote{But, sadly, somewhat forgotten/ignored.} structural consequences.

Strong separations by immune sets were investigated for relativized polynomial
hierarchies by Ko \cite{38} and Bruschi \cite{12}.  Bennett and Gill's random-oracle
work  \cite{2}, dating to what one might call \emph{the classical period of Complexity Theory}, already contained an immunity theorem at the first level; Vereshchagin
later obtained the more symmetric conclusion that, relative to a random oracle, $\NP$ and $\coNP$
contain sets immune to the opposite class \cite{3,63}.  These results suggest a natural
question: \emph{how far can one push the classical setting of strong separation into the theory of
random oracles?}

A customary inspection of the literature reveals  that probability-one separations with immunity are
substantially less common in the literature than ordinary random-oracle
separations, which have achieved a respectable degree of success (e.g. \cite{4,2,6,29,13,24,39,32,1,3}). Beyond the classical results discussed above, not that many people seemed to care\footnote{At least by writing subsequent papers on the topic.}.  A notable exception is the
work of Hemaspaandra and Zimand, who introduced stronger notions of
immunity and proved, in particular, that relative to a random oracle
$\NP^A$ and $\oplus\mathrm{P}^A$ contain $\mathrm{P}^A$-balanced-immune
sets \cite{32}.  By contrast, much of the subsequent
literature on immunity and relativization concerns specially constructed
oracles, or generic oracles, rather than probability-one statements; see, for instance,
\cite{12, 40}.  Thus the extensive theory of random-oracle
separations has, to a significant extent, developed without a parallel
theory of random-oracle separations witnessed by immune languages.

The most striking case is, perhaps, that of separating the relativized polynomial hierarchy: here separations with respect to a random oracle were achieved by the seminal result of Rossman, Servedio and Tan \cite{1} (henceforth abbreviated as RST; see also \cite{39}), while for separations with immunity there has been essentially no progress since Bennett-Gill and Vereshchagin. 

\textbf{The goal of this paper is to propose an approach that can often lift separations with respect to a random oracle to separations with immunity.} A major success of this approach is that \emph{we strengthen the results of
Bennett-Gill and Vereshchagin (and the random oracle consequence of Rossman, Servedio and Tan) by separating arbitrary levels of the polynomial hierarchy with immunity}. Our approach is more general than this single result: we demonstrate this through a number of other applications (see the next section for specific details). 

The rest of the paper is organized as follows:   Section~\ref{sec:results} formally states the results we prove.  Section~\ref{sec:ideas} discusses the common 
ideas behind our approach.  Section~\ref{sec:prelim} collects the complexity-theoretic, probabilistic,
cryptographic, and lower-bound preliminaries.  Section~\ref{sec:proofs} proves all claimed results except the main one (the separation of the polynomial hierarchy with immunity with respect to a random oracle), beginning with the
applications in which the method is most transparent. 
The proof of the main result is given in Section~\ref{sec:proofmain}, and has to clear up several hurdles: 
\begin{itemize} 
\item We note that the bottom/intermediate fan-ins of the Sipser formulas from \cite{1} are polynomial-time computable; on the other hand the top fan-in is not obviously so. Deriving the random-oracle separation from the circuit lower bound follows the framework of \cite{61} (Chapter 7), which requires the hard formulas to be \emph{uniformly constructible}; for the balanced Sipser functions of \cite{1,39,17} this is not explicitly addressed in the literature, but is quite important to ensure that the test languages we construct in the proof of Theorem~\ref{thm:main} are members of the required complexity classes. 
\item We offer a safe way to circumvent this issue by proposing a reparameterized version of Sipser formulas, and deriving a hardness result for these formulas from the corresponding hardness result in \cite{1}. 
\item Since the reparameterized formulas will have in general a different number of variables from the original one, to derive such a result we need to resort to a \emph{coupling} of the variables in the two formulas.
\item Once such a hardness result is available for the reparameterized class, the proof proceeds along avenues similar to those for the other results.    
\end{itemize} 
Finally, Section~\ref{sec:conc} discusses some limitations of our approach and subsequent open problems.

\section{Formal Statement of Our Results} 
\label{sec:results}

Our main theorem improves upon the seminal results of Bennett and Gill \cite{2} and Vereshchagin \cite{3,63}, as well as the (single oracle) corresponding strong separations of Bruschi \cite{12} and Ko \cite{38}, separating the levels of the polynomial hierarchy (with respect to a random oracle) \emph{with immunity:} 

\begin{theorem}
\label{thm:main}
With probability one over the random oracle
$A$, simultaneously for every integer $k\geq 1$:
\begin{enumerate}[label=(\alph*)]
  \item there exists an infinite language
  $L_k^A\in\SigmaP{k}{A}$ that is $\PiP{k}{A}$-immune;
  \item there exists an infinite language
  $R_k^A\in\PiP{k}{A}$ that is $\SigmaP{k}{A}$-immune.
\end{enumerate}
\end{theorem}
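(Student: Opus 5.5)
We follow the ``rare-or-wrong'' scheme. For each $k$ we fix a sparse set of lengths $n_1<n_2<\cdots$, say $n_{i+1}=2^{n_i}$, so that computations at length $n_i$ cannot query oracle strings of length $n_{i+1}$ and so that events at distinct lengths use disjoint oracle bits. At length $n$ we take the reparameterized, uniformly constructible depth-$(k+1)$ Sipser formula $F_n$ described in the introduction. Its leaves are assigned to disjoint blocks of oracle strings, indexed by polynomial-length addresses, and the input is a biased function of the random bits, obtained by ANDing a few oracle bits per leaf so that $F_n$ is close to balanced. Since every fan-in is polynomial-time computable, an alternating machine with $k$ alternations can evaluate $F_n$ on the oracle. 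The top gate, of exponential fan-in, is guessed existentially, and the bottom gate, of polynomial fan-in, is absorbed into the final deterministic check. We then define $L_k^A$ to consist of the strings $x$ of length $n_i$, for some $i$, such that $F_{n_i}$ evaluated on the oracle block $A_x$ is true. Here each $x$ gets its own block of oracle variables, namely strings prefixed by $x$. This puts $L_k^A$ in $\SigmaP{k}{A}$ when the top gate is an OR. Dually, taking the negation, or the formula with an AND top gate, gives $R_k^A\in\PiP{k}{A}$. Infinitude with probability one follows from Borel--Cantelli, since each $x$ lies in the language independently with probability bounded away from $0$ and $1$.

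For immunity, fix a $\PiP{k}{A}$ oracle machine $M$ with time bound $n^c$. Suppose that $L(M^A)\subseteq L_k^A$ and that $L(M^A)$ is infinite. Then for infinitely many lengths $n_i$, $M^A$ accepts some $x$ of length $n_i$. The rare-or-wrong dichotomy applies to each $x$ separately. By the standard conversion (Furst--Saxe--Sipser, as in \cite{61}, Chapter 7), $M^A(x)$, viewed as a function of the block $A_x$ with all other oracle bits fixed, is computed by a depth-$(k+1)$ circuit of size $2^{\poly(n)}$ with bottom fan-in $\poly(n)$ and an AND top gate. This circuit has the ``wrong'' top gate relative to $F_n$. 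Our hardness result for the reparameterized Sipser formulas, obtained from the RST theorem \cite{1} by the coupling argument, says the following: any such circuit $C$ either agrees with $F_n$ with probability at most $1/2+o(1)$, or, more usefully in the one-sided form we need, satisfies
\[
\Pr[C=1 \wedge F_n=0]+\Pr[C=1]\cdot o(1)\ \ge\ \Pr[C=1]-\varepsilon_n .
\]
Equivalently, the circuit either accepts rarely, with $\Pr[C=1]\le\varepsilon_n$, or is wrong on a positive fraction of its accepted inputs. In the ``wrong'' case $x\in L(M^A)\setminus L_k^A$, which violates the containment. We want to choose $n_i$ so that the union, over the $2^{n_i}$ strings $x$ of length $n_i$, of the events ``$M$ accepts $x$ and all its acceptances are correct'' has probability summing to a finite quantity over $i$. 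By independence across the blocks $A_x$, the probability that $M$ never errs and yet accepts at least one $x$ of length $n_i$ is at most $\prod_x(1-\delta\Pr[C_x=1])$ combined with $\sum_x \varepsilon_{n_i}$. We need $\varepsilon_n 2^{n}$ to be summable, so we must extract an exponentially small rarity threshold. That is the quantitative form we would aim to derive from RST's $1/2+2^{-\Omega(n^{1/(k+1)})}$ bound, by amplification over many independent blocks per string, with $x$ owning $\poly(n)$ independent copies of the formula combined inside the top gate. Borel--Cantelli then gives that, almost surely, each fixed $M$ is only finitely often correct-and-accepting. A countable union over machines and over $k$ yields the statement simultaneously for all $k$. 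The argument for part (b) is symmetric, swapping OR and AND.

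\textbf{Main obstacle.} The hardest step is the one-sided, exponentially rare form of the hardness statement for the reparameterized uniform formulas. RST give average-case correlation bounds for the original balanced Sipser function with nonuniform top fan-in. Transferring these bounds through the coupling to polynomially computable parameters, while keeping the bias near $1/2$, requires care. Upgrading them to ``rare-or-wrong'' also requires care: one must condition on $C=1$ and use that $F_n$ restricted there has correlation bounded away from $1$, then amplify by independent repetition so the rarity threshold beats the $2^n$ union bound. We would first try to obtain this by applying the RST bound to the circuit $C\wedge(\text{random restriction})$ and bounding $\Pr[C=1,F=1]-\Pr[C=1]\Pr[F=1]$ directly.
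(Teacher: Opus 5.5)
\textbf{There is a genuine gap: the probabilistic step aims at a statement that is false.} Your construction-level choices match the paper's: reparameterized Sipser formulas with an OR root evaluated by $k$ alternations, AND-root depth-$(k+1)$ circuits for $\Pi^P_k$ machines, and widely spaced lengths.

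You want the probability of ``$M$ accepts some $x$ of length $n_i$ and all its acceptances are correct'' to be summable over $i$. Take $M$ that accepts $0^{n_i}$ exactly when some fixed oracle bit outside all blocks equals $1$. At every length this event has probability about $1/4$, independently across lengths, so it occurs infinitely often almost surely. Immunity is not threatened, because $M$ also errs infinitely often, but your target fails.

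A correlation bound plus near-balance gives only the one-sided inequality of Lemma~\ref{lem:rare-or-wrong}, $\Pr[C=1,F=0]\ge\frac12\Pr[C=1]-\frac12(\delta+\varepsilon)$. Your displayed inequality says almost all acceptances are wrong, and it already fails for $C$ independent of $F$.

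The missing idea is the two-case argument that turns the correct inequality into immunity. Expose the oracle through an interleaved filtration $\mathcal F_{j-1}\subseteq\mathcal G_j\subseteq\mathcal F_j$, where $\mathcal G_j$ reveals everything of length $<t_{j+1}$ except the current fresh block. Set $P_j=\Pr(E_j\mid\mathcal G_j)$ and $Q_j=\Pr(B_j\mid\mathcal G_j)$; atom by atom, $Q_j\ge\frac12P_j-\varepsilon_j$. Then apply L\'evy's conditional Borel--Cantelli lemma (Lemma~\ref{lem:CBC}):
if $\sum_jP_j<\infty$, only finitely many acceptances occur;
if $\sum_jP_j=\infty$, then $\sum_jQ_j=\infty$, so false positives occur infinitely often.
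Only the error terms $\varepsilon_j$ need to be summable, not any acceptance-type probability.

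Most of your remaining difficulties come from making the witness non-unary. With a separate block $A_x$ for each of the $2^{n_i}$ strings of length $n_i$, the computation $M^A(x)$ may query other blocks $A_{x'}$ of the same length. So the per-$x$ error events are not independent, and the product bound $\prod_x(1-\delta\Pr[C_x=1])$ is unjustified. You are also forced into a union bound over $2^{n_i}$ strings.

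The paper avoids both problems by taking $L_k^A\subseteq\{1^{t_j}\}$, with a single fresh block $V_j\subseteq\{0,1\}^{t_j}$ of $N_j=N(d,m_j)=2^{\Theta(t_j)}$ bits per designated length. Then $\varepsilon_j=N_j^{-\beta_d}\le 2^{-\beta_d\gamma_d t_j}$ is summable directly, and no amplification is needed.

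Two further points in your construction would not work as stated. Placing polynomially many copies under the top OR gate does not amplify hardness: it either pushes $\Pr[F=1]$ toward $1$ or, after rebalancing the top fan-in, just yields another Sipser formula. ANDing oracle bits at the leaves is unnecessary, since the formula is already near-balanced under uniform inputs (the paper gets this by applying Lemma~\ref{lem:effective-sipser} to constant circuits). It would also take you outside the uniform-input setting in which the RST bound is stated.
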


Relatedly, Cai \cite{6} (see also Babai \cite{5} for a simpler proof) had separated $\PH$ from $\mathsf{PSPACE}$ with respect to a random oracle. This separation also holds with immunity: 

\begin{theorem}
\label{thm:ph-pspace-immune}
With probability one over a
random oracle $A$, there exists a language
$L^A\in \mathord{\oplus}\mathsf P^A
  \subseteq \mathsf{PSPACE}^A$
that is $\mathsf{\PH}^A$-immune.  Consequently, with probability 1
\begin{equation} 
  \mathsf{PSPACE}^A\mbox{ contains a }\mathsf{PH}^A\mbox{-immune language.} 
\end{equation} 
\end{theorem}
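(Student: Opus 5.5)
We prove Theorem~\ref{thm:ph-pspace-immune} with the "rare-or-wrong" strategy: a parity test language, plus a union bound over all $\PH$ machines of the events that some machine accepts an infinite sublanguage. For $n\geq 1$, let $B_n$ be the set of strings of length $n$ starting with $1$, say, and put
\[
L^A=\{1^n : |A\cap \{0,1\}^n| \text{ is odd}\}
\]
(if needed, restrict to a sparse sequence of lengths $n_i$ so that different tests use disjoint oracle regions). Then $L^A\in\oplus\mathsf P^A$: the machine guesses $x\in\{0,1\}^n$ and accepts iff $x\in A$. Moreover $\oplus\mathsf P^A\subseteq\mathsf{PSPACE}^A$, since a polynomial-space machine can count the accepting paths. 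Infinitude holds with probability one: the events $1^n\in L^A$ are independent across $n$, each with probability $1/2$, so Borel--Cantelli applies.

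For immunity, fix a $\Sigma_k$ oracle machine $M$ with polynomial running time $p$. The event "$L(M^A)$ is an infinite subset of $L^A$" implies that for infinitely many $n$, $M^A$ accepts $1^n$ and never accepts $1^n$ when $1^n\notin L^A$. On input $1^n$, the standard translation (Furst--Saxe--Sipser) turns $M^A$ into a depth-$(k+1)$ circuit of size $2^{\poly(n)}$ over the $2^n$ variables $A\cap\{0,1\}^n$. The other oracle bits are fixed by an averaging/conditioning argument, since they are independent of this block. Håstad's bound that parity is $2^{-\Omega(N^{1/d})}$-uncorrelated with $\mathsf{AC}^0$ circuits of depth $d$ and size $2^{N^{o(1)}}$ with $N=2^n$ (Håstad/Cai-style average-case bound) then gives, for each $n$ large,
\[
\Pr_A\bigl[M^A(1^n)=1 \wedge 1^n\notin L^A\bigr]\;\geq\; \tfrac12\Pr_A\bigl[M^A(1^n)=1\bigr]-\epsilon_n,
\]
with $\epsilon_n$ superpolynomially small. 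The rare-or-wrong dichotomy follows. Either $\Pr[M^A(1^n)=1]\leq 3\epsilon_n$, so acceptance is \emph{rare}, or $M^A$ is \emph{wrong} on $1^n$ with probability at least a constant $c>0$.

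To conclude, choose a sparse sequence of lengths $n_1<n_2<\cdots$ with $n_{i+1}>p(n_i)$ and restrict $L^A$ to them. We condition sequentially: the wrongness events at different $n_i$ depend on disjoint fresh blocks, given the history. With probability one, either $M$ accepts only finitely many $1^{n_i}$ (rare case, summable by Borel--Cantelli), or infinitely often it has conditional probability $\geq c$ of erring, so it errs at some length almost surely (Lévy's extension of Borel--Cantelli). In both cases $L(M^A)$ is not an infinite subset of $L^A$. A countable union over all machines and all $k$ yields $\PH^A$-immunity with probability one.

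The main obstacle is that $M^A(1^{n_i})$ may query lengths larger than $n_i$, as well as the blocks of other test lengths. We handle it by conditioning on all bits outside block $n_i$ and applying the correlation bound to the restricted circuit. Since the bound holds for every fixing, the conditional probabilities remain $\geq c$ and the martingale argument goes through.
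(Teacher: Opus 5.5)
Your ingredients are the right ones (parity on a fresh block, H{\aa}stad's correlation bound on every restriction, a one-sided inequality, L\'evy's extension). However, the step that turns the per-length inequality into an almost-sure statement is wrong as written.

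From $\Pr[\text{err}]\geq\frac12\Pr[\text{acc}]-\epsilon_n$ you cannot conclude ``either the acceptance probability is $\leq 3\epsilon_n$ or the error probability is $\geq c$ for a constant $c>0$''. An acceptance probability of $4\epsilon_n$ only gives error probability $\geq\epsilon_n$. Consequently your final case split (finitely many acceptances, or conditional error probability $\geq c$ infinitely often) is not exhaustive. The conditional acceptance probabilities $P_i$ may tend to $0$ while $\sum_iP_i=\infty$, and then neither branch applies.

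The paper splits on $\sum_iP_i$ instead of on per-length thresholds. Let $\mathcal F_i=\sigma(A(z):|z|<n_{i+1})$, and let $\mathcal G_i$ be the same with the block $\{0,1\}^{n_i}$ removed, so that $\mathcal F_{i-1}\subseteq\mathcal G_i\subseteq\mathcal F_i$. Apply L\'evy's lemma on this interleaved filtration separately to the acceptance events $E_i$ and to the false-positive events $B_i$, with $P_i=\Pr(E_i\mid\mathcal G_i)$ and $Q_i=\Pr(B_i\mid\mathcal G_i)$. Your inequality holds on every atom of $\mathcal G_i$ and gives $Q_i\geq\frac12P_i-\epsilon_i$. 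Since $\sum_i\epsilon_i<\infty$, $\sum_iP_i=\infty$ forces $\sum_iQ_i=\infty$, hence infinitely many false positives. If instead $\sum_iP_i<\infty$, only finitely many acceptances occur. The summability of the error terms is exactly what couples the ``rare'' and ``wrong'' regimes across lengths; a per-length threshold dichotomy throws it away.

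Second, the quantifier order in your spacing is off. You choose $n_{i+1}>p(n_i)$ after fixing $M$ and its running time $p$. But $L^A$ must be one language, fixed before you range over all $\PH$ machines; otherwise the final countable intersection is over statements about different languages. You need a single length sequence, independent of $M$, that eventually outgrows every polynomial. Its unary support must also be decidable in polynomial time, so that $L^A\in\ParityP^A$ survives the restriction. The paper uses $t_j=2^{j!}$, for which $q(t_j)<t_{j+1}$ for all $j\geq j_0(q)$, and simply discards finitely many initial indices for each machine. With these two repairs your outline coincides with the paper's proof.
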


\begin{remark} 
Regan and Royer \cite[Corollary 26]{60} observed that Cai's earlier result, together with their theorem, implies that with probability one $\PH^{A}\subsetneq \ParityP^{A}$. The previous theorem strengthens this separation to separation with immunity. 
\end{remark} 

On the other hand, Bennett and Gill \cite{2} had  claimed that with probability 1 $\mathsf{PSPACE}^A$ contains a language not in $\mathsf{PP}^{A}$. However, Beigel \cite{7} observed that their proof was not correct. The claim was finally established by Aspnes, Beigel, Furst, and Rudich \cite{4}. We improve over this last result as follows: 

\begin{theorem}\label{thm:pp:pspace}
With probability 1 over a random oracle $A$, $
 \ParityP^A\text{ contains a }\PP^A\text{-immune language}$, hence with probability 1
 \begin{equation} 
\PSPACE^A\text{ contains a }\PP^A\text{-immune language}. 
\end{equation} 
\end{theorem}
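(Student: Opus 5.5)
We would prove Theorem~\ref{thm:pp:pspace} with the same ``rare-or-wrong'' scheme used for Theorem~\ref{thm:ph-pspace-immune}, replacing the $\mathsf{AC}^0$ lower bound for parity by the Aspnes--Beigel--Furst--Rudich (ABFR) lower bound: parity on $m$ variables cannot be computed by a polynomial-threshold (perceptron) circuit of polylogarithmic degree. More usefully, such a circuit cannot even agree with parity on more than a $1/2+o(1)$ fraction of inputs (a correlation version; Green's and ABFR's arguments give this).

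\textbf{Test language.} We would use a sparse set of lengths $n_1<n_2<\cdots$ with $n_{i+1}>2^{n_i}$. Let
\[
L^A=\{1^{n_i}:\ \text{the number of } y\in\{0,1\}^{n_i} \text{ with } 0^{n_i}y\in A \text{ is odd}\}.
\]
The number of $y$ can be replaced by a polynomial-size block if finer control is needed. Clearly $L^A\in\ParityP^A$. Since each bit of $L^A$ is an unbiased coin and distinct lengths use disjoint oracle regions, $L^A$ is infinite with probability one by Borel--Cantelli. As $\ParityP^A\subseteq\PSPACE^A$, the second claim follows from the first.

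\textbf{Rare-or-wrong.} Fix a $\PP$ oracle machine $M$ with polynomial running time $p(n)$. Let $S=L(M^A)$. Immunity fails only if $S$ is infinite and $S\subseteq L^A$. We may restrict $S$ to inputs $1^{n_i}$, since otherwise $S\not\subseteq L^A$ trivially. Write $E_i$ for the event ``$M^A(1^{n_i})$ accepts and $1^{n_i}\notin L^A$''. It suffices to show that, for each $i$, conditioned on the oracle outside the parity block at length $n_i$, the following holds: if $\Pr[M^A \text{ accepts } 1^{n_i}]\ge \delta$, then $\Pr[E_i]\ge c\,\delta$ for a constant $c>0$, or at least $\Pr[E_i]\ge \Pr[\text{accept}]-1/2-o(1)$.

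For the reduction, fix all oracle bits outside the block. This includes the queries $M$ makes to shorter lengths, which are determined, and to longer lengths, which are also fixed. The standard ABFR simulation then turns $M^A(1^{n_i})$ into a threshold-of-$2^{\poly(n_i)}$ AND gates of fan-in $p(n_i)$, that is, a polylog-degree perceptron over the $2^{n_i}$ block variables. Correlation bounds give that the acceptance indicator $f$ satisfies $|\Pr[f=1\wedge \mathrm{par}=0]-\Pr[f=1\wedge\mathrm{par}=1]|=o(1)$. Hence on every length where $M$ accepts with non-negligible probability, it errs with comparable probability. So either $M$ accepts $1^{n_i}$ with probability $o(1)$, making the sequence ``rare'', or it is wrong with constant probability.

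\textbf{Concluding with probability one.} The events for different $i$ are not independent, because $M$ at length $n_{i}$ may query blocks of shorter lengths. The conditioning handles this: the bounds hold conditionally on everything except block $i$, and since $n_{i+1}>2^{n_i}$, block $i$ exceeds the query range of every earlier computation. A martingale / Lévy conditional Borel--Cantelli argument then shows that, almost surely, either $S$ contains only finitely many $1^{n_i}$, or infinitely many $E_i$ occur. In both cases $S$ is not an infinite subset of $L^A$. A countable union over machines $M$ completes the proof.

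\textbf{Main obstacle.} The crucial ingredient is the correlation (average-case) form of the ABFR bound, rather than worst-case inexpressibility. We would first try to get it from Green's/ABFR's degree-$\Omega(\sqrt m)$ result via a hardcore/XOR argument. If that fails, we would apply the worst-case bound to parity on $m$ variables expressed as the parity of $k$ disjoint sub-blocks. A uniform random restriction argument then shows that small perceptrons have advantage at most $2^{-\Omega(\cdot)}$ over parity; Beigel's and Siu et al.'s results on probabilistic PTF correlation with parity suggest this is available.
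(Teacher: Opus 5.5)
Your architecture matches the paper's. You use the same parity-of-a-block language on very sparse lengths. You observe that, once the non-block bits are frozen, a $\PP$ machine's acceptance is the sign of a degree-$p(n_i)$ polynomial in the $2^{n_i}$ fresh variables. Then you apply a rare-or-wrong inequality and L\'evy's conditional Borel--Cantelli lemma.

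What you call the main obstacle is not one. The Aspnes--Beigel--Furst--Rudich theorem is already an error-counting (average-case) statement: every $g=\mathrm{sign}(p)$ with $\deg p\leq d$ satisfies $\Pr[g\neq\operatorname{PARITY}_N]\geq 2^{-N}\sum_{i\leq\lfloor(N-d-1)/2\rfloor}\binom Ni\geq\tfrac12-C\tfrac{d+1}{\sqrt N}$. The $\Omega(\sqrt m)$ degree bound you cite is itself a consequence of this. So no hardcore, XOR, or random-restriction detour is needed.

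The step that would fail as written is replacing this quantitative bound by ``$o(1)$''. From $|a-b|\leq\epsilon_i$ you get $\Pr[E_i\mid\mathcal G_i]\geq\tfrac12(P_i-\epsilon_i)$, where $P_i$ is the conditional acceptance probability. L\'evy's lemma turns $\sum_iP_i=\infty$ into $\sum_i\Pr[E_i\mid\mathcal G_i]=\infty$ only when $\sum_i\epsilon_i<\infty$. If $\epsilon_i$ merely vanishes, consider a machine whose conditional acceptance probability is of order $\epsilon_i$ with $\sum_i\epsilon_i=\infty$. It accepts infinitely often, yet your inequality forces no false positive. In other words, acceptance probability $o(1)$ is not ``rare'' in the Borel--Cantelli sense. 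Your fallback $\Pr[E_i]\geq\Pr[\text{accept}]-\tfrac12-o(1)$ is vacuous whenever the acceptance probability is below $\tfrac12$.

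The repair is immediate: ABFR gives $\epsilon_i\leq C(p(n_i)+1)2^{-n_i/2}$, which is summable under your spacing.

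A second, smaller fix: ``conditioning on everything except block $i$'' does not produce nested $\sigma$-fields, so L\'evy's lemma cannot be applied to it directly. Instead, let $\mathcal G_i$ be generated by the bits shorter than block $i+1$ outside block $i$, and $\mathcal F_i$ by all bits shorter than block $i+1$. Then $\mathcal F_{i-1}\subseteq\mathcal G_i\subseteq\mathcal F_i$, and the acceptance event lies in $\mathcal F_i$ once $p(n_i)$ is below the length of block $i+1$. Your local bound is linear, so it passes to $\mathcal G_i$ by the tower property. With these corrections your argument is exactly the paper's.
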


\begin{remark} 
Rothe has presented (Corollary 4.3 in \cite{40}) an oracle relative to which $\ParityP$ contains a language immune to both relativized $\PH$ and $\PP$, and has asked (in the closing lines of \cite{40}) whether any of the oracle constructions in his  paper extend to immunity with respect to a random oracle. 

The (common) language we use to prove  Theorems~\ref{thm:ph-pspace-immune} and~\ref{thm:pp:pspace} yields a positive answer to Rothe's question.  
\end{remark} 

Our next application concerns the failure of the random oracle hypothesis \cite{13}. To establish this result, Chang et al. showed that with probability 1 over a random oracle $A$ there is a language in $\coNP^A\setminus \IP^A$. The corresponding improvement is: 

\begin{theorem}\label{thm:ip}
With probability one over a random oracle $A$, there is a language
$L^A\in\coNP^A$ that is $\IP^A$-immune.  Hence with probability 1
 \begin{equation} 
\PSPACE^A\text{ contains an }\IP^A\text{-immune language}. 
\end{equation} 
\end{theorem}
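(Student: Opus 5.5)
We follow the ``rare-or-wrong'' template, starting from the Chang et al.\ witness for the failure of the random oracle hypothesis. For each length $n$ we fix a block of oracle strings $B_n$ of length $n$ (say, strings of the form $0^n$-prefixed coordinates, with $B_n$ pairwise disjoint for different $n$). We set
\[
L^A=\{1^n : \text{for every } x \in \{0,1\}^{n},\ A(\langle n,x\rangle)=1 \ \text{or}\ A(\langle n,x,1\rangle)=1\}.
\]
This is a $\coNP^A$ predicate, a polynomial-size universal condition over an AND of ORs. We tune it so that $\Prb[1^n\in L^A]$ is bounded away from $0$ and $1$; for example, we use blocks of $2^n$ coordinates each of which is an OR of $n$ bits, which gives $\Prb\approx (1-2^{-n})^{2^n}\to 1/e$. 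By Borel--Cantelli applied to independent lengths, $L^A$ is infinite with probability one.

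The key quantitative step is a per-length ``wrong'' bound. Fix an $\IP^A$ protocol, i.e.\ a probabilistic polynomial-time verifier $V$ with oracle $A$ interacting with an arbitrary prover. We want: for every $n$, the probability over $A$ that ($V$ accepts $1^n$ with probability $\ge 2/3$ for the optimal prover) and $1^n\notin L^A$ is at most a constant, say $\delta<1$, or better, that the event ``$V$ behaves correctly on $1^n$'' has probability at most $1-\epsilon$. This is exactly where we reuse the Chang et al.\ argument. An honest-looking prover can only convince $V$ of a co-NP fact about exponentially many random bits if $V$'s queries hit the unique falsifying block, and a polynomial-query verifier finds it with probability $\poly(n)2^{-n}$. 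Formally, conditioned on $1^n\notin L^A$ (a single falsified coordinate $x^*$), the prover's optimal strategy for the language ``$1^n\in L$'' behaves the same as on an oracle where $x^*$ is repaired, except with probability $\poly(n)/2^n$. Hence $V$'s acceptance probabilities on the two correlated oracles differ by $o(1)$, and $V$ errs on one of them. Combined with the constant mass on each side, this gives $\Prb_A[V \text{ correct on } 1^n]\le 1-\epsilon$.

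To pass from one input to immunity, we argue as follows. If $S\subseteq L^A$ is infinite and in $\IP^A$ via $V$, then $V$ must accept every $1^n\in S$ and reject every $1^n\notin L^A$, so $V$ is correct on every length where $1^n\notin L^A$ or $1^n\in S$. We strengthen the per-length bound to hold conditionally on the oracle outside block $B_n$, together with polynomially many other blocks queried by $V$ on inputs of length $\le n$; this is where block independence is used. Then, along an infinite sequence of lengths $n_1<n_2<\cdots$ chosen so that $V$ on $1^{n_i}$ cannot query $B_{n_j}$ for $j>i$ with large lengths, the probability that $V$ is ``consistent'' on all of them is at most $\prod(1-\epsilon')=0$. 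To handle the fact that $S$ is arbitrary, we use the standard rare-or-wrong trick: the event ``$V$ decides a subset of $L^A$ that is infinite'' forces, for infinitely many $n$, that $V$ accepts $1^n$ with probability $\ge 2/3$ while $1^n\in L^A$. We show that $\Prb[V \text{ accepts } 1^n \wedge V \text{ never accepts outside } L^A \text{ at length } n]\le 2^{-\Omega(n)}$, or at least a summable quantity, by amplifying: using $n$ independent sub-blocks and defining $L$ as the AND, which makes $\Prb[1^n\in L]$ rare but still of divergent sum, e.g.\ $\approx 1/n$.

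Finally, we take a countable union over verifiers $V$ and apply Borel--Cantelli to conclude the statement with probability one. The $\PSPACE^A$ consequence then follows because $\coNP^A\subseteq\PSPACE^A$. The main obstacle is the interaction-with-prover step. Unlike a $\BH$/$\NP$ machine, the prover is all-powerful and may depend on the full oracle, so the ``repair $x^*$'' coupling must be made against the optimal prover's value function. We would handle this by noting that the optimal acceptance value is determined by $V$'s query distribution, and bounding the total $V$-query weight on the $2^n$ candidate coordinates by $\poly(n)$ via averaging, exactly as in Chang et al.
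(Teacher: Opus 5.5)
\textbf{Gap 1: the step to immunity.} Your per-length estimate is two-sided (``$V$ errs on one of the two coupled oracles''). That is enough for $L^A\notin\IP^A$, but it says nothing about immunity. For immunity only false positives count, and a verifier that rejects everything is never wrong in that sense. So no bound of the form ``$V$ is consistent on $1^n$ with probability at most $1-\epsilon'$'' can hold, and your product $\prod(1-\epsilon')=0$ fails.

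Your substitute is false as well. You claim a summable bound on $\Pr[V\text{ accepts }1^n\wedge V\text{ never accepts outside }L^A\text{ at length }n]$. Take the verifier that accepts exactly the unary strings and makes no queries. For it this event is just $\{1^n\in L^A\}$, and you must keep those probabilities non-summable so that $L^A$ is infinite. The ``amplification'' to $\approx 1/n$ does not change this.

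What is missing is a one-sided, relative estimate that holds conditionally on all oracle bits of length $<t_{j+1}$ outside the current block:
\[
\Pr(\text{false positive at }t_j\mid\mathcal G_j)\ \ge\ c\,\Pr(1^{t_j}\in L^A\text{ and value}>2/3\mid\mathcal G_j),\qquad c\ge \tfrac12 .
\]
The paper gets this as follows.
\begin{enumerate}
\item Start from a \emph{positive} assignment $\beta$ with value $>2/3$.
\item Among the segments, at least $M-3q(n)$ are touched with probability $<1/3$ by the verifier interacting with the optimal prover $P_\beta$. Complete one of them.
\item Reuse $P_\beta$ on the completed assignment, which is now a uniquely-filled negative instance; its value is still $>1/3$. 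The value is a maximum over provers, so only this direction (positive to negative, with a fixed known prover) yields a bound. Your ``repair $x^*$'' sketch runs the other way and claims two-sided closeness of values, which you cannot control.
\item Double counting, with in-degree at most $M-1$, turns this into the measure inequality above.
\end{enumerate}
L\'evy's conditional Borel--Cantelli lemma then gives the dichotomy. Either $\sum_j P_j<\infty$, and $V$ accepts only finitely many $1^{t_j}$; or $\sum_j Q_j=\infty$, and soundness fails infinitely often. No unconditional summability is involved.

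\textbf{Gap 2: the support of $L^A$.} You define $L^A$ at every length. But $V$ on $1^n$ may query blocks of every length $n'\le q(n)$. So once you condition on the past, the blocks of later lengths are already partly exposed, and your events are not adapted to any filtration in which the current block is still fresh.

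Choosing a sparse sequence $n_1<n_2<\cdots$ adapted to $V$ does not repair this. The sublanguage $S$ can lie entirely off that sequence. If instead you take the sequence from $S$, it becomes oracle-dependent, which destroys the independence your product argument needs.

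The paper avoids this by supporting $L^A$ only on $\{1^{t_j}\}$ with $t_j=2^{j!}$. For every polynomial $q$ one then eventually has $q(t_j)<t_{j+1}$, and this gives the interleaved filtration $\mathcal F_{j-1}\subseteq\mathcal G_j\subseteq\mathcal F_j\subseteq\mathcal G_{j+1}$ on which the conditional argument of the first gap runs.
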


Finally, separating $P^A$ and $\NP^A\cap \coNP^A$ with respect to a random oracle had long been an open problem: Bennett and Gill \cite{2} proved such a result relative to a \emph{random permutation} (see also \cite{28}). Recently Ghosal et al. \cite{24} obtained a conditional separation under the additional hypothesis 
$\UP\nsubseteq\RP$. We cannot extend this result to separations with immunity; instead, we employ a stronger hypothesis. Specifically, we prove: 

\begin{theorem}
\label{thm:npconp}
Assume that a uniform one-way permutation
exists in the sense of Definition~\ref{def:owp}.  Then, with
probability one over the random oracle $A$, there exist two disjoint
languages $
L_0^A,L_1^A\in\UP^A\cap\coUP^A$
such that
\begin{equation}
L_0^A\mathbin{\cup}L_1^A=T_{\geq 2}:= \{1^{2^{j!}}:j\geq2\}\in \Pclass
\end{equation}

and both $L_0^A$ and $L_1^A$ are $\Pclass^A$-immune. Consequently, with probability 1 there is a language in 
$\UP^A\cap\coUP^A$ (hence in $\NPclass^A\cap\coNP^A$) that is $\Pclass^{A}$-immune.
\end{theorem}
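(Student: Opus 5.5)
We build $L_0^A,L_1^A$ on the sparse tally set $T_{\geq 2}$, so that at each length $n_j=2^{j!}$ exactly one of them contains $1^{n_j}$. The choice is governed by an oracle-hidden bit that has a unique witness. Concretely, let $f$ be the uniform one-way permutation of Definition~\ref{def:owp}, and let $h$ be a hard-core predicate for it, for instance Goldreich--Levin applied to $f(x)\circ r$. We use $A$ to extract, at length $n_j$, a pseudo-random challenge $y_j=A(1^{n_j},\cdot)$ of $\poly(n_j)$ bits. The string $1^{n_j}$ goes into $L_b^A$ iff $h(f^{-1}(y_j))=b$.

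\textbf{Membership.} Both languages lie in $\UP^A\cap\coUP^A$. The unique witness is the preimage $x=f^{-1}(y_j)$, which can be verified with one application of $f$ and oracle queries. Since $f$ is a permutation, $x$ is unique. Evaluating $h(x)$ then decides membership in $L_0^A$ and in $L_1^A$ simultaneously. Membership in $T_{\geq 2}$ is checked in polynomial time. Hence the complement of $L_0^A$ is $\overline{T_{\geq 2}}\cup L_1^A$, which is also in $\UP^A$. Disjointness and $L_0^A\cup L_1^A=T_{\geq 2}$ hold by construction.

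\textbf{Infinitude.} Each $L_b^A$ must be infinite with probability one. The bits $h(f^{-1}(y_j))$ are functions of independent oracle segments, since the lengths $n_j$ are distinct. Each such bit equals $b$ with probability bounded away from $0$, for example exactly $1/2$ for Goldreich--Levin with a uniform $r$. By the second Borel--Cantelli lemma, each $L_b^A$ is infinite almost surely.

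\textbf{Immunity via rare-or-wrong.} Fix a polynomial-time oracle machine $M$ and suppose $L(M^A)$ is an infinite subset of $L_0^A$. Then $M^A$ accepts infinitely many $1^{n_j}$, and every accepted one has hidden bit $0$. For each $j$, consider the event $E_j$ that $M^A(1^{n_j})$ accepts and the hidden bit is $0$, and also the event $F_j$ that it accepts and the hidden bit is $1$. Suppose $\Pr[E_j]\ge \Pr[F_j]+\delta_j$ with $\sum_j\delta_j$ not summable. Then $M$ would yield a predictor of $h(f^{-1}(y))$ from $y$ with advantage about $\delta_j$. The predictor simulates the oracle lazily, answers fresh queries randomly, and embeds the challenge $y$ as $y_j$.

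Queries at lengths $n_{j'}$ with $j'<j$ concern tiny inputs, since $n_{j'}\le \log n_j$ by the $2^{j!}$ spacing. The predictor can therefore brute-force those hidden bits in $\poly(n_j)$ time. Queries at lengths $n_{j'}$ with $j'>j$ are out of reach of $M$'s running time. This contradicts one-wayness, and it is the one step requiring the assumption, which is used non-uniformly in $j$. Hence accepting $1^{n_j}$ is either \emph{rare}, meaning $\Pr[\text{accept}]$ is summable along a subsequence, or \emph{wrong} with probability at least about half of the acceptance probability.

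We then need a quantitative statement. If $M$ is to be correct on all its accepted strings, then for each $j$ with $\Pr[\text{accept}]\ge\epsilon$ it errs with probability at least $\epsilon/2-\mathrm{negl}$. These error events are essentially independent across $j$, by the segment structure plus the brute-force argument. Borel--Cantelli therefore shows two cases. If $\sum_j\Pr[\text{accept } 1^{n_j}]=\infty$, then $M$ almost surely errs somewhere. If the sum is finite, then $M$ almost surely accepts only finitely many strings. Taking a countable union over the machines $M$ and over $b\in\{0,1\}$ gives $\Pr^A$-immunity of both languages with probability one.

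\textbf{Main obstacle.} The delicate step is the reduction. We must argue that $M^A$ gains no advantage from its adaptive oracle queries. In particular, it could query points that reveal information correlated with $y_j$, and we must rule this out by making $y_j$ a fresh, independent segment of $A$. We must also ensure the error events are independent enough for the second Borel--Cantelli lemma, for which we would use a conditional (Lévy) version. A further gap is that one-wayness is usually stated for inverters, not predictors of a specific hard-core bit. I would first try Goldreich--Levin to bridge this, assuming Definition~\ref{def:owp} gives security against polynomial-time adversaries on all but finitely many lengths.
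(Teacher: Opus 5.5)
There is a genuine gap in the immunity step. Your reduction to the hard-core predicate yields only an \emph{unconditional} inequality, averaged over the whole oracle. The predictor ``output $b$ if $M$ accepts, otherwise flip a coin'' has advantage $\tfrac12\Pr[\text{acc}_j]-\Pr[F_j]$, so you get $\Pr[F_j]\ge\tfrac12\Pr[\text{acc}_j]-\nu(m_j)$.

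Immunity, however, must hold for almost every individual oracle, and the acceptance events at different $j$ are strongly dependent, because $M^A(1^{n_j})$ may read everything at earlier lengths. Your brute-force remark does not help: $n_{j-1}=n_j^{1/j}$, not $\le\log n_j$, and the predictor could sample those bits lazily anyway.

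An averaged inequality cannot exclude the following kind of machine. On an event $S$ of probability $\tfrac12$, determined by short oracle bits, it accepts every $1^{n_j}$ and is always wrong. On the complement of $S$ it accepts every $1^{n_j}$ and is always right. This machine meets your inequality with equality, and your predictor has advantage exactly $0$ against it. Yet $L(M^A)$ would be an infinite subset of $L_b^A$ with probability $\tfrac12$. So the step ``$\sum_j\Pr[\text{accept }1^{n_j}]=\infty$ implies $M$ almost surely errs'' is unjustified.

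Lévy's lemma, which you mention, needs the rare-or-wrong inequality \emph{conditionally} on the past $\sigma$-field $\mathcal G_j$, with an error term that is almost surely summable. On a single atom of $\mathcal G_j$, the frozen oracle bits act as exponentially long non-uniform advice. The uniform assumption of Definition~\ref{def:owp} therefore says nothing atom by atom, and positive and negative advantages on different atoms can cancel in your average.

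The paper sidesteps this with a different design. The hidden bit is not a hard-core bit but a fresh oracle coordinate $H_j=A(v_{j,f^{-1}(Y_j)})$, located at an address determined by the preimage. Flipping that coordinate is a measure-preserving involution $\Phi_j$. Lemma~\ref{lem:involution-rare-or-wrong} then gives the exact pointwise bound $Q_{j,b}\ge\tfrac12(P_j-\mathbf 1_{T_j})$ on the $\Phi_j$-invariant $\sigma$-field, where $T_j$ is the event that $M$ queries the distinguished address. One-wayness enters only to show $\sum_j\Pr[T_j]<\infty$, via a lazy-oracle inverter. Hence $T_j$ eventually fails by the first Borel--Cantelli lemma, and no Goldreich--Levin step is needed.

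If you want to keep your construction, you must bound the positive part $\mathbb E\bigl[(\tfrac12P_j-Q_j)^+\bigr]$ by something summable. One way is a predictor that works on its own lazily sampled oracle. It first estimates the sign of $M$'s conditional correlation with the hidden bit, using self-generated pairs $(x',f(x'))$ embedded in the challenge segment, and then either follows $M$ or guesses. With $\poly(m_j)$ samples the estimation loss is $O(1/m_j)$, which is summable since $m_j$ grows like $2^{j!}$.
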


The previous series of results may have given the impression that every separation result under a random oracle extends (perhaps at the cost of stronger conditional hypotheses) to separation with immunity. This is \emph{not} true: 
Cai \cite{29} had separated the Boolean Hierarchy \cite{30} (see below) with respect to a random oracle. 

\begin{definition}

Let $A$ be an oracle.  The levels of the Boolean hierarchy relative

to $A$ are defined recursively as follows: $\BH_0^A=\Pclass^A$, 
$\BH_1^A = \NP^A$, 
and, for every $k\geq 1$,
\begin{equation} 
   \BH_{k+1}^A =
   \begin{cases}
      \{L_1\setminus L_2 :
          L_1\in \BH_k^A,\;
          L_2\in \NP^A\},
          & \text{if $k$ is odd},\\[1ex]
      \{L_1\cup L_2 :
          L_1\in \BH_k^A,\;
          L_2\in \NP^A\},
          & \text{if $k$ is even}.
   \end{cases}
\end{equation} 

Equivalently,
\begin{equation} 
   \BH_{2i}^A
      =\{L_1\cap L_2 :
           L_1\in\BH_{2i-1}^A,\;
           L_2\in\coNP^A\},
\end{equation} 

and

\begin{equation} 
   \BH_{2i+1}^A
      =\{L_1\cup L_2 :
           L_1\in\BH_{2i}^A,\;
           L_2\in\NP^A\}.
\end{equation} 

The \emph{Boolean hierarchy relative to $A$} is
$\BH^A=\bigcup_{k\geq1}\BH_k^A$. For the empty oracle we simply write $\BH_k$ and $\BH$.
In particular, $\BH_1^A=\NP^A$ and
\begin{equation} 
   \BH_2^A
      =\{L_1\cap L_2:
           L_1\in\NP^A,\;
           L_2\in\coNP^A\}.
\end{equation} 
The latter class is conventionally denoted by $\DP^A$.
\end{definition}

Cai's level-by-level random-oracle
  separation \cite{29} cannot be strengthened to level-by-level immunity beyond the base level:

\begin{observation}
\label{thm:bh}
The following statements hold: 
\begin{enumerate}[label=(\alph*),leftmargin=*]
  \item With probability one over a random oracle $A$, there is a language in
  $\DP^A$ that is $\NP^A$-immune.

  \item For every oracle $A$ and every $k\geq2$, no infinite language 
  $L\in \BH^A$ is $\BH_k^A$-immune.  
  \end{enumerate}
\end{observation}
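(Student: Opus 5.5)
Part (a) follows at once from Theorem~\ref{thm:main}(b) with $k=1$. With probability one there is an infinite language $R_1^A\in\PiP{1}{A}=\coNP^A$ that is $\NP^A$-immune. Since $\coNP^A\subseteq\DP^A$, taking $L_1=\Sigma^*\in\NP^A$ and $L_2=R_1^A$ gives $R_1^A=L_1\cap L_2\in\DP^A$. Hence $R_1^A$ is a language in $\DP^A$ that is $\NP^A$-immune. Alternatively, one can appeal directly to Vereshchagin's result \cite{3,63}, which already provides a $\coNP^A$ set immune to $\NP^A$. Either way, nothing about random oracles beyond this citation is needed.

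For part (b) the plan is a purely structural argument that works for every oracle $A$. Fix an infinite $L\in\BH^A$, say $L\in\BH_m^A$. I will show that $L$ has an infinite subset lying in $\BH_2^A=\DP^A\subseteq\BH_k^A$ for every $k\geq 2$. It suffices to find such a subset inside $\DP^A$, using the inclusions $\BH_j^A\subseteq\BH_{j+1}^A$, which follow from the definition via $L\cup\emptyset$ and $L\setminus\emptyset$.

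The key step is the standard normal form: every $L\in\BH_m^A$ can be written as
\begin{equation}
L=(N_1\setminus N_2)\cup(N_3\setminus N_4)\cup\cdots,
\end{equation}
a union of at most $\lceil m/2\rceil$ differences, with $N_i\in\NP^A$ and $N_{m+1}=\emptyset$ if $m$ is odd. Each term $N_{2i-1}\setminus N_{2i}=N_{2i-1}\cap\overline{N_{2i}}$ lies in $\DP^A$. I would prove this normal form by induction on $m$ following the recursive definition, or quote it as the classical characterization of the Boolean hierarchy (Cai et al.~\cite{30}), noting that the proof relativizes. If the recursion makes the normal form awkward, I will use instead the equally standard nested form $N_1\setminus(N_2\setminus(N_3\setminus\cdots))$ with $N_1\supseteq N_2\supseteq\cdots$, which converts to the union-of-differences form.

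Given the normal form, $L$ is a finite union, so since $L$ is infinite some term $D=N_{2i-1}\setminus N_{2i}$ is infinite. Moreover $D\subseteq L$ and $D\in\DP^A\subseteq\BH_k^A$ for $k\geq2$, so $L$ is not $\BH_k^A$-immune. The only step requiring any care is verifying the normal form relative to the paper's recursive definition, i.e. checking that $(\text{union of differences})\setminus N$ is again a finite union of differences. This holds because $(N_a\setminus N_b)\setminus N=N_a\setminus(N_b\cup N)$ and $\NP^A$ is closed under union. Everything else is bookkeeping.
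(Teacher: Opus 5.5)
Your proposal is correct and follows essentially the paper's proof. For (a), the paper cites Vereshchagin's $\NP^A$-immune $\coNP^A$ set together with $\coNP^A\subseteq\DP^A$; your alternative of invoking Theorem~\ref{thm:main}(b) with $k=1$ works equally well. For (b), the paper's second argument is exactly your observation that every $\BH^A$ language is a finite union of $\DP^A$ languages, one of which must be infinite, and your short inductive check of this normal form from the recursive definition fills in the relativized result that the paper quotes from Cai et al.
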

\begin{proof}

\begin{itemize} 
\item[(a). ] By Vereshchagin's result \cite{3}, with probability 1 there is a language in $\coNP^{A}\subseteq DP^{A}$ that is $\NP^{A}$-immune. 
\item[(b). ]  Part (b) is the relativized form of
\cite[Theorem~4.3]{37}.  Alternatively, it follows
immediately from their normal-form Theorem 2.1.1: every $L\in\mathrm{BH}^A$ is a finite union of $\mathrm{DP}^A$ languages, so if $L$ is infinite, at least one of these languages is an infinite $\mathrm{DP}^A$ subset of $L$.

\end{itemize} 
\end{proof}


\section{Highlights of Our Approach}
\label{sec:ideas}

Our proofs exploit existing lower bounds on boolean functions, together with three ingredients: 
\begin{itemize} 
\item \textbf{The main ingredient is a} \emph{rare-or-wrong} \textbf{principle.}  Fix a machine from the class against which
immunity is to be proved, and consider its attempts to recognize an infinite subset of the language being constructed.  Intuitively, at each relevant input length, either the machine accepts only with sufficiently small conditional probability (the ``rare'' case), or, if it accepts often enough, the remaining randomness gives it a substantial probability of accepting an input outside the target language (the ``wrong'' case).  Thus an infinite correct subset would require the machine to avoid both alternatives indefinitely.  
\item \emph{The probabilistic mechanism excluding this possibility} is \textbf{a conditional, or L\'evy-type, version of the Borel-Cantelli lemma.}  We expose the random oracle through a filtration $(\mathcal F_s)_{s\geq 0}$ (see Section~\ref{sec:prelim} below) so that at stage $s$ the past is contained in $\mathcal F_{s-1}$ while a sufficiently large block of fresh oracle bits remains unexposed.  Conditional on the past, we estimate the probabilities of acceptance and error: if the sum of the former is finite, the machine accepts at only finitely many relevant stages, whereas if it diverges, the rare-or-wrong estimate forces the conditional probabilities of errors to have divergent sum, and Borel-Cantelli then makes errors occur
infinitely often. The mechanism that couples (globally) the good and bad events at different lengths is the summability of the difference terms between good and bad events in the "rare-or-wrong" lemma(s). 
\item Finally, \textbf{an appropriately \emph{spaced set} of relevant input lengths} separates the two roles played by the oracle.  The computation at an earlier stage is completed before the fresh randomness needed at a later stage is exposed.  This spacing prevents dependencies created by oracle queries from contaminating the next probabilistic experiment and thereby allows us to separate the complexity-theoretic part of the argument (which provides a hardness or unpredictability estimate at a single stage)  from the probabilistic part, which converts these local estimates into an almost-sure
global immunity statement.
\end{itemize} 

There are, of course, some problem-specific technical complications. They will be presented when giving each specific proof. For now we want to give some details on some of our main technical points:

\subsection{The rare-or-wrong inequalities.}

Immunity is one-sided: to preclude immunity, an algorithm need not
approximate the target on most inputs; it may accept a very sparse collection
of positive inputs.  The following elementary inequality extracts exactly the
one-sided information needed to guarantee immunity under a random oracle: 

\begin{lemma}
\label{lem:rare-or-wrong}
Let $H,C:\{0,1\}^N\to\{0,1\}$, and let $P$ be a probability on $\{0,1\}^N$.  Suppose that, for some $\delta,\varepsilon\geq 0$,
\begin{align}
  \left|\Prb_{P}(H=1)-\frac12\right|&\leq\delta,
\label{eq:rare-wrong-balance}\\
  \Prb_{P}(C=H)&\leq\frac12+\varepsilon.
\label{eq:rare-wrong-agreement}
\end{align}
Set
\[
  p=\Prb_{P}(C=1),
  \qquad
  q=\Prb_{P}(C=1\text{ and }H=0).
\]
Then
\begin{equation}
  q\geq\frac p2-\frac{\delta+\varepsilon}{2}.
\label{eq:rare-wrong-conclusion}
\end{equation}
\end{lemma}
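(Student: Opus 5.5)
The plan is to decompose the joint distribution of $(C,H)$ under $P$ into its four cells and reduce the claim to a linear inequality among them. Write
\[
a=\Prb_P(C=1,H=1),\quad q=\Prb_P(C=1,H=0),\quad b=\Prb_P(C=0,H=1),\quad c=\Prb_P(C=0,H=0),
\]
so that $a+q+b+c=1$, $p=a+q$, $\Prb_P(H=1)=a+b$, and $\Prb_P(C=H)=a+c$. The goal is to express $q$ through $p$, the balance of $H$, and the agreement probability, and then read off the bound.

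The key identity is a short computation:
\[
\Prb_P(C=H)=a+c=a+(1-a-q-b)=1-q-b .
\]
Also $b=\Prb_P(H=1)-a=\Prb_P(H=1)-(p-q)$. Substituting gives
\[
\Prb_P(C=H)=1-q-\Prb_P(H=1)+p-q = 1+p-\Prb_P(H=1)-2q .
\]
Hence
\[
2q=1+p-\Prb_P(H=1)-\Prb_P(C=H).
\]

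Now insert the two hypotheses. By \eqref{eq:rare-wrong-balance}, $\Prb_P(H=1)\leq\tfrac12+\delta$. By \eqref{eq:rare-wrong-agreement}, $\Prb_P(C=H)\leq\tfrac12+\varepsilon$. Therefore
\[
2q\geq 1+p-\tfrac12-\delta-\tfrac12-\varepsilon = p-\delta-\varepsilon,
\]
which is \eqref{eq:rare-wrong-conclusion}.

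There is no real obstacle. The only point needing care is to use the correct side of the two-sided balance hypothesis: only the upper bound $\Prb_P(H=1)\le \frac12+\delta$ is needed. I would also remark that the bound is an exact identity up to the two slacks, so it cannot be improved in general.
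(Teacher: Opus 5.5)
Your proof is correct and follows the paper's own argument: the paper likewise derives the identity $\Prb_P(C=H)=1-\Prb_P(H=1)+p-2q$ from the joint cells and then applies only the one-sided bounds $\Prb_P(H=1)\le\frac12+\delta$ and $\Prb_P(C=H)\le\frac12+\varepsilon$. The only difference is cosmetic: you name all four cells, whereas the paper writes $\Prb_P(C=1,H=1)=p-q$ and $\Prb_P(C=0,H=0)=1-\mu-q$ directly.
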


\begin{proof}
Let $\mu=\Prb_{P}(H=1)$.  Since
\[
  \Prb_{P}(C=1,H=1)=p-q
  \quad\text{and}\quad
  \Prb_{P}(C=0,H=0)=1-\mu-q,
\]
we obtain
\[
  \Prb_{P}(C=H)=1-\mu+p-2q.
\]
Therefore
\[
  q=\frac{1-\mu+p-\Prb_{P}(C=H)}{2}.
\]
Now $1-\mu\geq\frac12-\delta$ by
\eqref{eq:rare-wrong-balance}, while
$\Prb_{P}(C=H)\leq\frac12+\varepsilon$ by
\eqref{eq:rare-wrong-agreement}.  Substitution proves
\eqref{eq:rare-wrong-conclusion}.
\end{proof}

\begin{remark} The name "rare-or-wrong" is intuitively motivated (in this case) by the following observation: for an arbitrary $\rho>0$ either $p\leq \rho$ (the "good event" is rare) or $q\geq \frac{\rho-\delta-\epsilon}{2}$ (the "bad event" is not rare). 
\end{remark} 

Lemma~\ref{lem:rare-or-wrong} will be useful in several of our results. However, when diagonalizing against probabilistic algorithms we have not one but \emph{two} thresholds to take into consideration. In such cases the following "two-threshold rare-or-wrong" lemma may prove more appropriate:  

\begin{lemma}
\label{lem:bilateral-rare-or-wrong}
Let $\Omega$ be a nonempty finite set, equipped with the uniform probability measure,
let
\[
H:\Omega\to\{0,1\},
\qquad
v:\Omega\to[0,1],
\]
and fix two thresholds $0\leq b<a\leq 1$.

Define
\[
E=\{x\in\Omega:H(x)=1,\ v(x)>a\},
\]
and
\[
B=\{y\in\Omega:H(y)=0,\ v(y)>b\}.
\]

Suppose that there is a relation $R\subseteq E\times B$ and integer constants $d,D>0$ such that:

\begin{enumerate}
\item For every $x\in E$ there are at least $d$ indices $y\in B$ such that $(x,y)\in R$; 
\item For every $y\in B$ there are at most $D$ indices $x\in E$ with $(x,y)\in R$.
\end{enumerate}

Then
\begin{equation} 
\Prb(B)\ge \frac{d}{D}\Pr(E).
\label{two-sided-conclusion}
\end{equation} 
\end{lemma}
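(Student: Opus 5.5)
This is a double-counting argument on the bipartite graph defined by $R$. The idea is to count the edges of $R$ in two ways and then normalize by $|\Omega|$, using that the measure is uniform.

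\textbf{Lower bound on $|R|$.} Count $|R|$ by summing over the left endpoints. By hypothesis (1), each $x\in E$ has at least $d$ partners $y\in B$. Hence
\[
|R|=\sum_{x\in E}\bigl|\{y\in B:(x,y)\in R\}\bigr|\geq d\,|E|.
\]

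\textbf{Upper bound on $|R|$.} Count $|R|$ by summing over the right endpoints. By hypothesis (2), each $y\in B$ has at most $D$ partners in $E$. Hence
\[
|R|=\sum_{y\in B}\bigl|\{x\in E:(x,y)\in R\}\bigr|\leq D\,|B|.
\]

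\textbf{Conclusion.} Combining the two bounds gives $d|E|\leq D|B|$. Since $D>0$, we may divide by $D$, and then divide by $|\Omega|>0$. Because $\Omega$ is finite with the uniform measure, $\Prb(E)=|E|/|\Omega|$ and $\Prb(B)=|B|/|\Omega|$, so we obtain \eqref{two-sided-conclusion}.

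There is no real obstacle. The one point to note is that the sets $E$ and $B$ are disjoint, since $H=1$ on $E$ and $H=0$ on $B$. The thresholds $a,b$ play no role in this inequality. The substance in later applications lies in constructing a relation $R$ with good ratio $d/D$, for example by local bit flips of the oracle that switch the value of $H$ while keeping $v$ above $b$.
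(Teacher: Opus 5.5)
Your proposal is correct and follows the paper's proof: you double count the pairs in $R$ to get $d\,|E|\le |R|\le D\,|B|$, then divide by $|\Omega|$ using the uniform measure. Your remarks that $E$ and $B$ are disjoint and that the thresholds $a,b$ play no role are accurate, though neither is needed for the argument.
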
 

\begin{proof}
Double count the pairs in $R$.

By the first hypothesis, every $x\in E$ is incident to at least $d$
pairs in $R$. Hence $ 
|R|\ge d\,|E|$.  

By the second hypothesis, every $y\in B$ is incident to at most $D$
pairs in $R$. Hence $
|R|\le D\,|B|.$
Therefore $
d\,|E|\le D\,|B|$,  and thus 
\begin{equation} 
|B|\ge \frac{d}{D}|E|.
\label{ineq1}
\end{equation}

Since $\Omega$ carries the uniform probability measure,
\begin{equation} 
\Prb(E)=\frac{|E|}{|\Omega|},
\qquad
\Prb(B)=\frac{|B|}{|\Omega|}.
\end{equation} 
Hence the desired inequality~(\ref{two-sided-conclusion}) follows.  
\end{proof}

\begin{remark} In this case the "rare-or-wrong" name is motivated by the following inequality: For every $\rho>0$
\begin{equation} 
\Prb(E)\leq \rho\mbox{ ("rare event"), or }\Prb(B)> \frac{d}{D}\rho\mbox{ ("wrong event.")}
\label{row} 
\end{equation} 
\end{remark}

\begin{remark} The purpose of this lemma is to handle a randomized acceptance probability $v:\Omega\to [0,1]$, rather than a single bit. Intuitively: 
\begin{itemize}
\item[-] $H(x)=1$ will mean that $x$ belongs to the target language. 
\item[-] $v(x)$ will be the probability that the probabilistic algorithm $M$ we diagonalize against (in our specific application it will be an interactive-proof verifier) accepts $x$. 
\item[-] $a,b$ are the thresholds for acceptance/rejection of this machine. 
\item[-] The "good event" $E$ is the fact that $x$ is in the language and $M$ accepts $x$ with probability $>a$. 
\item[-] The "bad event" $B$ is the fact that $x$ is not in the language but $M$ accepts $x$ with probability $>b$. 
\item[-] The "rare-or-wrong" inequality is~(\ref{row}). 
\end{itemize} 
\end{remark}

A third ``rare or wrong'' principle is appropriate in cases when the
sample space carries an involution that flips the value of a bit-signature
$H$:  

\begin{lemma}
\label{lem:involution-rare-or-wrong}
Let $\Omega$ be a nonempty finite set equipped with the uniform
probability measure, and let $\Phi\colon\Omega\to\Omega$ be an
\emph{involution}, i.e.\ $\Phi\circ\Phi=\mathrm{id}$.  Assume that
$H\colon\Omega\to\{0,1\}$ satisfies
\begin{equation}
  H(\Phi(\omega))=1-H(\omega)
  \qquad\text{for every }\omega\in\Omega .
  \label{eq:H-flips}
\end{equation}
Let $E,T\subseteq\Omega$ be such that $T$ is $\Phi$-invariant (i.e. 
$\Phi(T)=T$), and such that%
\footnote{Intuitively, outside the exceptional set $T$, membership in $E$
is unchanged by the involution $\Phi$.}
\begin{equation}
  E\,\triangle\,\Phi(E)\ \subseteq\ T .
  \label{eq:E-stable-off-T}
\end{equation}
For $b\in\{0,1\}$ put $B_b=E\cap\{H\neq b\}$.  Then
\begin{equation}
  |B_b|\ \geq\ \tfrac12\bigl(|E|-|T|\bigr),
  \qquad\text{equivalently}\qquad
  \Prb(B_b)\ \geq\ \tfrac12\bigl(\Prb(E)-\Prb(T)\bigr).
  \label{eq:involution-bound}
\end{equation}
\end{lemma}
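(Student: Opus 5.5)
The plan is to prove the counting form $|B_b|\geq\frac12(|E|-|T|)$ directly; dividing by $|\Omega|$ then gives the probabilistic form. Split $E$ into $E\cap T$ and $E':=E\setminus T$. The first step is to show that $E'$ is $\Phi$-invariant. Take $\omega\in E'$. Since $\omega\notin T$ and $T$ is $\Phi$-invariant, $\Phi(\omega)\notin T$. Because $\Phi$ is an involution, $\Phi(\omega)\in E$ is equivalent to $\omega\in\Phi(E)$. If $\omega\notin\Phi(E)$, then $\omega\in E\setminus\Phi(E)\subseteq E\triangle\Phi(E)\subseteq T$ by \eqref{eq:E-stable-off-T}, which is a contradiction. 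Hence $\Phi(\omega)\in E\setminus T=E'$, so $\Phi(E')\subseteq E'$, and applying $\Phi$ once more gives $\Phi(E')=E'$.

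The second step is to pair $E'$ with itself through $\Phi$. By \eqref{eq:H-flips}, $\Phi$ has no fixed points, since a fixed point would satisfy $H(\omega)=1-H(\omega)$. Thus $E'$ decomposes into disjoint orbits $\{\omega,\Phi(\omega)\}$ of size exactly two. On each orbit $H$ takes the values $0$ and $1$ once each, again by \eqref{eq:H-flips}. So for either $b\in\{0,1\}$, exactly one element of each orbit lies in $\{H\neq b\}$, and therefore $|E'\cap\{H\neq b\}|=\frac12|E'|$. An equivalent way to phrase this is that $\Phi$ restricts to a bijection between $E'\cap\{H=b\}$ and $E'\cap\{H\neq b\}$.

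The last step combines the two. Since $E'\subseteq E$,
\[
|B_b|\geq|E'\cap\{H\neq b\}|=\tfrac12|E\setminus T|\geq\tfrac12\bigl(|E|-|T|\bigr),
\]
where the final inequality uses $|E\cap T|\leq|T|$. The only point that needs care is the first step: it is there that both the $\Phi$-invariance of $T$ and the involution property are used, to see that $E\setminus T$ is closed under $\Phi$. Everything after that is routine counting.
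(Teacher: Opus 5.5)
Your proof is correct and follows the paper's argument essentially step for step: you show that $E\setminus T$ is $\Phi$-invariant using the $\Phi$-invariance of $T$ and the involution property, then use that $\Phi$ swaps $(E\setminus T)\cap\{H=b\}$ and $(E\setminus T)\cap\{H\neq b\}$ to conclude $|B_b|\geq\frac12|E\setminus T|\geq\frac12(|E|-|T|)$. Your explicit remark that $\Phi$ has no fixed points, so its orbits have size two, is only a rephrasing of the paper's bijection step.
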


\begin{proof}
Put $E_0=E\setminus T$.
\begin{claim} 
\emph{$E_0$ is $\Phi$-invariant.}  
\end{claim} 
\begin{proof} 
Let $\omega\in E_0$.  Since
$\Phi(T)=T$ and $\omega\notin T$, also $\Phi(\omega)\notin T$.  Moreover
$\omega\in E$ and $\omega\notin T$, so by~\eqref{eq:E-stable-off-T}
$\omega\notin E\,\triangle\,\Phi(E)$, whence $\omega\in\Phi(E)$, say
$\omega=\Phi(e)$ with $e\in E$.  As $\Phi$ is an involution,
$\Phi(\omega)=e\in E$.  Therefore $\Phi(\omega)\in E_0$, and applying the
same argument to $\Phi(\omega)$ gives $\Phi(E_0)=E_0$.
\end{proof} 
By
\eqref{eq:H-flips}, $\Phi$ maps $E_0\cap\{H=b\}$ into
$E_0\cap\{H\neq b\}$ and vice versa; being an involution, it is a
bijection between these two sets.  Hence
\begin{equation}
  \bigl|E_0\cap\{H=b\}\bigr|=\bigl|E_0\cap\{H\neq b\}\bigr| .
  \label{eq:pairing}
\end{equation}
The two sets partition $E_0$, so each has exactly $\tfrac12|E_0|$
elements.

Since $E_0\cap\{H\neq b\}\subseteq B_b$,
\[
  |B_b|\ \geq\ \bigl|E_0\cap\{H\neq b\}\bigr|
         \ =\ \tfrac12|E_0|
         \ \geq\ \tfrac12\bigl(|E|-|T|\bigr).
\]
Dividing by $|\Omega|$ gives the probabilistic form
of~\eqref{eq:involution-bound}.
\end{proof}

\begin{remark}
As before, the name is explained by a disjunctive reformulation: in the
setting of Lemma~\ref{lem:involution-rare-or-wrong}, for every $\rho>0$
either
\[
  \Prb(E)\leq\Prb(T)+\rho\ \ 
  \qquad\text{or}\qquad
  \Prb(B_b)>\tfrac{\rho}{2}\ \ 
\]
That is, intuitively, either the basic event $E$ is rare, or a random sample
$\omega$ from $E$ is often ``wrong'' (i.e.\ $H(\omega)$ differs from a
target value $b\in\{0,1\}$).  
It is, however, the inequality~\eqref{eq:involution-bound} that we apply.
\label{rem:involution-rare-or-wrong}
\end{remark}

\subsection{The spacing of input lengths.}

A standard random-oracle construction assigns a hard target bit to every
input length, schematically
\begin{equation}
  L_{\mathrm{std}}^A
  =\left\{1^n:F_n\bigl(A\upharpoonright\{0,1\}^n\bigr)=1\right\}.
\label{eq:standard-oracle-language}
\end{equation}
For ordinary separations, this strategy is often enough to force a disagreement, hence separation.  Immunity
requires more: whenever a machine continues to accept designated inputs with
nonsummable conditional probability, one must force a false positive.

On input $1^n$, a polynomial-time oracle machine
may query strings whose lengths are as large as $q(n)$ for some polynomial
$q$.  Revealing enough of the oracle to determine its behavior at length $n$
can therefore reveal the target blocks for many larger input lengths.  

We avoid this problem by using only the lengths
\begin{equation}
  t_j=2^{j!},
  \qquad j\geq 1.
\label{eq:tower}
\end{equation}
They satisfy
\begin{equation}
  t_{j+1}=t_j^{j+1}.
\label{eq:tower-recurrence}
\end{equation}
The target languages constructed below will be subsets of $T=\{1^{t_j}:j\geq 1\}.$ 
This spacing has two properties: 

\begin{lemma}
\label{lem:support-decidable}
The unary language $T$ is decidable in deterministic polynomial time.
\end{lemma}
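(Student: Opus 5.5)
To decide $T=\{1^{t_j}:j\geq 1\}$, where $t_j=2^{j!}$, the algorithm first checks that the input $w$ has the form $1^n$, and rejects otherwise. Then it generates the sequence $t_1,t_2,\dots$ directly in binary, so no factorials are ever computed as integers. It starts from $t_1=2$. Using the recurrence \eqref{eq:tower-recurrence}, $t_{j+1}=t_j^{j+1}$, it equivalently keeps the exponent $e_j=j!$ and updates $e_{j+1}=(j+1)e_j$. After each step it compares the current value with $n$. It accepts when the current value equals $n$, and rejects as soon as the current value exceeds $n$.

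The remaining work is the running-time bound. The loop only continues while $t_j\le n$. Since $t_j=2^{j!}\geq 2^j$, this forces $j\le \log_2 n$, so there are at most $O(\log n)$ iterations. More usefully, all quantities stay small: while $t_j\le n$ we have $e_j=j!\le \log_2 n$. So each update $e_{j+1}=(j+1)e_j$ is a multiplication of $O(\log\log n)$-bit numbers.

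For the comparison, I would not materialize $t_{j+1}$ when it is huge. Instead I would compare exponents: $t_j\le n$ if and only if $e_j\le \lfloor\log_2 n\rfloor$, with equality $t_j=n$ exactly when $n$ is a power of two and $e_j=\log_2 n$. The algorithm computes $\lfloor \log_2 n\rfloor$ and tests whether $n$ is a power of two once, by counting the length of the input, in $O(n)$ time. Each comparison then costs $O(\log\log n)$, so the total time is $O(n)$ plus polylogarithmic overhead, which is polynomial in the input length $n$.

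The only potential obstacle is the size of the integers. If $t_{j+1}$ were computed explicitly after overshooting $n$, it could have about $(j+1)\log n$ bits. Even that is still polynomial, since $j\le\log n$. The exponent comparison avoids the issue entirely, so I expect no real difficulty. The unary encoding of inputs is what makes this trivially polynomial: the input length $n$ itself dominates all the arithmetic.
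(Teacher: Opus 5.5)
Your proposal is correct and matches the paper's proof: test that $n=2^r$ is a power of two, then generate $1!,2!,3!,\ldots$ until a value reaches or exceeds $r$, and accept exactly on equality. Your running-time discussion is a more detailed version of the paper's observation that all integers involved have $O(\log n)$ bits, so the whole procedure is polynomial in the unary input length.
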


\begin{proof}
On input $1^n$, first test whether $n$ is a power of two.  If not, reject.  If
$n=2^r$, compute $1!,2!,3!,\ldots$ until the first value that is at least $r$,
and accept exactly when one of these values equals $r$.  All integers involved
have $O(\log n)$ bits, and even an implementation that scans the unary input
runs in time polynomial in $n$.
\end{proof}

\begin{lemma}
\label{lem:wide-spacing}
For every polynomial $q$, there is $j_0$ such that for all $j\geq j_0$ 
\begin{equation} 
  q(t_j)<t_{j+1}
\end{equation} 
\end{lemma}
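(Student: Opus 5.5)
We reduce the claim to an elementary comparison of growth rates, using the recurrence~\eqref{eq:tower-recurrence}. Since $q$ is a polynomial, we first dominate it by a monomial. Write $q(n)=\sum_{i=0}^{d} c_i n^i$ and set $c=\sum_i |c_i|$. For every $n\geq 1$ we then have $q(n)\leq c\,n^{d}$. We may assume $c\geq 1$ and $d\geq 1$, since enlarging the bound only makes the claim harder. Next we absorb the constant: for $n\geq c$ we get $c\,n^d\leq n^{d+1}$, so $q(n)\leq n^{d+1}$ whenever $n\geq \max(c,1)$.

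Now we apply this with $n=t_j=2^{j!}$. By~\eqref{eq:tower-recurrence}, $t_{j+1}=t_j^{\,j+1}$. Hence for all $j$ with $t_j\geq c$ we obtain
\[
q(t_j)\leq t_j^{\,d+1}.
\]
If in addition $j+1>d+1$ and $t_j\geq 2$, then $t_j^{\,d+1}<t_j^{\,j+1}=t_{j+1}$. Both conditions hold for all sufficiently large $j$, because $t_j=2^{j!}\to\infty$ monotonically. Concretely, we take
\[
j_0=\max\bigl(d+1,\ \min\{j: 2^{j!}\geq \max(c,2)\}\bigr).
\]
For every $j\geq j_0$ this yields $q(t_j)\leq t_j^{d+1}<t_{j+1}$, which is the claim.

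There is no real obstacle here; the only care needed is with the strict inequality. Strictness comes from $t_j\geq 2$ together with the exponent gap $j+1>d+1$. We must also handle polynomials with negative or zero coefficients (or constant $q$), and bounding by $\sum|c_i|\,n^d$ takes care of these uniformly.
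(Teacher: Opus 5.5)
Your proof is correct and takes essentially the same route as the paper. Both arguments bound $q$ by a monomial $n^{c}$ for large $n$; you make the constants explicit via $\sum_i|c_i|$ and the exponent $d+1$. Both then use the recurrence $t_{j+1}=t_j^{\,j+1}$, taking $j+1$ larger than that exponent and $t_j\geq 2$, to get the strict inequality.
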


\begin{proof}
Choose $c\geq 1$ and $n_0$ such that $q(n)\leq n^c$ for all $n\geq n_0$.
Choose $j_1$ such that $t_j\geq n_0$ for $j\geq j_1$.  If
$j\geq\max\{c,j_1\}$, then
$q(t_j)\leq t_j^c<t_j^{j+1}=t_{j+1}.$
\end{proof}

The inequality $q(t_j)<t_{j+1}$ enables the two-step exposure approach used later in the proofs:
first reveal every relevant oracle coordinate except the current block, and
then reveal the current block. 

\section{Preliminaries} 
\label{sec:prelim}

\subsection{Notions from Computational Complexity.} 

We will assume familiarity with standard notions in complexity theory as presented e.g. in \cite{23}, including finite alphabets, (oracle) Turing machines, languages, complexity classes such as $\Pclass,\NP,\Sigma_{k}^{P},\Pi_{k}^{P}, \Delta_{k}^{P},\PH,\PSPACE,\IP,\PP, \ParityP$. For an oracle $A$, the corresponding relativized classes are denoted by using $A$ as a superscript (e.g. 
$\SigmaP{k}{A}$, $\PiP{k}{A}$, and $\DeltaP{k}{A}$). 

We fix the following acceptance convention for $\mathsf{PP}$:  A nondeterministic polynomial-time oracle machine $M$ \emph{accepts $x$ relative to $A$} if \emph{at least as many computation paths of $M^A(x)$ accept as reject}; $L(M^A)$ is the set of such $x$, and $\mathsf{PP}^A$ is the class of languages of this form. The convention is harmless: if $M$ instead uses the strict rule, that more paths accept than reject, replace $M$ by the machine that first makes one nondeterministic binary choice, on one branch simulating $M$ and on the other branch rejecting outright after a single step. This adds one rejecting path, so ties under the new rule correspond to strict majorities under the old one, and the two machines accept the same language relative to every oracle, since the added branch makes no oracle query.

We will use notation 
$\mathrm{bin}_\ell(k)$ to denote the $\ell$-bit binary representation of an integer $k$, for $0 \le k < 2^\ell$. 

\begin{definition} 
Let $\Omega=\bits^{\bits^*}$
be the space of all Boolean oracles.  We equip $\Omega$ with the product
probability measure under which the coordinates $A(z)$, $z\in\bits^*$, are
independent unbiased bits.  A statement holds \emph{with respect to a random
oracle} or \emph{almost surely} if it holds for a set of oracles of measure
one.
\end{definition} 

\begin{definition}
Let $\mathcal C$ be a class of languages.  A language $L$ is
\emph{$\mathcal C$-immune} if
\begin{enumerate}[label=(\roman*)]
  \item $L$ is infinite, and
  \item every language $S\in\mathcal C$ satisfying $S\subseteq L$ is finite.
\end{enumerate}
A language $L$ is \emph{$\mathcal C$-biimmune} if both $L$ and its complement are $\mathcal C$-immune. 
\end{definition}

To prove Theorem~\ref{thm:ip} we need  \emph{interactive proof systems} \cite{15,16,21,22}. We review in the sequel only the specific concepts we will need: 

\begin{definition} 
A \emph{(clocked probabilistic oracle) verifier} is a probabilistic
oracle Turing machine $V$ together with an explicit polynomial clock.
On an input of length $n$, the clock bounds the running time, the
number of oracle queries, every oracle-query length, every message
length, and the total number of rounds by a polynomial in $n$.

A \emph{prover} is an arbitrary, computationally unbounded party interacting with $V$; a \emph{prover strategy $P$} is a function mapping each partial transcript to the next prover message, and we allow $P$'s choice to depend on the input $x$ and on the oracle $A$.
\label{def:ips}
\end{definition}

The following approach to defining acceptance of a language agrees with the approach in \cite{13} (see e.g. Section 3.1): 

\begin{definition} 
Fix a verifier $V$ such as in Definition~\ref{def:ips}. For an oracle $A$ and input $x$, define
\begin{equation}
\val_V(A,x)
=
\max_P
\Prob[
V^A\text{ accepts }x\text{ after interacting with }P
],
\label{eq:value}
\end{equation}
where the probability is over the private randomness of $V$ and the
maximum is over all prover strategies.
\end{definition} 

\begin{observation} 
The maximum from equation~(\ref{eq:value}) exists (see \cite{13}, from where this convention is imported). 
\end{observation}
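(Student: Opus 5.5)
The plan is to show that, for a fixed oracle $A$ and input $x$, only finitely many prover strategies need to be considered, up to the acceptance probability they induce. A maximum over a finite set of reals then exists trivially. Throughout I fix $A$ and $x$ with $|x|=n$ and let $c(n)$ be the value of the verifier's polynomial clock from Definition~\ref{def:ips}.

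First, the verifier's private randomness is finite. Since the running time is at most $c(n)$, $V$ flips at most $c(n)$ coins. The probability in \eqref{eq:value} is therefore the uniform average, over the $2^{c(n)}$ strings $\rho\in\bits^{c(n)}$ (padded if fewer coins are used), of the indicator that $V^A(x;\rho)$ accepts when interacting with $P$. For fixed $A$, $x$, $\rho$ and $P$, the interaction is a deterministic finite process.

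Second, the relevant part of a strategy is finite. The clock bounds the number of rounds and every message length by $c(n)$, so there are at most $2^{c(n)+1}$ possible messages per round (longer messages are either disallowed or read only up to $c(n)$ bits, and can be replaced by their truncations without changing $V$'s behavior). Hence the set $\mathcal T$ of partial transcripts that can arise in an interaction on input $x$ is finite. Two strategies $P,P'$ that agree on every element of $\mathcal T$ produce identical interactions for every $\rho$, and so the same acceptance probability. The map $P\mapsto\Prob[V^A\text{ accepts }x\text{ with }P]$ therefore factors through the restriction $P|_{\mathcal T}$, which ranges over a finite set of functions from $\mathcal T$ to a finite message set.

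The image of this map is thus a finite nonempty set of reals in $[0,1]$, and its maximum is attained by some strategy $P^*$; this is $\val_V(A,x)$. The dependence of $P$ on $x$ and $A$ allowed in Definition~\ref{def:ips} causes no difficulty, since both are fixed throughout. If one also allowed randomized provers, they would not raise the value, because a randomized prover's acceptance probability is an average of the values of deterministic ones. The only step requiring care is the second one: checking that the clock really bounds message lengths and the number of rounds, so that $\mathcal T$ is finite. With an unclocked verifier and unbounded messages the supremum could fail to be attained. This is exactly why Definition~\ref{def:ips} builds the explicit polynomial clock into the notion of a verifier, following \cite{13}.
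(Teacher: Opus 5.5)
Your argument is correct. The paper gives no argument of its own here: it imports both the convention and the existence of the maximum from Chang et al.\ \cite{13}. What you have written is therefore a self-contained replacement for that citation.

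Your route is longer than necessary, though, and your closing remark puts the crux in the wrong place. Your first step already finishes the proof. Since $V$ flips at most $c(n)$ coins, and for a fixed deterministic prover $P$ each $\rho\in\bits^{c(n)}$ determines the entire interaction, the acceptance probability equals $k\,2^{-c(n)}$ for some integer $0\le k\le 2^{c(n)}$. The achievable values thus lie in a finite grid, so the supremum is attained. This needs no bound on message lengths or rounds, no truncation claim (which is slightly delicate), and no passage to equivalence classes of strategies on $\mathcal T$.

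So the hypothesis that really matters is the running-time bound, i.e.\ finite randomness, not the message-length bound. A time-bounded verifier facing arbitrarily long prover messages still has an attained maximum.

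Your second step does buy extra structure. An optimal prover can be taken to send bounded-length messages and to be one of finitely many functions on $\mathcal T$. It can then be found by backward induction on a finite game tree, which is the usual route to computing $\val_V(A,x)$ in polynomial space relative to $A$. The paper later only needs existence, when it fixes an optimal prover $P_\beta$ in Lemma~\ref{lem:completion}, and existence does not require this.
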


\begin{definition} 
A verifier $V^A$ is \emph{an interactive proof verifier for a language $S$} if, for all words $x$, it holds that 
\begin{align}
x\in S
&\Longrightarrow
\val_V(A,x)>\frac23,
\label{eq:complete}\\
x\notin S
&\Longrightarrow
\val_V(A,x)<\frac13.
\label{eq:sound}
\end{align}
The constants $\frac{2}{3}$ and $\frac{1}{3}$ are a normalization: for any oracle $A$, a verifier with completeness $c$ and soundness $s$ satisfying $c-s \ge 1/p(n)$ for a polynomial $p$ can be amplified, by polynomially many independent repetitions against the same oracle 
followed by an acceptance threshold strictly between \(s\) and \(c\), for example \((c+s)/2\), into a clocked verifier meeting~(\refeq{eq:complete}) and~(\refeq{eq:sound}), and recognizing the same language. We may therefore restrict attention throughout to verifiers of this form.
\end{definition} 

\begin{definition} 
A \emph{polynomial threshold function} is a real function $f:\{0,1\}^N \to \{0,1\}$ having the form $f(x)=sign(q(x))$
where $q(\cdot)$ is a real polynomial and $sign$ is the function 
\[
sign(x)=\left\{\begin{array}{ll}
0, & \mbox{ if }x<0\\
1, & \mbox{ otherwise.}\\
\end{array} 
\right. 
\]
\end{definition} 

Finally, we assume knowledge of the approach of Rossman, Servedio and Tan (for a readable introduction see \cite{17}). 

For integers $d,m\geq 2$, a depth-$d$ bottom fan-in $m$ \emph{Sipser formula} is a
read-once monotone formula 
\begin{equation}
   \Sipser_{d,m} :
   \{0,1\}^{N_{RST}(d,m)} \longrightarrow \{0,1\}
   \label{eq:sipser-family}
\end{equation}

consisting of $d$ alternating layers of
unbounded-fan-in AND and OR gates.  The bottom gates are AND gates of fan-in $m$.  All gates at a given layer
have the same type and the same fan-in, consecutive layers have opposite types, and every
input variable labels exactly one leaf.  The fan-ins are chosen as follows (indexed from the root): 
\begin{equation}
  w_0,\underbrace{w,\ldots,w}_{d-2\text{ intermediate levels}}, m.
  \label{eq:rst-fanins}
\end{equation}

We next review the parameters chosen in  \cite{1} for the function $Sipser_{d,m}$\footnote{We refer to the full version of \cite{1}. More precisely, the definitions of
$\rho$ and $w$ below are respectively equations~(2) and~(3) of \cite[Section~6, p.~15]{1}, while the definitions
of $q$ and $\lambda$ are given in equation~(7) of
\cite[Section~7.1, p.~16]{1}. The journal version \cite{39} uses slightly different values of the parameters.}. Indeed, put
\begin{equation}
  \rho=2^{-m},
  \qquad
  q=\sqrt\rho=2^{-m/2}.
  \label{eq:rho-q}
\end{equation}
The intermediate fan-in, as chosen in \cite{1}, is
\begin{equation}
  w=w(m)=\left\lfloor\frac{m2^m}{\log_{2}(e)}\right\rfloor
   =\left\lfloor m2^m\ln2\right\rfloor.
  \label{eq:w-definition}
\end{equation}
RST define
\begin{equation}
  \lambda=\frac{(\log_{2} w)^{3/2}}{w^{5/4}},
  \label{eq:lambda}
\end{equation}
and recursively define
\begin{align}
  t_{d-1}&=\frac{\rho-\lambda}{q},
  \label{eq:last-t}\\
  t_{r-1}&=\frac{(1-t_r)^{qw}-\lambda}{q},
  \qquad r=d-1,d-2,\ldots,2.
  \label{eq:t-recursion}
\end{align}
Let 
\begin{equation}
   D_m=-q\ln(1-t_1),
   \qquad
   R_m=\frac{\ln2}{D_m}.
\end{equation}
The top fan-in is chosen in \cite{1} as 
\begin{equation}
  w_0=\min\left\{s\in\mathbb N:(1-t_1)^{qs}\leq\frac12\right\}
     =\left\lceil R_m\right\rceil
  \label{eq:w0-definition}
\end{equation}
The number of variables of the boolean function $Sipser_{d,m}$ is thus 
\begin{equation}
  N=N_{RST}(d,m)=w_0\cdot w^{d-2}\cdot m. 
  \label{eq:N-definition}
\end{equation}

For
fixed $d\geq 2$, the RST choice of fan-ins satisfies
\begin{equation}
   \log_2 N_{RST}(d,m)=\Theta_d(m).
   \label{eq:sipser-size-growth}
\end{equation}
In particular, there are constants $c_d,C_d>0$ such that, for all
sufficiently large $m$,
\begin{equation}
   2^{c_d\cdot m}\leq N_{RST}(d,m)\leq 2^{C_d\cdot m}.
   \label{eq:sipser-N-bounds}
\end{equation}


\subsection{Notions from Probability Theory.} 

We assume familiarity with introductory probability theory, including probability spaces, $\sigma$-fields, the classical Borel-Cantelli lemmas, at the level of, say, \cite{50}. To make the paper self-contained, we recall the following: 

\begin{lemma}[Second Borel-Cantelli lemma]
\label{lem:second-borel-cantelli}
Let $(D_n)_{n\geq 1}$ be a sequence of mutually independent events
in a probability space. If
\begin{equation} 
    \sum_{n=1}^{\infty} \Prb(D_n)=\infty,
\end{equation} 
then
\begin{equation} 
    \Prb(D_n \text{ infinitely often})=1.
\end{equation} 
\end{lemma}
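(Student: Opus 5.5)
The plan is the classical Chebyshev/second-moment route, applied to the complementary events. Let $D_n^c$ denote the complement of $D_n$. It suffices to show that for every fixed $m\geq 1$,
\[
\Prb\Bigl(\bigcap_{n\geq m} D_n^c\Bigr)=0 .
\]
Once this holds, the event ``$D_n$ occurs only finitely often'' equals $\bigcup_{m}\bigcap_{n\geq m}D_n^c$. It is a countable union of null sets, hence null, and so $\Prb(D_n\text{ i.o.})=1$.

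To get the displayed identity, fix $m$ and $M\geq m$. Mutual independence of $(D_n)$ passes to the complements $(D_n^c)$; this is a standard fact, checked by inclusion--exclusion on finite subfamilies. It gives
\[
\Prb\Bigl(\bigcap_{n=m}^{M} D_n^c\Bigr)=\prod_{n=m}^{M}\bigl(1-\Prb(D_n)\bigr)\leq \prod_{n=m}^{M} e^{-\Prb(D_n)} = \exp\Bigl(-\sum_{n=m}^{M}\Prb(D_n)\Bigr),
\]
where the inequality is the elementary bound $1-x\leq e^{-x}$ for $x\in[0,1]$. Since $\sum_n \Prb(D_n)=\infty$, the tail sums $\sum_{n=m}^{M}\Prb(D_n)$ also diverge as $M\to\infty$, because removing finitely many terms bounded by $1$ does not affect divergence. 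So the right-hand side tends to $0$.

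Finally, the events $\bigcap_{n=m}^{M}D_n^c$ decrease in $M$ to $\bigcap_{n\geq m}D_n^c$. Continuity of probability from above then yields $\Prb(\bigcap_{n\geq m}D_n^c)=0$, which completes the argument.

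The only step requiring any care is the independence of the complements. I would prove it by induction on the number of complemented events, using $\Prb(D^c\cap F)=\Prb(F)-\Prb(D\cap F)$ applied to intersections $F$ of finitely many of the remaining events. Everything else is routine.
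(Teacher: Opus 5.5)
Your proof is correct. The paper gives no proof of this lemma; it simply recalls it from standard sources such as Durrett~\cite{50}, whose proof is your argument: bound $\prod_{n=m}^{M}(1-\Pr(D_n))\le\exp(-\sum_{n=m}^{M}\Pr(D_n))$ for the complements, then use continuity from above. One small correction to your framing: this is not the ``Chebyshev/second-moment'' route (that name belongs to the variant for pairwise independent events, which applies Chebyshev to $\sum_{n\le N}\mathbf 1_{D_n}$), but the exponential-product argument, which uses full mutual independence.
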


We will also need a standard (e.g. \cite{31}, Chapter 5, pp. 95–102) conditional variant of Borel-Cantelli:

\begin{definition}
Let $(\Omega,\mathcal F,\mathbb P)$ be a probability space. A \emph{filtration} is an increasing sequence of sub-$\sigma$-fields
\begin{equation} 
   \mathcal F_0\subseteq \mathcal F_1\subseteq \mathcal F_2
   \subseteq\cdots\subseteq\mathcal F.
\end{equation} 


A sequence of events $(E_n)_{n\geq 1}$ is said to be \emph{adapted} to the filtration $(\mathcal F_n)_{n\geq 0}$ if $E_n\in\mathcal F_n$ for every $n\geq 1$. In this case, the conditional probability
$\Prb(E_n\mid\mathcal F_{n-1})$ 
can be interpreted as the probability that $E_n$ occurs given all the information revealed before stage $n$.
\end{definition}

\begin{lemma}[L\'evy's extension of the Borel-Cantelli lemmas]
\label{lem:CBC}
Let $(\mathcal H_n)_{n\geq 0}$ be a filtration, and let
$D_n\in\mathcal H_n$ for every $n\geq 1$. Then
\[
  \Prb\left(
    \{D_n\ \mathrm{i.o.}\}
    \mathbin{\triangle}
    \left\{
      \sum_{n=1}^{\infty}
      \Prb(D_n\mid\mathcal H_{n-1})=\infty
    \right\}
  \right)=0,
\]
where $\triangle$ denotes symmetric difference.
In particular, almost surely, if the conditional-probability sum is finite, only finitely
many $D_n$ occur; if it is infinite, infinitely many occur.
\end{lemma}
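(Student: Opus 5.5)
The plan is to reduce the statement to the standard dichotomy for martingales with bounded increments. Put $X_n=\mathbf 1_{D_n}$ and $p_n=\Prb(D_n\mid\mathcal H_{n-1})$, and let
\[
S_n=\sum_{k=1}^n X_k,\qquad P_n=\sum_{k=1}^n p_k,\qquad M_n=S_n-P_n .
\]
Since $D_n\in\mathcal H_n$ and $p_n$ is $\mathcal H_{n-1}$-measurable, $M_n$ is $\mathcal H_n$-adapted. Moreover $\mathbb E[X_n-p_n\mid\mathcal H_{n-1}]=0$, so $(M_n)$ is a martingale, and $|M_n-M_{n-1}|\le 1$. The event $\{D_n\ \mathrm{i.o.}\}$ is exactly $\{S_\infty=\infty\}$. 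The claim is therefore that, almost surely, $S_\infty=\infty$ if and only if $P_\infty=\infty$.

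\textbf{Step 1 (the martingale dichotomy).} We show that almost surely either $\lim_n M_n$ exists and is finite, or $\limsup_n M_n=+\infty$ and $\liminf_n M_n=-\infty$. Fix $K\in\mathbb N$ and let $\tau_K=\inf\{n: M_n<-K\}$. By optional stopping, $M_{n\wedge\tau_K}$ is a martingale. Because increments are at most $1$, it is bounded below by $-K-1$. Hence $M_{n\wedge\tau_K}+K+1$ is a nonnegative martingale, and it converges a.s. by Doob's convergence theorem. On $\{\inf_n M_n\ge -K\}$ we have $\tau_K=\infty$, so $M_n$ converges there. Taking the union over $K$ shows that $M_n$ converges a.s. on $\{\inf_n M_n>-\infty\}$, which equals $\{\liminf_n M_n>-\infty\}$ because $M_n$ is finite for each $n$. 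Applying the same argument to $-M_n$ shows that $M_n$ converges a.s. on $\{\limsup_n M_n<\infty\}$. Outside these two events, $M_n$ oscillates between $\pm\infty$.

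\textbf{Step 2 (reading off the conclusion).} On the convergence event, $S_n-P_n$ tends to a finite limit. Since both sums are nondecreasing, $S_\infty=\infty$ if and only if $P_\infty=\infty$ there. On the oscillation event, suppose $S_\infty<\infty$. Then $M_n\le S_\infty$ for all $n$, so $\limsup_n M_n<\infty$, a contradiction. Suppose instead $P_\infty<\infty$. Then $M_n\ge -P_\infty$, so $\liminf_n M_n>-\infty$, again a contradiction. Hence both sums are infinite on the oscillation event. In every case the two events agree off a null set, so their symmetric difference has probability zero. The two "in particular" clauses follow immediately.

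The only genuinely nontrivial ingredient is Step 1. Its main subtlety is justifying that the stopped process is bounded below by $-K-1$: at time $\tau_K$ the process has just dropped below $-K$, but by at most one unit. This is exactly where the bounded increments $|X_n-p_n|\le1$ are used. Everything else is bookkeeping with monotone sums. Alternatively, one could simply cite the textbook result \cite{31}, since the paper invokes it as standard.
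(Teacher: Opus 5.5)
Your argument is correct: it is the standard martingale proof of L\'evy's conditional Borel--Cantelli lemma. You establish the bounded-increment convergence-or-oscillation dichotomy for $M_n=S_n-P_n$ (the only place $|X_n-p_n|\le 1$ is needed) and then compare the two nondecreasing sums. The paper gives no proof of its own and only cites the result as standard from \cite{31}, so your write-up makes that citation self-contained. One small wording fix: what you use is that a stopped martingale is again a martingale, not the optional stopping theorem proper.
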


\subsection{Notions from Cryptography.}

We use some notions from cryptography as well \cite{25,36}, specifically one-way
permutations. There are several inequivalent such notions in the literature; see e.g. Goldreich's discussion
\cite{26}.  We therefore state explicitly the
specific type we need. 

We first recall the following 
\begin{definition} 
A function $\nu:\mathbb N\to\mathbb R_{\geq 0}$ is
\emph{negligible} (e.g. \cite{36}) if, for every polynomial $p$, there exists $m_0$ such
that
\[
\nu(m)<\frac{1}{p(m)}
\]
for every $m\geq m_0$.
\end{definition} 

\begin{definition}
\label{def:owp}
A \emph{uniform one-way permutation} is a deterministic polynomial-time
algorithm $F$ such that, for every $m\geq1$,
$f_m(x):=F(1^m,x)$
is a permutation $
f_m:\bits^m\longrightarrow\bits^m$ 
and for every probabilistic polynomial-time algorithm $I(\cdot,\cdot;\cdot)$
there is a negligible function $\nu_I$ such that 
\begin{equation}
\Prb_{x\leftarrow U_m, r}
\left[
f_m\bigl(I(1^m,f_m(x);r)\bigr)=f_m(x)
\right]
\leq \nu_I(m).
\label{2.1} 
\end{equation}
Here $U_m$ denotes the uniform distribution on $\bits^m$ and $r$ denotes the internal randomness of $I$. To allow for the possibility that $I$ fails (denoted $\perp$), define $f_m(\perp)=\perp$. 
\end{definition}

Since $f_m$ is a permutation, $f_m(x)$ is uniform when $x$ is
uniform.  Thus (\ref{2.1}) is equivalently the requirement
\begin{equation}
\Prb_{y\leftarrow U_m, r}
\left[
f_m\bigl(I(1^m,y;r)\bigr)=y
\right]
\leq \nu_I(m).
\end{equation}

\begin{remark} 

Merely assuming that no
deterministic polynomial-time algorithm computes the inverse
everywhere seems insufficient for the proof.  We need the inversion
probability on a uniformly random image to be negligible. 
\end{remark} 

Length-preserving one-to-one one-way functions have been constructed
from standard number-theoretic assumptions such as RSA- and
discrete-logarithm-type assumptions; see Goldreich, Levin, and Nisan
\cite{27}.  Because a length-preserving injection
from $\bits^m$ to itself is automatically surjective, such a function
is a permutation at each length.


\begin{remark}
The hypothesis in Definition~\ref{def:owp} is at least as strong as the
worst-case complexity-theoretic existence of a one-to-one one-way
function.  It is also at least as strong (and probably stronger) than the hypothesis
$\UP\nsubseteq\RP$ used by Ghosal et al (whose goal was to avoid one-way permutations).  The latter follows already
from suitable injective one-way functions
\cite{24}.
\end{remark}


\subsection{Hardness Results.} 

Our strong separation results exploit various hardness results from the literature (some of them tailored to our purpose). We review in the sequel the specific form of (some of) these hardness results.  

First, the following correlation bound on parity is due to H{\aa}stad (e.g. \cite{34}): 

\begin{lemma}[Average-case parity lower bound]
\label{lem:parity-correlation-input}
For every fixed depth $d\geq 2$, there is a constant $k_d>0$ such that
the following holds.  Let $C:\{0,1\}^N\to\{0,1\}$ be an unbounded-fan-in
AND/OR/NOT circuit of depth at most $d$ and size at most $S$.  Then
\begin{equation}
  \left|
    \Prb_X[C(X)=\operatorname{PARITY}_N(X)]-\frac12
  \right|
  \leq
  2^{-k_d N/(\log_{2}(2S))^{d-1}}.
\label{eq:parity-correlation-bound}
\end{equation}
Here $X$ is uniformly distributed on $\{0,1\}^N$.
\end{lemma}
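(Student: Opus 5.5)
The plan is the standard random-restriction argument, in the strong form that yields the exponent $N/(\log S)^{d-1}$ rather than the classical $N^{1/d}$. The key observation is elementary. Fix a restriction $\rho$ that leaves a set $F$ of variables free and fixes the others. Conditioned on $\rho$, the uniform $X$ is uniform on $\{0,1\}^F$, and $\operatorname{PARITY}_N|_\rho$ is $\operatorname{PARITY}_F$ or its negation. Suppose $C|_\rho$ is computed by a decision tree of depth $<|F|$ (or, for the multi-switching version, by a common partial decision tree of depth $<|F|$ whose leaves are functions not depending on some free variable). Then each leaf computes a function independent of at least one unqueried free variable, and there parity is perfectly balanced. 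Hence $\Prb[C|_\rho=\operatorname{PARITY}|_\rho]=\tfrac12$ exactly. Averaging over a random $\rho$ chosen independently of $X$ gives
\[
\Bigl|\Prb_X[C(X)=\operatorname{PARITY}_N(X)]-\tfrac12\Bigr|\le \Prb_\rho[\rho\text{ is bad}],
\]
where "bad" means the collapse fails or too few variables stay free.

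\textbf{Step 1.} Normalize $C$. Push negations to the inputs and, at the cost of at most doubling depth-$1$ fan-in issues, view the bottom layer as width-$w$ clauses/terms with $w=O(\log(2S))$. I would apply a preliminary restriction with $p=1/2$, which kills every bottom gate of fan-in larger than $\log_2(2S)$ except with probability $S\cdot 2^{-\Omega(\log S)}$ per gate. This must be absorbed into the exponent, so I would instead use the variant in which the first restriction fixes width directly.

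\textbf{Step 2.} Apply $d-1$ successive independent $p$-random restrictions with $p=\Theta(1/\log(2S))$. At each level, use H{\aa}stad's \emph{multi-switching lemma} (H{\aa}stad 2014; Impagliazzo--Matthews--Paturi 2012). For a collection of $S$ width-$w$ DNFs and a $p$-random restriction, the probability that they fail to have a common partial decision tree of depth $\ell$, whose leaves are all depth-$w$ decision trees, is at most $S^{\lceil \ell/w\rceil}(O(pw))^{\ell}$. Choosing $\ell\approx N p^{d-1}/4$ and taking a union over the $d-1$ levels, the failure probability is $2^{-\Omega(Np^{d-1})}=2^{-\Omega(N/(\log 2S)^{d-1})}$. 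This works because $(O(pw))^{\ell}$ beats $S^{\ell/w}$ once $pw$ is a small constant and $w\approx\log S$. Composing the partial decision trees along levels keeps total depth $O(\ell)$.

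\textbf{Step 3.} The number of free variables after the composite restriction is $\mathrm{Bin}(N,\Theta(p^{d-1}))$. By Chernoff it is at least half its mean except with probability $2^{-\Omega(Np^{d-1})}$. This exceeds the tree depth $O(\ell)$ after adjusting constants. Summing the failure probabilities gives the bound with some constant $k_d$. Small-$N$ or huge-$S$ cases are trivial, since the right side is then $\ge 1/2$ after shrinking $k_d$.

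The main obstacle is Step 2: the ordinary switching lemma, applied gate by gate with a union bound, only yields exponent $N^{1/(d-1)}$-type or $N/(\log S)^{d}$-type bounds. The sharp $(\log S)^{d-1}$ requires the multi-switching lemma and careful bookkeeping of the composed partial decision trees. I would invoke it as a black box from \cite{34}, with the rest of the argument as above, since the paper treats this lemma as a cited result.
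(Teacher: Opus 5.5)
Your proposal matches the paper. The paper gives no proof of this lemma and simply cites H{\aa}stad's multi-switching correlation bound \cite{34}. Your outline (a constant-$p$ round to handle the unbounded bottom fan-in, then $d-1$ rounds at $p=\Theta(1/\log(2S))$, then exact agreement $\tfrac12$ below a common decision tree shallower than the number of free variables) is the argument of that reference, which you also end up invoking as a black box.

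Read as a self-contained sketch, it glosses over two points that \cite{34} handles. First, Step~1 should apply the multi-switching lemma to the literals viewed as a width-$1$ layer, not a per-gate union bound. Second, in Step~2 the factor $S^{\lceil \ell/w\rceil}$ equals $S$ (not $S^{\ell/w}$) when $\ell<w\approx\log S$. So the regime $N\lesssim(\log S)^{d}$ must be treated separately by the classical per-gate union bound, which suffices there because the target bound is only $S^{-O(1)}$.
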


We use the following consequence of the main voting-polynomial lower bound
of~\cite{4}: 

\begin{lemma}
\label{lem:abfr}
Let $p$ be a real polynomial of degree at most $d$ in $N$ Boolean
variables, and let $g(x)=sign[p(x)]$.
If $\operatorname{PARITY}_N$ denotes parity on $\{0,1\}^N$, then
\begin{equation} 
 \Prb_{x\in\{0,1\}^N}
 \bigl[g(x)\neq \operatorname{PARITY}_N(x)\bigr]
 \geq
 2^{-N}\sum_{i=0}^{\lfloor (N-d-1)/2\rfloor}\binom Ni .
 \label{foo4} 
\end{equation} 
Consequently there is an absolute constant $C>0$ such that, whenever
$d<N$,
\begin{equation} 
 \Prb[g(x)\neq\operatorname{PARITY}_N(x)]
 \geq \frac12-C\frac{d+1}{\sqrt N}.
 \label{foo2} 
\end{equation} 
\end{lemma}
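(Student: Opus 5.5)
The plan is to obtain \eqref{foo4} by the orthogonality (``dimension-counting'') argument of Aspnes, Beigel, Furst and Rudich \cite{4}, and then derive \eqref{foo2} from a routine estimate on binomial coefficients.

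\textbf{Step 1: reduce to $p$ nonvanishing on the cube.} Since $sign(0)=1$, replace $p$ by $p+\eta$, where $0<\eta<\min\{|p(x)|: p(x)<0\}$. This leaves $g$ unchanged on $\{0,1\}^N$, so we may assume $p(x)\neq 0$ for every $x\in\{0,1\}^N$. We may also replace $p$ by its multilinear reduction, which does not increase the degree.

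\textbf{Step 2: find a polynomial vanishing on the error set.} Let $E=\{x: g(x)\neq\operatorname{PARITY}_N(x)\}$ and $k=\lfloor (N-d-1)/2\rfloor$. Suppose, for contradiction, that $|E|<\sum_{i\le k}\binom Ni$. The space of multilinear polynomials of degree at most $k$ has exactly this dimension. Hence the linear map ``evaluate on $E$'' has a nontrivial kernel, so there is a nonzero multilinear $r$ with $\deg r\le k$ and $r|_E\equiv 0$.

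\textbf{Step 3: orthogonality to the parity character.} Let $\chi(x)=(-1)^{\sum_i x_i}$. Whatever the convention identifying $\operatorname{PARITY}_N=1$ with a sign of $\chi$, fix $\sigma\in\{\pm1\}$ so that agreement at $x$ means $\sigma\chi(x)p(x)>0$; this uses Step 1. Consider
\[
\Sigma=\sum_{x\in\{0,1\}^N}\sigma\chi(x)\,p(x)\,r(x)^2 .
\]
After multilinear reduction, $p r^2$ has degree at most $d+2k\le N-1<N$. Every multilinear monomial of degree less than $N$ is orthogonal to $\chi$, so $\Sigma=0$.

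Now look at the terms individually. On $E$ every term vanishes, because $r=0$ there. Off $E$ every term equals $|p(x)|\,r(x)^2\ge 0$, and it is strictly positive unless $r(x)=0$. Hence $\Sigma=0$ forces $r\equiv 0$ on all of $\{0,1\}^N$. A multilinear polynomial that vanishes on the whole cube is the zero polynomial, which contradicts the choice of $r$. Therefore $|E|\ge\sum_{i\le k}\binom Ni$, and dividing by $2^N$ gives \eqref{foo4}.

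\textbf{Step 4: the estimate \eqref{foo2}.} By symmetry of the binomial coefficients, $2^{-N}\sum_{i\le k}\binom Ni\ge \tfrac12-2^{-N}\sum_{k<i<N-k}\binom Ni$. The middle range contains at most $N-2k-1\le d+2$ indices. Each central coefficient satisfies $\binom Ni\le \binom N{\lfloor N/2\rfloor}\le C'2^N/\sqrt N$ by Stirling's formula. This gives $\tfrac12-C\frac{d+1}{\sqrt N}$ with an absolute constant $C$.

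The only delicate point is the strictness used in Step 3. It fails if $p$ can vanish at points where $g=1$, which is exactly why the perturbation in Step 1 is needed. Everything else is linear algebra and standard binomial bounds.
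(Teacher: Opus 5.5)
Your proof is correct and follows the paper's route. The paper obtains \eqref{foo4} by citing the Aspnes--Beigel--Furst--Rudich bound (parity has weak degree $N$), and your Steps 2--3 write that ABFR argument out explicitly, with Step 1 handling the convention $sign(0)=1$ that the paper leaves implicit. Step 4 is the same central-binomial estimate the paper uses; the middle range actually has at most $d+1$ indices, so your count of $d+2$ is merely loose.
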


\begin{proof}
The first assertion is the parity specialization of the ABFR
voting-polynomial lower bound; parity has weak degree $N$.
For the second assertion, compare the sum in \eqref{foo4} with the lower half of the
binomial distribution.  Moving the cutoff from the center by $O(d+1)$
removes at most $O(d+1)$ central binomial coefficients.  Since
\[
 2^{-N}\binom{N}{\lfloor N/2\rfloor}=O(N^{-1/2}),
\]
(\ref{foo2}) follows.
\end{proof}

Finally, we use the following hardness result from \cite{1} (Theorem~7 from the full version of \cite{1}, reindexed by the bottom-width parameter $m$):

\begin{definition}\label{def:normal-form}
A Boolean circuit is in \emph{normal form} if it is a layered alternating AND/OR circuit
(in the sense of \cite{1}) whose leaves are literals or the constants $0,1$; in
particular, NOT gates occur only on input literals. The root gate determines the
alternation pattern. The size of a circuit is its number of gates.
\end{definition}

Every circuit can be converted into normal form without increasing its depth, blowing up
its size from $S$ to $S^{O_d(1)}+O_d(N)$, by pushing negations to the inputs, merging
consecutive gates of the same type and inserting unary dummy gates.
\begin{remark}\label{rem:constants}
Constant leaves can be eliminated: Repeatedly replace a gate having an absorbing constant
input ($0$ for AND, $1$ for OR) by that constant, delete neutral constant inputs
($1$ for AND, $0$ for OR), and treat a gate left without inputs as the corresponding
constant. This only deletes gates, preserves layering, does not increase depth, and
preserves the root type unless the root itself becomes constant.  In the latter
case, replace the circuit by a depth-$2$ constant gadget with the required root type:
$
\mathrm{OR}(\mathrm{AND}(x),\mathrm{AND}(\neg x))\equiv 1$, $
\mathrm{AND}(\mathrm{OR}(x,\neg x))\equiv 1$, $\mathrm{AND}(\mathrm{OR}(x),\mathrm{OR}(\neg x))\equiv 0$, $
\mathrm{OR}(\mathrm{AND}(x,\neg x))\equiv 0$. 
Hence every normal-form circuit $C$ of depth at most $d\ge 2$ is equivalent to a
constant-free normal-form circuit with the same root type, depth at most $d$, and size
at most $\max(\mathrm{size}(C),3)$. This is used in Corollary~\ref{cor:rst} below to extend Proposition~\ref{thm:RST} to normal-form circuits with constant leaves.
\end{remark}

\begin{proposition}[\cite{1}, Theorem 7]\label{thm:RST}
For every fixed $d\ge 2$ there is $b_d>0$ such that, for all sufficiently large $m$, every
constant-free normal-form circuit $C$ of depth exactly $d$, with root gate opposite to that
of $\mathrm{Sipser}_{d,m}$ and $\mathrm{size}(C)\le 2^{N^{1/(6(d-1))}}$, satisfies
\[
\Prb_{X\in\{0,1\}^N}\bigl[C(X)=\mathrm{Sipser}_{d,m}(X)\bigr]\le \tfrac12+N^{-b_d}.
\]
\end{proposition}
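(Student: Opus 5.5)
Since the proposition is Theorem~7 of the full version of \cite{1}, rewritten in our notation, the plan is not to reprove the random-projection lower bound. Instead I would carefully check that the hypotheses and conclusion of RST's theorem, under their parameterization, imply exactly the statement above. The argument is a translation of parameters plus a verification that the circuit classes match.

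\textbf{Step 1: match the hard function.} The function in RST's Theorem~7 should be the read-once formula with fan-ins \eqref{eq:rst-fanins}, where $w$, $q$, $\lambda$, $t_r$ and $w_0$ are defined by \eqref{eq:rho-q}--\eqref{eq:w0-definition}. I would confirm that RST index their construction by a width parameter from which the bottom fan-in $m$ is derived. The reindexing by $m$ is monotone, so ``for all sufficiently large $m$'' corresponds to ``for all sufficiently large $w$''. The variable count $N=N_{RST}(d,m)$ is \eqref{eq:N-definition}, and \eqref{eq:sipser-N-bounds} shows that $N\to\infty$ with $m$.

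\textbf{Step 2: match the circuit class.}
\begin{itemize}
\item RST state their bound for depth-$d$ circuits with root gate opposite to that of the Sipser function; I would check whether depth counts the literal layer. If RST's depth convention differs by one from ours, I would shift $d$ and check that the size exponent becomes $N^{1/(6(d-1))}$ after the shift.
\item Their circuits are layered alternating with literal leaves, which is exactly our constant-free normal form (Definition~\ref{def:normal-form}). A constant-free circuit of depth exactly $d$ with the prescribed root type is a special case of what they handle.
\item Their correlation bound $\tfrac12+n^{-\Omega(1)}$, or an explicit $\tfrac12+O(w^{-c})$, converts via \eqref{eq:sipser-size-growth} into $\tfrac12+N^{-b_d}$. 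Since $\log N=\Theta_d(m)$ and $w=\Theta(m2^m)$, we have $w^{-c}=N^{-\Theta_d(c)}$. Slightly shrinking the exponent absorbs the constant factors and gives some $b_d>0$.
\end{itemize}

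\textbf{Step 3: recall the underlying mechanism.} For completeness I would summarize why RST's theorem holds.
\begin{itemize}
\item A random projection $\boldsymbol\rho$ is defined with the parameters $t_r$ and $q$ from \eqref{eq:last-t}--\eqref{eq:t-recursion}.
\item A projection switching lemma shows that each layer of $C$ collapses to small-width DNFs or CNFs with failure probability $2^{-\Omega(\cdot)}$. Union-bounding over at most $2^{N^{1/(6(d-1))}}$ gates keeps the total failure probability small.
\item Meanwhile $\mathrm{Sipser}_{d,m}$ retains its structure, reducing to a top-level gate of fan-in $w_0$ over nearly independent, nearly balanced bits.
\item The choice \eqref{eq:w0-definition}, $(1-t_1)^{qw_0}\approx\tfrac12$, makes the surviving function nearly balanced, while the collapsed circuit is a shallow decision tree. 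A shallow tree cannot correlate with it beyond $N^{-b_d}$.
\item The completion property, that the random projection composed with independent $\{0,1\}$ completions yields the uniform distribution, transfers this back to uniform $X$.
\end{itemize}

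The main obstacle I expect is the bookkeeping in Step~2: making sure the depth convention, the size exponent and the direction of the root type line up exactly with RST's. Their full version and the journal version \cite{39} use slightly different parameters. If the exponents do not align, I would fall back on the weaker claim with exponent $1/(6d)$, noting that this only changes constants in the later uniformity arguments.
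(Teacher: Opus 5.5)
Your proposal matches the paper, which gives no independent proof of this statement and simply imports it as Theorem~7 of the full version of \cite{1}, reindexed by the bottom fan-in $m$; this is exactly the citation-and-translation route of your Steps~1--2. Your Step~3 is optional background, with one inaccuracy: after the projections the inputs to the top gate are heavily biased, since each activates the root with probability about $2^{-m}$ (cf.\ Claim~\ref{lem:root-child-rarity}), and it is only the output of the fan-in-$w_0$ gate, not its inputs, that is nearly balanced.
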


\begin{corollary}\label{cor:rst}
Let $a_d:=1/(7(d-1))$. For all sufficiently large $m$, the conclusion of
Proposition~\ref{thm:RST} holds for every normal-form circuit $C$ (constant leaves
allowed) of depth at most $d$, with root gate opposite to that of $\mathrm{Sipser}_{d,m}$
and $\mathrm{size}(C)\le 2^{N^{a_d}}$.
\end{corollary}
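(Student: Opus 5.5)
Given a normal-form circuit $C$ of depth at most $d$, with root opposite to that of $\mathrm{Sipser}_{d,m}$, possibly with constant leaves, and with $\mathrm{size}(C)\le 2^{N^{a_d}}$, I will produce a constant-free normal-form circuit $C'$ of depth \emph{exactly} $d$ with the same root type that computes the same function. I will then apply Proposition~\ref{thm:RST} to $C'$ and transfer the bound back to $C$. The plan has three steps: remove constants, pad the depth, and check the size budget.

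\textbf{Step 1 (remove constants).} Apply Remark~\ref{rem:constants} to $C$. This gives an equivalent constant-free normal-form circuit $C_1$ with the same root type, depth at most $d$, and $\mathrm{size}(C_1)\le\max(\mathrm{size}(C),3)$. The depth bound $d\ge 2$ needed for the constant gadget holds by assumption.

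\textbf{Step 2 (pad to depth exactly $d$).} If $C_1$ has depth $d'<d$, I will add layers of unary dummy gates, as in the normal-form conversion, so that alternation and layering are preserved.

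The root type must stay unchanged, so new layers are inserted at the bottom. Each leaf literal $\ell$ is replaced by a chain of $d-d'$ alternating unary gates of the appropriate types: the first type is opposite to that of the old bottom layer, and the types alternate from there. Every leaf then sits at depth exactly $d$, and all gates at a given layer still share one type.

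If the paper's notion of ``layered'' requires that all leaves already lie at the same depth, I will first equalize leaf depths in the same way. This means extending every short root-to-leaf path by unary chains of the right alternating types. A unary AND or OR of one input is just that input, so the computed function does not change.

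Each original gate or leaf contributes at most $d$ new gates. Hence
\[
\mathrm{size}(C')\le (d+1)\bigl(\mathrm{size}(C_1)+\#\text{leaves}\bigr).
\]
The number of leaves is at most the number of wires, which is at most $\mathrm{size}(C_1)^2+\mathrm{size}(C_1)\,2N$. So $\mathrm{size}(C')\le O_d\bigl(\mathrm{size}(C)^2 N\bigr)$.

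\textbf{Step 3 (size budget, the only real obstacle).} I need
\[
O_d\bigl(2^{2N^{a_d}}\,N\bigr)\le 2^{N^{1/(6(d-1))}}.
\]
This holds for all large $N$, because $a_d=1/(7(d-1))<1/(6(d-1))$: the exponent $2N^{a_d}+\log_2 N+O_d(1)$ is $o\bigl(N^{1/(6(d-1))}\bigr)$.

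Since $N=N_{RST}(d,m)\ge 2^{c_d m}$ by~\eqref{eq:sipser-N-bounds}, "large $N$" is ensured by taking $m$ sufficiently large. The gap between the exponents $1/(6(d-1))$ and $1/(7(d-1))$ is exactly the slack that absorbs the polynomial overhead from Steps 1 and 2.

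With this, Proposition~\ref{thm:RST} gives
\[
\Prb\bigl[C'(X)=\mathrm{Sipser}_{d,m}(X)\bigr]\le \tfrac12+N^{-b_d}.
\]
Since $C'\equiv C$, the same bound holds for $C$, which is the claim.
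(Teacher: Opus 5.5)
Your proof is correct. It has the same structure as the paper's: eliminate constants via Remark~\ref{rem:constants}, pad to depth exactly $d$ without changing the root type, check the size budget, and invoke Proposition~\ref{thm:RST}. The one real difference is where you pad.

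\textbf{The paper's padding.} The paper pads at the top. It repeatedly places a new root of the same type over a unary gate of the opposite type above the old root; each such step adds $2$ to the depth and preserves the root type. An odd remaining difference is fixed by inserting a single unary gate between a deepest gate and one of its literal inputs. So at most $d$ gates are added, and the size check reduces to $2^{N^{a_d}}+d+3\le 2^{N^{1/(6(d-1))}}$, which barely uses the gap between $a_d$ and $1/(6(d-1))$.

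\textbf{Your padding.} You attach a unary chain to every leaf wire. This costs a polynomial blow-up $O_d(\mathrm{size}(C)^2N)$ and genuinely consumes that exponent gap. In exchange, every leaf ends up at depth exactly $d$. That is the more robust choice if ``layered'' in the sense of RST requires uniform leaf depth, since the paper's single inserted gate lengthens only one path.

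One point to tidy: a literal feeding a gate above the bottom layer needs a chain that starts with the type opposite to its own parent gate and continues the alternation down to layer $d$. Your first description, starting opposite to the old bottom layer, covers only bottom-layer leaves. Your later ``equalize leaf depths'' sentence handles the general case, so this is a matter of stating it cleanly.
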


\begin{proof}
By Remark~\ref{rem:constants}, $C$ is equivalent to a constant-free normal-form circuit $C^{\prime}$ with the same
root type, depth at most $d$, and $\mathrm{size}(C^{\prime})\le\max(\mathrm{size}(C),3)$. Pad $C^{\prime}$
to depth exactly $d$ without changing its root type: add two unary gates above the root
(a new root of the same type over a unary gate of the opposite type) as often as needed,
and, if the parity of the remaining depth difference is odd, insert one unary gate of the
opposite type between a deepest gate and one of its literal inputs. This adds at most $d$
gates. Since $2^{N^{a_d}}+d+3\le 2^{N^{1/(6(d-1))}}$ for all sufficiently large $m$,
Proposition~\ref{thm:RST} applies to the padded circuit.
\end{proof}




The proof of Theorem~\ref{thm:main} below will employ this bound to derive a specially-tailored hardness lemma. We point to Lemma~\ref{lem:effective-sipser} below for details.

\section{Proofs of Our Other Results}
\label{sec:proofs} 

In this section we prove the results stated in  Section~\ref{sec:results}, except for the main theorem, which gets a separate section.  We do not
follow the order of their presentation.  Instead, we begin with the case in which the method from Section~\ref{sec:ideas} appears in its simplest
form, and then treat the cases requiring additional ingredients.  

\subsection{Proof of Theorem~\ref{thm:ph-pspace-immune}.}

We begin with Theorem~\ref{thm:ph-pspace-immune}.  This proof gives the cleanest
illustration of the rare-or-wrong argument and of the delayed exposure of the
random oracle.

For each $j\geq 1$, let $V_j=\{0,1\}^{t_j}.$ 
The sets $V_j$ are pairwise disjoint blocks of oracle coordinates.  Define
\begin{equation}
  Y_j^A=\bigoplus_{y\in V_j}A(y),
\label{eq:block-parity}
\end{equation}
and let
\begin{equation}
  L^A=\left\{1^{t_j}:Y_j^A=1\right\}.
\label{eq:parity-language}
\end{equation}
Thus $L^A$ is supported on the sparse unary set $T$ introduced in
Section~\ref{sec:ideas}.

\begin{lemma}
\label{lem:parity-language-membership}
For every oracle $A$,
\[
  L^A\in\ParityP^A\subseteq\mathsf{PSPACE}^A.
\]
\end{lemma}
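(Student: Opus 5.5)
The plan is to exhibit an explicit nondeterministic polynomial-time oracle machine $M$ whose number of accepting paths on input $x$, relative to $A$, is odd exactly when $x\in L^A$. The inclusion $\ParityP^A\subseteq\mathsf{PSPACE}^A$ is the standard one and relativizes. A $\mathsf{PSPACE}$ oracle machine enumerates all computation paths of $M$ in lexicographic order, reusing space, and keeps a single parity bit of the number of accepting paths. Oracle queries are answered directly, so this costs only polynomial space. The substantive part is therefore the $\ParityP^A$ membership.

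The machine $M$ works as follows on input $x$. First it checks deterministically that $x=1^n$ with $1^n\in T$, using Lemma~\ref{lem:support-decidable}; if this fails, it rejects on its unique path. Otherwise $n=t_j$ for the unique $j$ with $1^n\in T$, and $V_j=\{0,1\}^n$. Next $M$ nondeterministically guesses a string $y\in\{0,1\}^n$, which takes $n$ binary choices, so time linear in $|x|$. On the branch for $y$, it queries $A(y)$ and accepts iff $A(y)=1$. The number of accepting paths is then $|\{y\in V_j: A(y)=1\}|$, whose parity is $Y_j^A$ by \eqref{eq:block-parity}. Hence the paths accept an odd number of times iff $1^{t_j}\in L^A$. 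For inputs outside $T$ there are zero accepting paths, which is even, matching $x\notin L^A$.

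The only point needing care is the accounting of paths. Inputs rejected by the preliminary test must contribute an even number of accepting paths, namely zero. The computation tree must also be exactly binary over the $n$ guessed bits, so that paths correspond bijectively to strings $y\in V_j$. Both hold by construction, and the running time is clearly polynomial in $|x|=n$. I expect no real obstacle; the lemma holds uniformly for every oracle $A$, not just almost surely.
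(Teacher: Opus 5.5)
Your proposal is correct and follows the paper's proof essentially verbatim: reject off the support $T$, guess $y\in\{0,1\}^{n}$ using exactly $n$ binary choices, and accept iff $A(y)=1$, so the parity of the number of accepting paths is $Y_j^A$. Your extra sketch of the relativized inclusion $\ParityP^A\subseteq\mathsf{PSPACE}^A$, which enumerates the paths while keeping a parity bit, supplies a standard step that the paper takes for granted.
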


\begin{proof}
On input $x$, first verify deterministically that $x=1^n$ and that $1^n\in T$.
If either test fails, reject.  Otherwise, use exactly $n$ binary
nondeterministic choices to guess $y\in\{0,1\}^n$, query $A(y)$, and accept
that computation path exactly when $A(y)=1$.  There is one computation path
for each $y\in\{0,1\}^n$.  Hence the parity of the number of accepting paths is
\[
  \bigoplus_{y\in\{0,1\}^n}A(y),
\]
which is the membership bit in~\eqref{eq:parity-language}.  Therefore
$L^A\in\ParityP^A$.

\end{proof}

\begin{lemma}
\label{lem:parity-language-infinite}
With probability one over a random oracle $A$, the language $L^A$ is infinite.
\end{lemma}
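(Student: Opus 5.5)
The plan is to show that infinitely many of the target bits $Y_j^A$ equal $1$, using the second Borel--Cantelli lemma (Lemma~\ref{lem:second-borel-cantelli}). For each $j\geq 1$ let $D_j$ be the event $\{Y_j^A=1\}$. By \eqref{eq:parity-language}, $1^{t_j}\in L^A$ exactly when $D_j$ occurs. Since the $t_j=2^{j!}$ are strictly increasing, the strings $1^{t_j}$ are pairwise distinct. Hence $L^A$ is infinite precisely on the event $\{D_j\ \ii\}$.

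The first step is to compute $\Prb(D_j)$. The block $V_j=\{0,1\}^{t_j}$ is nonempty, so $Y_j^A$ is the XOR of $2^{t_j}\geq 1$ independent unbiased bits. Fix any $y_0\in V_j$ and condition on the values $A(y)$ for $y\in V_j\setminus\{y_0\}$. Given these, $Y_j^A$ is determined by the single fresh unbiased bit $A(y_0)$, so it is uniform. Therefore $\Prb(D_j)=\tfrac12$ for every $j$, and $\sum_j\Prb(D_j)=\infty$.

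The second step, which is the only point requiring care, is mutual independence of the events $D_j$. The blocks $V_j$ are subsets of $\{0,1\}^{t_j}$ for distinct lengths $t_j$, so they are pairwise disjoint. Each $D_j$ is measurable with respect to the $\sigma$-field generated by the coordinates in $V_j$. Under the product measure on $\Omega$, $\sigma$-fields generated by disjoint sets of coordinates are mutually independent. Hence the events $(D_j)_{j\geq 1}$ are mutually independent.

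Lemma~\ref{lem:second-borel-cantelli} then gives $\Prb(D_j\ \ii)=1$, so $L^A$ is infinite with probability one. The spacing property of Lemma~\ref{lem:wide-spacing} is not needed here; it only becomes relevant for the immunity part, where machine queries interact with the blocks.
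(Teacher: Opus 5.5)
Your proof is correct and follows the paper's argument essentially verbatim: each block parity $Y_j^A$ is unbiased, the events are independent because the blocks $V_j$ are disjoint, and the second Borel--Cantelli lemma finishes. Your remarks that the strings $1^{t_j}$ are pairwise distinct and that the spacing lemma is not needed here are accurate small additions to the paper's terser version.
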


\begin{proof}
For every $j$, the random variable $Y_j^A$ is unbiased, since it is the parity
of a nonempty family of independent unbiased oracle bits.  Moreover,
$Y_1^A,Y_2^A,\ldots$ are mutually independent because the blocks
$V_1,V_2,\ldots$ are disjoint.  Consequently,
\[
  \Prb_A(Y_j^A=1)=\frac12
\]
for every $j$.  The second Borel--Cantelli lemma therefore gives
\[
  \Prb_A(Y_j^A=1\text{ for infinitely many }j)=1.
\]
By~\eqref{eq:parity-language}, this is precisely the assertion that $L^A$ is
infinite almost surely.
\end{proof}

\begin{lemma}[Summability of the parity-correlation errors]
\label{lem:eta-summable}
Fix constants $c\geq 1$, $d\geq 2$, and $\kappa>0$.  Then
\[
  \sum_{j\geq 1}
  2^{-\kappa\cdot 2^{t_j}/(1+t_j^c)^{d-1}}
  <\infty.
\]
\end{lemma}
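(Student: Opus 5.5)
The plan is to bound each summand by a term of a convergent sequence, using only that $2^{t_j}$ grows much faster than any polynomial in $t_j$, together with the super-exponential growth of $t_j=2^{j!}$ from \eqref{eq:tower}. Write
\[
  a_j \;=\; \frac{\kappa\, 2^{t_j}}{(1+t_j^c)^{d-1}},
\]
so that the $j$-th summand is $2^{-a_j}$. The claim is that eventually $a_j\ge j$. Once this holds, the tail of the series is dominated by $\sum_j 2^{-j}<\infty$. The finitely many earlier terms are each at most $1$, so they do not affect convergence.

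To obtain $a_j\ge j$, first note $1+t_j^c\le 2t_j^c$ because $t_j\ge 1$. Hence
\[
  (1+t_j^c)^{d-1}\le 2^{d-1}t_j^{c(d-1)},
  \qquad
  a_j\ \ge\ \kappa\,2^{-(d-1)}\,\frac{2^{t_j}}{t_j^{c(d-1)}}.
\]
Since $c$, $d$ and $\kappa$ are fixed, the function $n\mapsto 2^n/n^{c(d-1)}$ eventually exceeds $n\cdot 2^{d-1}/\kappa$. This is the standard fact that an exponential beats any polynomial; one can check it via $2^{n/2}\ge n^{c(d-1)+1}$ for large $n$. So for all sufficiently large $j$ we get $a_j\ge t_j$. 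Finally, $t_j=2^{j!}\ge j$ for every $j\ge 1$, which gives $a_j\ge j$ eventually.

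No step is a genuine obstacle. The only care needed is to fix the threshold after $c$, $d$ and $\kappa$ are chosen, so that the "sufficiently large" cutoff depends only on these constants. Even the spacing $t_j=2^{j!}$ is more than the argument needs: any lengths $t_j$ growing at least linearly in $j$ would suffice. For the same reason, the polynomial $1+t_j^c$ may be read as a stand-in for $\log_2(2S)$ with $S$ a polynomial circuit-size bound, as in Lemma~\ref{lem:parity-correlation-input}, and the bound survives any polynomial loss of this kind.
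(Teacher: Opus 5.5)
Your proof is correct and follows the paper's argument essentially step for step: bound $(1+t_j^c)^{d-1}\le 2^{d-1}t_j^{c(d-1)}$, use that $2^n/n^{c(d-1)}$ eventually dominates $n\cdot 2^{d-1}/\kappa$ so that each late summand is at most $2^{-t_j}\le 2^{-j}$, and compare with a geometric series. One small quibble in your closing remark: in the application $S\le 2^{t_j^c}$, so it is $\log_2(2S)$, not $S$ itself, that is polynomial in $t_j$; this does not affect the proof.
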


\begin{proof}
Set $K=c(d-1)$.  Since $t_j\geq 1$,
$
  (1+t_j^c)^{d-1}\leq 2^{d-1}t_j^K.$
Thus, with $\kappa'=\kappa/2^{d-1}>0$,
$2^{-\kappa \cdot 2^{t_j}/(1+t_j^c)^{d-1}}
  \leq
  2^{-\kappa'\cdot 2^{t_j}/t_j^K}.$
The quantity $2^m/m^K$ eventually dominates $m/\kappa'$.  Hence, for all
sufficiently large $j$,
\begin{equation} 
  \kappa'\cdot \frac{2^{t_j}}{t_j^K}\geq t_j,
\end{equation} 
and the $j$-th summand is at most $2^{-t_j}\leq 2^{-j}$.  The tail is therefore
dominated by a geometric series.
\end{proof}

Let $(M_e)_{e\geq 1}$ enumerate all syntactically clocked alternating oracle
machines having a constant number of alternations.  For every oracle $A$, the
languages $L(M_e^A)$ range over $\mathsf{PH}^A$.  It remains to prove the
following machine-wise assertion.

\begin{lemma}
\label{lem:fixed-ph-machine}
For every $e\geq 1$,
\begin{equation}
  \Prb_A\left[
    L(M_e^A)\nsubseteq L^A
    \text{ or }
    L(M_e^A)\text{ is finite}
  \right]=1.
\label{eq:fixed-ph-machine}
\end{equation}
\end{lemma}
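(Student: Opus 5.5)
Fix $e$, and let $M=M_e$ have clock polynomial $q$ and at most $d-1$ alternations. Only inputs $1^{t_j}$ matter: if $M^A$ accepts some $x\notin T$, then $L(M^A)\nsubseteq L^A$ already. So it suffices to show that, almost surely, the events
\[
G_j=\{M^A\text{ accepts }1^{t_j}\},\qquad W_j=G_j\cap\{Y_j^A=0\}
\]
satisfy: either $G_j$ occurs for only finitely many $j$, or $W_j$ occurs for some (indeed infinitely many) $j$. Each occurrence of $W_j$ is a false positive.

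\textbf{Filtration.} By Lemma~\ref{lem:wide-spacing}, choose $j_0$ with $q(t_j)<t_{j+1}$ for $j\geq j_0$. For $j\geq j_0$, let $\mathcal F_j$ be the $\sigma$-field generated by all oracle bits on strings of length at most $q(t_j)$. Both $G_j$ and $W_j$ lie in $\mathcal F_j$, since the computation on $1^{t_j}$ queries only strings of length at most $q(t_j)$, and $V_j$ consists of strings of length $t_j\leq q(t_j)$. The block $V_j$ consists of strings of length $t_j>q(t_{j-1})$, so it is independent of $\mathcal F_{j-1}$. Let $\mathcal F'_j\supseteq\mathcal F_{j-1}$ be generated by all bits of length at most $q(t_j)$ except those in $V_j$. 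Conditional on $\mathcal F'_j$, the acceptance of $1^{t_j}$ is a function $C$ of the $N=2^{t_j}$ fresh uniform bits of $V_j$.

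\textbf{Rare-or-wrong step.} By the standard Furst--Saxe--Sipser translation, this $C$ is computed by an AND/OR/NOT circuit of depth at most $d$ and size at most $2^{(1+t_j)^{c}}$ for a constant $c$ depending on $q$ and $d$. The $1+$ accounts for the query bits; hardwiring the $\mathcal F'_j$-bits as constants is harmless. Lemma~\ref{lem:parity-correlation-input} then gives
\[
\Prb(C=\operatorname{PARITY}_N\mid\mathcal F'_j)\leq \tfrac12+\varepsilon_j,\qquad \varepsilon_j=2^{-k_d2^{t_j}/(1+t_j^c)^{d-1}},
\]
and $Y_j^A$ is exactly unbiased given $\mathcal F'_j$, so $\delta=0$. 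Lemma~\ref{lem:rare-or-wrong} yields $\Prb(W_j\mid\mathcal F'_j)\geq\frac12\Prb(G_j\mid\mathcal F'_j)-\varepsilon_j/2$. Taking conditional expectation onto $\mathcal F_{j-1}$ gives
\[
\Prb(W_j\mid\mathcal F_{j-1})\geq\tfrac12\Prb(G_j\mid\mathcal F_{j-1})-\tfrac12\varepsilon_j .
\]

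\textbf{Conclusion.} Apply Lemma~\ref{lem:CBC} twice with the filtration $(\mathcal F_j)_{j\geq j_0}$, reindexed. On the event $\sum_j\Prb(G_j\mid\mathcal F_{j-1})<\infty$, almost surely only finitely many $G_j$ occur, so $L(M^A)$ is finite up to inputs outside $T$, which are handled above. On the complementary event, Lemma~\ref{lem:eta-summable} gives $\sum_j\varepsilon_j<\infty$, so $\sum_j\Prb(W_j\mid\mathcal F_{j-1})=\infty$. Then almost surely $W_j$ occurs infinitely often, and $L(M^A)\nsubseteq L^A$. The exceptional null sets are countable in number, which gives \eqref{eq:fixed-ph-machine}.

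\textbf{Main obstacle.} I expect the main obstacle to be the circuit translation step. One must check that the size bound is $2^{\poly(t_j)}$, uniformly in the conditioning, so that the exponent matches the form in Lemma~\ref{lem:eta-summable}. One must also check that the depth $d$ is fixed by the alternation count and does not depend on $j$. The measurability bookkeeping, namely that $\mathcal F'_j$ leaves $V_j$ uniform, is routine given the spacing.
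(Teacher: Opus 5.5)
Your proof is correct and follows the paper's argument: you condition on every relevant oracle bit outside $V_j$, apply H{\aa}stad's parity correlation bound and Lemma~\ref{lem:rare-or-wrong} atom by atom, and finish with L\'evy's conditional Borel--Cantelli lemma and the same rare/wrong case split. The only difference is bookkeeping. The paper interleaves $\mathcal G_j$ between $\mathcal F_{j-1}$ and $\mathcal F_j$ and conditions on $\mathcal G_j$ directly, whereas you use the tower property to push the inequality from $\mathcal F'_j$ down to $\mathcal F_{j-1}$ (valid because $\varepsilon_j$ is deterministic and $\mathcal F_{j-1}\subseteq\mathcal F'_j$), which lets you work with the plain filtration $(\mathcal F_j)$.
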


\begin{proof}
Fix $e$ and write $M=M_e$.  Since $M$ belongs to a fixed level of the
polynomial hierarchy, there are constants $d\geq 2$ and $c\geq 1$ such that,
on every input $x$ of length $n$, the acceptance predicate of $M^A(x)$ can be
represented by an unbounded-fan-in Boolean circuit $C_x$ in the relevant
oracle variables satisfying
\begin{equation}
  \operatorname{depth}(C_x)\leq d,
  \qquad
  \operatorname{size}(C_x)\leq 2^{n^c}.
\label{eq:ph-circuit-parameters}
\end{equation}
There is also a polynomial $q$ such that every oracle variable occurring in
$C_x$ corresponds to a query string of length at most $q(n)$.

One way to obtain this representation is to start from a quantified
characterization of the language decided by $M$.  Expand each polynomially
bounded existential quantifier into an unbounded-fan-in OR gate and each
universal quantifier into an unbounded-fan-in AND gate.  The remaining
deterministic polynomial-time oracle computation can be expanded into its
oracle decision tree and expressed as a DNF or CNF.  Because $M$ is fixed,
this adds only a constant number of layers and produces a circuit of size
$2^{n^{O(1)}}$; increasing $c$ if necessary gives~\eqref{eq:ph-circuit-parameters}.

By the spacing property of $(t_j)$ established in Section~\ref{sec:ideas},
there is $j_0$ such that
\begin{equation}
  q(t_j)<t_{j+1}
\label{eq:tower-gap-pspace}
\end{equation}
for every $j\geq j_0$.  The finitely many indices below $j_0$ do not affect
summability or any infinitely-often event, so all subsequent definitions and
arguments are restricted to $j\geq j_0$.

Define
\begin{align}
  \mathcal F_j
  &=\sigma\bigl(A(z):|z|<t_{j+1}\bigr),
\label{eq:def-Fj}\\
  \mathcal G_j
  &=\sigma\bigl(A(z):|z|<t_{j+1},\ z\notin V_j\bigr).
\label{eq:def-Gj}
\end{align}
Then
\begin{equation}
  \mathcal F_{j-1}\subseteq\mathcal G_j\subseteq\mathcal F_j.
\label{eq:interleaving-sigma-fields}
\end{equation}
Let
\begin{equation}
  E_j=\{M^A(1^{t_j})\text{ accepts}\}.
\label{eq:def-Ej}
\end{equation}
By~\eqref{eq:tower-gap-pspace}, every query made by $M$ on $1^{t_j}$ has
length less than $t_{j+1}$, and hence $E_j\in\mathcal F_j$.

Condition on an assignment $\omega$ to the oracle coordinates exposed by
$\mathcal G_j$.  The only still-unexposed coordinates of length less than
$t_{j+1}$ are those in $V_j$.  Therefore $A\upharpoonright V_j$ remains
uniform on $\{0,1\}^{N_j}$, where
\begin{equation}
  N_j=|V_j|=2^{t_j}.
\end{equation} 
Under this conditioning, the acceptance predicate of $M^A(1^{t_j})$ becomes a
fixed circuit $
  C_{j,\omega}:\{0,1\}^{V_j}\longrightarrow\{0,1\}$
of depth at most $d$ and size at most $2^{t_j^c}$.  Restricting the other
oracle variables cannot increase either parameter.  The target bit $Y_j^A$
remains the parity of all $N_j$ fresh variables.

Lemma~\ref{lem:parity-correlation-input}, applied on every atom of
$\mathcal G_j$, gives
\begin{equation}
  a_j:=\Pr\bigl(\mathbf 1_{E_j}=Y_j^A\mid\mathcal G_j\bigr)
  \leq \frac12+\eta_j,
\label{eq:conditional-parity-agreement}
\end{equation}
where, after decreasing the positive constant $k_d$ in the exponent, if necessary, to a lower value $\kappa_d$ we may take
\begin{equation}
  \eta_j
  =2^{-\kappa_d 2^{t_j}/(1+t_j^c)^{d-1}}
\label{eq:eta-parity}
\end{equation}
for a constant $\kappa_d>0$ depending only on $d$.

Lemma~\ref{lem:eta-summable} shows that
\begin{equation}
  \sum_{j\geq j_0}\eta_j<\infty.
\label{eq:eta-summable}
\end{equation}

Define the $\mathcal G_j$-measurable conditional probabilities
\begin{align}
  p_j&=\Pr(E_j\mid\mathcal G_j),
\label{eq:def-pj}\\
  b_j&=\Pr\bigl(E_j\cap\{Y_j^A=0\}\mid\mathcal G_j\bigr).
\label{eq:def-bj}
\end{align}
Since $\mathcal G_j$ is generated by finitely many oracle coordinates, we may
work atom by atom.  For each assignment $\omega$ exposed by $\mathcal G_j$,
consider the conditional probability space obtained by fixing that assignment.  In
Lemma~\ref{lem:rare-or-wrong}, take
$H=Y_j^A$, $C=\mathbf 1_{E_j}$, $\delta=0$, $\varepsilon=\eta_j$. 
The parity bit is exactly balanced in this conditional space, so
\begin{equation} 
  \left|\Prb(H=1)-\frac12\right|=0\leq\eta_j,
\end{equation} 
and~\eqref{eq:conditional-parity-agreement} gives
$\Pr(C=H)\leq\frac12+\eta_j$.  Lemma~\ref{lem:rare-or-wrong} therefore yields,
on every atom of $\mathcal G_j$,
\begin{equation}
  b_j\geq\frac12p_j-\frac12\eta_j.
\label{eq:direct-rare-or-wrong}
\end{equation}

Let
\begin{equation}
  B_j=E_j\cap\{Y_j^A=0\}.
\label{eq:def-Bj}
\end{equation}
The event $B_j$ is a false positive: $M$ accepts $1^{t_j}$ although
$1^{t_j}\notin L^A$.  Since both $E_j$ and $Y_j^A$ are
$\mathcal F_j$-measurable, $B_j\in\mathcal F_j$, and
\begin{equation} 
  \Prb(B_j\mid\mathcal G_j)=b_j.
\end{equation} 

We now apply Lemma~\ref{lem:CBC} to an explicit interleaving of the two
sigma-fields.  Define a filtration $(\mathcal H_n)$ by
\begin{equation}
  \mathcal H_{2j-1}=\mathcal G_j,
  \qquad
  \mathcal H_{2j}=\mathcal F_j
  \qquad (j\geq j_0),
\label{eq:interleaved-filtration}
\end{equation}
and fill the finitely many earlier terms with $\mathcal F_{j_0-1}$.  Set all
corresponding earlier events to $\varnothing$.  The
inclusions in~\eqref{eq:interleaving-sigma-fields} show that this is indeed a
filtration.  For the acceptance events, set
\begin{equation} 
  D^{E}_{2j-1}=\varnothing,
  \qquad
  D^{E}_{2j}=E_j.
\end{equation} 
Then
\begin{equation} 
  \Pr(D^{E}_{2j}\mid\mathcal H_{2j-1})=p_j,
\end{equation} 
so Lemma~\ref{lem:CBC} gives
\begin{equation}
  E_j\text{ occurs infinitely often}
  \quad\Longleftrightarrow\quad
  \sum_{j\geq j_0}p_j=\infty
\label{eq:cbc-E}
\end{equation}
almost surely.  Applying the same construction with
$D^{B}_{2j}=B_j$ and $D^{B}_{2j-1}=\varnothing$ gives
\begin{equation}
  B_j\text{ occurs infinitely often}
  \quad\Longleftrightarrow\quad
  \sum_{j\geq j_0}b_j=\infty
\label{eq:cbc-B}
\end{equation}
almost surely.

Work now on the probability-one event on which both equivalences hold.  
\begin{itemize} 

\item If
$\sum_jp_j<\infty$, then~\eqref{eq:cbc-E} implies that $M$ accepts only
finitely many strings of the form $1^{t_j}$.  Since $L^A$ contains no strings
outside this support, the implication
\[
  L(M^A)\subseteq L^A
  \quad\Longrightarrow\quad
  L(M^A)\text{ is finite}
\]
follows.

\item If instead $\sum_jp_j=\infty$, then~\eqref{eq:direct-rare-or-wrong} and
Lemma~\ref{lem:eta-summable} imply
\begin{equation} 
  \sum_jb_j
  \geq
  \frac12\sum_jp_j-\frac12\sum_j\eta_j
  =\infty.
\end{equation} 
By~\eqref{eq:cbc-B}, the false-positive events $B_j$ occur infinitely often.
In particular, $M$ accepts a string outside $L^A$, and hence
\[
  L(M^A)\nsubseteq L^A.
\]
\end{itemize} 
The two cases prove~\eqref{eq:fixed-ph-machine}.
\end{proof}

There are countably many machines in the enumeration.  Intersecting the
probability-one events from Lemma~\ref{lem:fixed-ph-machine} over all $e$, and
then intersecting with the probability-one event from
Lemma~\ref{lem:parity-language-infinite}, shows that, almost surely, $L^A$ is
infinite and has no infinite $\mathsf{PH}^A$ subset.  Thus $L^A$ is
$\mathsf{PH}^A$-immune almost surely.  Lemma~\ref{lem:parity-language-membership}
now completes the proof of Theorem~\ref{thm:ph-pspace-immune}.

\subsection{Proof of Theorem~\ref{thm:pp:pspace}.}


The intuition behind the proof of Theorem~\ref{thm:pp:pspace} is that a PP machine’s decision on the fresh oracle block is a low-degree polynomial threshold function; by Lemma~\ref{lem:abfr}, such a function cannot predict parity, so it must either accept too rarely to define an infinite sublanguage or make false-positive errors infinitely often. The L\'evy extension of the Borel-Cantelli lemma converts this intuition (via an application of Lemma~\ref{lem:rare-or-wrong}) into a proof of immunity.

Indeed, fix an enumeration $(M_e)_{e\geq 1}$ of  polynomial-time oracle nondeterministic machines.  We use the same language $L^A$ as in the previous proof. 

\begin{lemma} 
For every $e\geq 1$
\begin{equation} 
 \Prb_A\!\left[
 L(M_{e}^A)\subseteq L^A
 \ \text{and}\ 
 L(M_{e}^A)\text{ is infinite}
 \right]=0.
\label{eq:50}
\end{equation} 
\end{lemma}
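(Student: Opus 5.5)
The plan is to reuse the skeleton of the proof of Lemma~\ref{lem:fixed-ph-machine} verbatim, replacing the AC$^0$ parity bound (Lemma~\ref{lem:parity-correlation-input}) by the voting-polynomial bound of Lemma~\ref{lem:abfr}.

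Fix $e$ and write $M=M_e$, with running time bounded by a polynomial $r(n)$. Every query on $1^{t_j}$ then has length at most $r(t_j)$, so Lemma~\ref{lem:wide-spacing} gives $j_0$ with $r(t_j)<t_{j+1}$ for $j\ge j_0$. Define $\mathcal F_j,\mathcal G_j,E_j=\{M^A(1^{t_j})\text{ accepts under the tie convention}\}$ and $B_j=E_j\cap\{Y_j^A=0\}$ exactly as before. Then $E_j,B_j\in\mathcal F_j$ and $\mathcal F_{j-1}\subseteq\mathcal G_j\subseteq\mathcal F_j$.

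\textbf{Local hardness step.} Condition on an atom $\omega$ of $\mathcal G_j$. The only free oracle bits are the $N_j=2^{t_j}$ bits of $V_j$, and $Y_j^A$ is their parity. Each computation path of $M$ on $1^{t_j}$ makes at most $r(t_j)$ queries, so, given $\omega$, its acceptance is a conjunction of at most $r(t_j)$ literals in the $V_j$-variables (or a constant). Its indicator is therefore a real polynomial of degree at most $r(t_j)$; the indicator of a conjunction of literals is the product of $x_i$ or $1-x_i$. Hence
\[
(\#\text{accepting paths})-(\#\text{rejecting paths})
\]
is a polynomial $p_{j,\omega}$ of degree at most $d_j:=r(t_j)$. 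Under our convention, $\mathbf 1_{E_j}=sign(p_{j,\omega})$ on this atom.

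Lemma~\ref{lem:abfr} then yields
\[
\Pr(\mathbf 1_{E_j}=Y_j^A\mid\mathcal G_j)\le\tfrac12+\eta_j,\qquad \eta_j:=C\,\frac{r(t_j)+1}{2^{t_j/2}},
\]
valid once $d_j<N_j$, which holds for large $j$. The sequence $\eta_j$ is summable, since it is eventually at most $2^{-t_j/4}\le 2^{-j}$.

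\textbf{Global step.} Apply Lemma~\ref{lem:rare-or-wrong} on each atom with $H=Y_j^A$, $C=\mathbf 1_{E_j}$, $\delta=0$ and $\varepsilon=\eta_j$. This gives
\[
b_j:=\Pr(B_j\mid\mathcal G_j)\ \ge\ \tfrac12 p_j-\tfrac12\eta_j,\qquad p_j:=\Pr(E_j\mid\mathcal G_j).
\]
Using the interleaved filtration $\mathcal H_{2j-1}=\mathcal G_j$, $\mathcal H_{2j}=\mathcal F_j$ and Lemma~\ref{lem:CBC}, almost surely $E_j$ occurs infinitely often iff $\sum p_j=\infty$, and $B_j$ occurs infinitely often iff $\sum b_j=\infty$.

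\begin{itemize}
\item If $\sum p_j<\infty$, then $M$ accepts only finitely many elements of $T$. Since $L^A\subseteq T$, the condition $L(M^A)\subseteq L^A$ forces $L(M^A)$ to be finite.
\item If $\sum p_j=\infty$, then $\sum b_j=\infty$ because $\sum\eta_j<\infty$. So false positives $B_j$ occur, and $L(M^A)\nsubseteq L^A$.
\end{itemize}

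Either way the event in \eqref{eq:50} is null.

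\textbf{Main obstacle.} The delicate point is the local step: correctly identifying $\mathbf 1_{E_j}$ on each atom as a polynomial threshold function of low degree. Three things need care there. First, the acceptance convention (ties accept) must match $sign$ with $sign(0)=1$. Second, the adaptivity of queries must be handled: a path's query sequence depends on earlier answers, but a fixed path is consistent only with one answer pattern, so it is still a single conjunction of at most $r(t_j)$ literals. Third, one must check that the degree $r(t_j)$ is negligible against $\sqrt{N_j}=2^{t_j/2}$, which is what makes the $\eta_j$ summable.
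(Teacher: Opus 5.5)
Your proposal is correct and matches the paper's proof step for step. Both use the same ingredients: on each $\mathcal G_j$-atom, acceptance is the sign of the accepting-minus-rejecting path-gap polynomial of degree at most the query bound (with the tie convention matching $sign(0)=1$); the ABFR bound gives the summable error $C(q(t_j)+1)/2^{t_j/2}$; then Lemma~\ref{lem:rare-or-wrong} with $\delta=0$, and L\'evy's lemma on the interleaved filtration $\mathcal G_j\subseteq\mathcal F_j$, with the same two-case split.
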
 
\begin{proof} 

There is a polynomial $q$ such that, on inputs of length
$n$, every computation path runs for at most $q(n)$ steps and hence asks at
most $q(n)$ oracle queries, all of length at most $q(n)$. Without loss of generality we can choose $q$ so that it is monotonically increasing and $q(n)\geq n$. 

By the spacing property there exists $j_0\geq 1$ such that 
for $j\geq j_0$ we have $q(t_j)<t_{j+1}$.  Put
\begin{equation} 
 X_j=\{M_{e}^A(1^{t_j})\text{ accepts in the }\PP\text{ sense}\}
\end{equation} 
and let
\begin{equation} 
 Y_j=\{1^{t_j}\in L^A\},\qquad
 B_j=X_j\cap Y_j^{c}.
\end{equation} 
Thus $B_j$ is the event that $M_e$ makes a false positive with respect to
$L^A$ at the $j$-th special length.

Let
$V_j=\{0,1\}^{t_j}$ be the $j$-th block of oracle coordinates and write
\begin{equation}
  Z_j=\bigl(A(z):z\in V_j\bigr)\in\{0,1\}^{N_j},
  \qquad N_j=2^{t_j},
  \label{eq:fresh-block-PP}
\end{equation}
so that $Y_j$ is the event that the parity of $Z_j$ equals $1$.
Define the two $\sigma$-fields
\begin{align}
  \mathcal F_j&=\sigma\bigl(A(z):|z|<t_{j+1}\bigr),
  \label{eq:F-PP}\\
  \mathcal G_j&=\sigma\bigl(A(z):|z|<t_{j+1},\ z\notin V_j\bigr).
  \label{eq:G-PP}
\end{align}
Since $V_j$ consists of strings of length exactly $t_j$, while
$\mathcal F_{j-1}$ exposes only strings of length $<t_j$, we have
\begin{equation}
  \mathcal F_{j-1}\subseteq\mathcal G_j\subseteq\mathcal F_j
  \subseteq\mathcal G_{j+1}\subseteq\cdots,
  \label{eq:interleaved-PP}
\end{equation}
an increasing interleaved filtration.  By the spacing property every
oracle query made by $M_e$ on $1^{t_j}$ has length at most
$q(t_j)<t_{j+1}$; hence $X_j\in\mathcal F_j$, and $Y_j\in\mathcal F_j$
because $Y_j$ is determined by $Z_j$.  Consequently
$B_j=X_j\cap Y_j^{c}\in\mathcal F_j$ as well.

The point of using $\mathcal G_j$ rather than
$\sigma\bigl(A(z):|z|\le q(t_j)\bigr)$ is the following.  Fix an atom
$\omega$ of $\mathcal G_j$, that is, an assignment to all oracle
coordinates of length $<t_{j+1}$ outside $V_j$.  Every coordinate that
$M_e^A(1^{t_j})$ can query is then either frozen by $\omega$ or belongs
to $V_j$: there are no further unexposed coordinates that could affect
the computation.  Moreover, conditionally on $\omega$ the vector $Z_j$ is
uniformly distributed on $\{0,1\}^{N_j}$.  We may therefore work atom by
atom, and on each atom the acceptance predicate is a function of the
fresh block alone.  We record the standard representation of that
function.

We need a standard representation lemma: 

\begin{lemma}
\label{lem:gap}
Fix $j\geq j_0$ and an atom $\omega$ of $\mathcal G_j$.  Viewed as a
function of the fresh block $Z_j$, the predicate $X_j$ is the sign of a
real multilinear polynomial $p_{j,\omega}(Z_j)$ of degree at most
$d_j=q(t_j)$.
\end{lemma}

\begin{proof}
Consider a computation path of $M_e$ on $1^{t_j}$.  Once its internal
nondeterministic choices are fixed, the condition that this path is
followed can be expressed as a product of literals corresponding to the
answers to its adaptive oracle queries.  

Since at most $q(t_j)$ oracle queries occur on a path, the resulting path-indicator polynomial has degree at most $q(t_j)$. (A path may repeat a query; the corresponding literal then occurs with a power, which may be reduced using $z^2 = z$ on $\{0,1\}$ for each of the $N_j$ potential variables without changing the function or raising the degree.)

Sum these indicators over accepting paths, and subtract the corresponding
sum over rejecting paths.  The resulting gap polynomial has degree at most
$q(t_j)$: summing/subtracting over the exponentially many paths does not increase the degree: the gap polynomial has degree at most $q(t_j)$ regardless of how many paths there are.

On the other hand the \emph{sign} of the gap polynomial determines whether the accepting-path count is at least 
the rejecting-path count.  
\end{proof}

Fix an atom $\omega$ of $\mathcal G_j$.  By
Lemma~\ref{lem:gap}, $X_j$ is represented there by the sign of
a polynomial of degree at most $d_j=q(t_j)$ in the $N_j$ fresh bits,
while $Y_j$ is the parity of those same bits.  Lemma~\ref{lem:abfr}
therefore gives
\begin{equation}
  \Pr\bigl(X_j\,\triangle\,Y_j\mid\mathcal G_j\bigr)\ \geq\ \frac12-\varepsilon_j,
  \qquad
  \varepsilon_j=C\,\frac{q(t_j)+1}{2^{t_j/2}} ,
  \label{eq:ABFR-applied}
\end{equation}
on every atom of $\mathcal G_j$.  Applying Lemma~\ref{lem:rare-or-wrong}
on that atom with $H=\mathbf 1_{Y_j}$, $C=\mathbf 1_{X_j}$, $\delta=0$
(the parity of a nonempty block of fresh unbiased bits is exactly
balanced) and $\varepsilon=\varepsilon_j$, we obtain
\begin{equation}
  \Pr\bigl(B_j\mid\mathcal G_j\bigr)
  \ \geq\ \frac{\Pr\bigl(X_j\mid\mathcal G_j\bigr)-\varepsilon_j}{2}.
  \label{eq:PP-rare-or-wrong}
\end{equation}
Since the atom was arbitrary and $\mathcal G_j$ is generated by finitely
many oracle coordinates, \eqref{eq:PP-rare-or-wrong} holds everywhere.
Ignoring the finitely many indices $j\le j_0$, put
\begin{equation}
  Q_j=\Pr\bigl(X_j\mid\mathcal G_j\bigr),
  \qquad
  R_j=\Pr\bigl(B_j\mid\mathcal G_j\bigr),
  \label{eq:QR-PP}
\end{equation}
so that \eqref{eq:PP-rare-or-wrong} reads
$R_j\ \geq\ (Q_j-\varepsilon_j)/2$.

\begin{lemma}
We have: 
\begin{equation} 
 \sum_j\varepsilon_j<\infty,
\label{eq:62}
\end{equation} 
\label{foo-d}
\end{lemma}
\begin{proof} 

Since $q$ is a polynomial, for all sufficiently large $j$, 
$C(q(t_{j})+1)\leq 2^{t_{j}/4}$, 
so $\epsilon_j\leq 2^{-t_{j}/4}\leq 2^{-j/4}$. Hence 
$\sum_j\varepsilon_j<\infty.$
\end{proof}

As in the proof of Theorem~\ref{thm:ph-pspace-immune}, we feed
\eqref{eq:interleaved-PP} to Lévy's extension of the Borel-Cantelli
lemmas.  Define a filtration $(\mathcal H_n)$ by
\begin{equation}
  \mathcal H_{2j-1}=\mathcal G_j,
  \qquad
  \mathcal H_{2j}=\mathcal F_j
  \qquad(j\geq j_0),
  \label{eq:H-PP}
\end{equation}
filling the finitely many earlier terms with $\mathcal F_{j_0-1}$ and the
corresponding events with $\varnothing$; by \eqref{eq:interleaved-PP}
this is indeed a filtration.  Setting
$D^X_{2j-1}=\varnothing$, $D^X_{2j}=X_j$ gives
$\Pr(D^X_{2j}\mid\mathcal H_{2j-1})=Q_j$, and likewise
$D^B_{2j-1}=\varnothing$, $D^B_{2j}=B_j$ gives
$\Pr(D^B_{2j}\mid\mathcal H_{2j-1})=R_j$.  Lemma~\ref{lem:CBC} therefore
yields, almost surely,
\begin{equation}
  X_j\ \text{i.o.}\iff\sum_j Q_j=\infty,
  \qquad
  B_j\ \text{i.o.}\iff\sum_j R_j=\infty,
  \label{eq:levy-PP}
\end{equation}
and we may complete the proof by the same two-case analysis as in the
proof of Theorem~\ref{thm:ph-pspace-immune}:

\begin{itemize} 
\item{Case 1: $\sum_j Q_j<\infty$}. 

Then Lemma~\ref{lem:CBC}, applied to $X_j$, implies that almost surely $X_j$ occurs only
finitely often.  Thus $M_e^A$ accepts only finitely many of the
candidate strings $1^{t_j}$.

\item{Case 2: $\sum_j Q_j=\infty$.} 

Since \(R_j\geq (Q_j-\varepsilon_j)/2\), while
\(\sum_j Q_j=\infty\) and \(\sum_j\varepsilon_j<\infty\) by Lemma~\ref{foo-d} it follows  that
$\sum_j R_j=\infty$. 
Applying Lemma~\ref{lem:CBC} to $B_j$ shows that $B_j$ occurs infinitely
often almost surely.  In particular, $L(M_e^A)$ cannot be a subset of $L^A$.
\end{itemize} 
\end{proof} 

There are only countably many polynomial-time oracle machines (including
all polynomial clocks), so the union of the
exceptional null sets in (\ref{eq:50}) is null.  Therefore, with probability one,
no infinite $\PP^A$ language is a subset of $L^A$, in other words (we already proved that $L^A$ is almost surely infinite) $L^A$ is $\PP^A$-immune.  

Since (we showed already that)  $L^A\in\ParityP^A\subseteq\PSPACE^A$, the theorem follows.

\subsection{Proof of Theorem~\ref{thm:ip}.}

The construction of the proof partitions a portion of the oracle at each designated length into \emph{exponentially many disjoint segments}, and declares the corresponding unary string to belong to $L^A$ exactly when none of these segments is completely filled with ones; this makes $L^A$ a $\mathrm{coNP}^A$ language and leaves it infinite almost surely. Now fix an interactive-proof verifier $V$. If $V$ accepts a positive instance with probability greater than $2/3$, then, since every computation examines only polynomially many of the exponentially many segments, most segments are queried with probability less than $1/3$. Completing any such segment produces a negative instance that $V$ still accepts with probability greater than $1/3$.

Double counting these modifications as in  Lemma~\ref{lem:bilateral-rare-or-wrong} allows converting this dichotomy (via the conditional Borel-Cantelli lemma) into the conclusion of  $\mathrm{IP}^A$-immunity.  

Indeed, fix an even integer $n\geq4$ and define
\begin{equation}
s(n)=\frac n2,
\qquad
M(n)=2^{s(n)}.
\label{eq:sM}
\end{equation}

Order $\bits^n$ lexicographically, equivalently by interpreting
$n$-bit strings as integers in $\{0,\ldots,2^n-1\}$. For
$i\in\{0,\ldots,M(n)-1\}$, define the segment
\begin{equation}
S_i(n)
=
\left\{
\operatorname{bin}_n(i\,s(n)+r):
0\leq r<s(n)
\right\}.
\label{eq:segments}
\end{equation}

Since
\begin{equation}
M(n)s(n)
=
2^{n/2}\frac n2
\leq 2^n
\end{equation}
for all sufficiently large $n$, the segments are well defined,
pairwise disjoint, and contained in $\bits^n$.

Let
\begin{equation}
W(n)
=
\bigcup_{i=0}^{M(n)-1}S_i(n).
\label{eq:W}
\end{equation}

The segment addressing is uniform: given $n$, $i$, and $r$, the
string $\operatorname{bin}_n(i\,s(n)+r)$ is computable in time
polynomial in $n$.

For an oracle $A$, define
\begin{equation}
L^A
=
\left\{
1^{t_j}: j\geq 2, 
\forall i<M(t_j),\;
S_i(t_j)\nsubseteq A
\right\}.
\label{eq:L}
\end{equation}

\begin{lemma}
\label{prop:conp}
For every oracle $A$,
\begin{equation}
L^A\in\coNP^A.
\label{eq:conp}
\end{equation}
\end{lemma}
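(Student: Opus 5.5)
The plan is to show that the complement $\overline{L^A}$ lies in $\NP^A$, by exhibiting a nondeterministic polynomial-time oracle machine $N$ that accepts exactly the strings not in $L^A$. Equivalently, one can describe a $\coNP^A$ machine directly, with a universal branch over segment indices. On input $x$, the machine $N$ first checks deterministically that $x=1^n$ and that $1^n\in T_{\geq 2}$, i.e.\ $n=t_j$ for some $j\geq 2$. This check runs in polynomial time by Lemma~\ref{lem:support-decidable}, with the trivial extra check that $n\neq t_1=2$. If either test fails, $x\notin L^A$ and $N$ accepts outright; this branch makes no oracle queries.

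When $n=t_j$ with $j\geq2$, $n$ is a power of two with $n\geq t_2=4$. Hence $n$ is even, $s(n)=n/2$ is an integer, and $M(n)=2^{n/2}$. Moreover, $M(n)s(n)\leq 2^n$ holds for all such $n$, since $2^{n/2}\cdot n/2\leq 2^n$ is equivalent to $n\leq 2^{n/2+1}$. So the segments are well defined at every designated length, not just for large $n$. The machine then nondeterministically guesses an index $i<M(n)$ using $n/2$ binary choices. For each $r<s(n)$ it computes $\operatorname{bin}_n(i\,s(n)+r)$ in polynomial time, by the uniformity of segment addressing, and queries $A$ on it. It accepts iff all $s(n)=n/2$ answers are $1$, i.e.\ iff $S_i(n)\subseteq A$. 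The total work is $O(n)$ queries, each preceded by polynomial-time arithmetic on $n$-bit numbers.

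Correctness is then immediate from \eqref{eq:L}. $N^A$ accepts $1^{t_j}$ on some path iff there exists $i<M(t_j)$ with $S_i(t_j)\subseteq A$, iff $1^{t_j}\notin L^A$. On all other inputs $N$ accepts, matching the fact that $L^A\subseteq\{1^{t_j}:j\geq2\}$. Hence $L(N^A)=\overline{L^A}\in\NP^A$, so $L^A\in\coNP^A$, and this holds for every oracle $A$.

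The only delicate point, which is minor, is confirming that the segment construction is legitimate at every length used. This means checking that $n$ is even and that $M(n)s(n)\leq 2^n$ at all $t_j$ with $j\geq 2$, rather than only for sufficiently large $n$. Otherwise the definition of $L^A$ or the machine's computed addresses could be ill-formed at small lengths. I would handle this with the explicit inequality above. As a fallback, I would observe that finitely many exceptional lengths can be hard-wired into the machine without affecting membership in $\coNP^A$.
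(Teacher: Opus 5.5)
Your proposal is correct and is essentially the paper's argument: put $\overline{L^A}$ in $\NP^A$ by accepting off-support inputs outright and, at $n=t_j$, guessing a segment index $i<2^{n/2}$ and checking that every bit of $S_i(n)$ is $1$. Your check that $n$ is even and that $2^{n/2}\cdot\frac{n}{2}\leq 2^n$ at every $t_j$ with $j\geq 2$ tidies up the paper's ``for all sufficiently large $n$'' remark, but it does not change the route.
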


\begin{proof} We prove instead that $\overline{L^{A}}$, the complement of the language, is in $\NPclass^A$.

Indeed, on an input $x$, first test whether $x=1^n$ and whether $1^n\in T_{\geq 2}$.
If not, accept, since all off-support strings lie in
$\overline{L^A}$.

Suppose $n=t_j$. A nondeterministic machine guesses an index
$i<M(n)=2^{n/2}$. The index has $n/2$ bits. It then queries $A$ on all strings in $S_i(n)$ and accepts if every answer is
$1$. Thus the machine accepts exactly when some segment is fully contained
in $A$, i.e. exactly when $1^n\notin L^A$.
\end{proof}

\begin{lemma}
\label{lem:positiveprob}
We have
\begin{equation}
\lim_{1^n\in T,n\to \infty} \Prob_A[1^n\in L^A]= e^{-1}.
\end{equation}
\end{lemma}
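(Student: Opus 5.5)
The plan is to compute $\Prob_A[1^n\in L^A]$ exactly and then pass to the limit.

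Fix $n=t_j$ with $j\geq 2$. We need $n$ even, which holds since $t_j=2^{j!}$, and large enough that the segments are well defined.

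\textbf{Step 1: exact formula.} By~\eqref{eq:segments}, the segments $S_0(n),\ldots,S_{M(n)-1}(n)$ are pairwise disjoint subsets of $\bits^n$, each of size exactly $s(n)=n/2$. Under the product measure, the events $\{S_i(n)\subseteq A\}$ depend on disjoint sets of coordinates, so they are mutually independent. Each has probability $2^{-s(n)}=2^{-n/2}$. By definition~\eqref{eq:L},
\[
  \Prob_A[1^n\in L^A]=\prod_{i<M(n)}\Prob_A[S_i(n)\nsubseteq A]=\bigl(1-2^{-n/2}\bigr)^{2^{n/2}}.
\]

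\textbf{Step 2: limit.} Set $u=2^{n/2}$, so the probability is $(1-1/u)^u$. As $1^n$ ranges over $T$ with $n\to\infty$, we have $u\to\infty$. The standard limit $(1-1/u)^u\to e^{-1}$ then gives the claim. If an explicit rate is wanted, the estimate $u\ln(1-1/u)=-1-O(1/u)$ shows that the convergence is exponentially fast in $n$.

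\textbf{Main obstacle.} There is little difficulty here. The only point needing care is checking that the segments really are disjoint and of full size $s(n)$, which requires $M(n)s(n)\leq 2^n$ for the relevant $n$. This was already noted after~\eqref{eq:segments}, and it holds for every $n=t_j$ with $j\geq 2$, since $n\geq 4$ gives $\frac n2 2^{n/2}\leq 2^n$.

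As a byproduct, the identical computation at distinct $j$ involves disjoint blocks $\bits^{t_j}$. The events $\{1^{t_j}\in L^A\}$ are therefore independent, each with probability bounded below by a positive constant. The second Borel--Cantelli lemma (Lemma~\ref{lem:second-borel-cantelli}) then shows that $L^A$ is infinite almost surely, which is the use this lemma will serve later.
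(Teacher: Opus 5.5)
Your proposal is correct and matches the paper's proof: the segment events $\{S_i(n)\subseteq A\}$ are independent because the segments are disjoint, and each has probability $2^{-s(n)}=1/M(n)$. This gives exactly $\bigl(1-1/M(n)\bigr)^{M(n)}$ with $M(n)=2^{n/2}$, which tends to $e^{-1}$. Your extra checks (well-definedness of the segments for $n\ge 4$, the exponential rate, and the Borel--Cantelli consequence that the paper records separately as Lemma~\ref{prop:infinite}) are also correct.
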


\begin{proof}
For a fixed support length $1^n\in T$, a segment $S_i(n)$ is completely
contained in a random oracle with probability
\begin{equation}
2^{-s(n)}=\frac1{M(n)}.
\end{equation}
Since the segments are disjoint, these events are mutually
independent over $i$. Hence
\begin{equation}
\Prb_A[1^n\in L^A]
=
\left(1-\frac1{M(n)}\right)^{M(n)}\longrightarrow e^{-1}
\label{eq:positiveprob}
\end{equation}

as $n\to\infty$.
\end{proof}

\begin{lemma}
\label{prop:infinite}
With probability one, $L^A$ is infinite.
\end{lemma}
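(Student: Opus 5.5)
The plan mirrors Lemma~\ref{lem:parity-language-infinite}, replacing the parity bit by the ``no segment is full'' event and using Lemma~\ref{lem:positiveprob} to guarantee that each designated length contributes a probability bounded away from zero.

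For $j\geq 2$ let $F_j$ denote the event $1^{t_j}\in L^A$. By the definition~\eqref{eq:L}, $F_j$ depends only on the oracle bits indexed by $W(t_j)\subseteq\{0,1\}^{t_j}$. Strings of different lengths are distinct, so the sets $W(t_j)$ for distinct $j$ are pairwise disjoint. Under the product measure the events $(F_j)_{j\geq 2}$ are therefore mutually independent. (Independence across $i$ within a single length was already used in Lemma~\ref{lem:positiveprob}; here we only need independence across lengths.)

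By Lemma~\ref{lem:positiveprob}, $\Prb_A(F_j)=(1-1/M(t_j))^{M(t_j)}\to e^{-1}$. Hence there is $j_1$ such that $\Prb_A(F_j)\geq 1/4$ for all $j\geq j_1$. More directly, $(1-1/M)^M\geq 1/4$ holds for every $M\geq 2$, and $M(t_j)\geq 2$ for all $j\geq 2$. Consequently $\sum_{j\geq 2}\Prb_A(F_j)=\infty$.

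The second Borel--Cantelli lemma (Lemma~\ref{lem:second-borel-cantelli}) then gives $\Prb_A(F_j\text{ i.o.})=1$. Each occurrence of $F_j$ places the distinct string $1^{t_j}$ in $L^A$, so $L^A$ is infinite almost surely.

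The only point needing care is that the segment construction is well defined at each $t_j$: we need $t_j$ even and $M(t_j)s(t_j)\leq 2^{t_j}$. This holds for all $j\geq 2$, since $t_j=2^{j!}$ is even and $2^{n/2}\cdot n/2\leq 2^n$ holds for all $n\geq 4$. Beyond this check there is no real obstacle.
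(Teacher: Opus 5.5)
Your proof is correct and takes the paper's route: the paper's proof simply refers back to the second Borel--Cantelli argument of Lemma~\ref{lem:parity-language-infinite}, and you carry out exactly that argument, using independence of the disjoint length-$t_j$ blocks and the probability estimate of Lemma~\ref{lem:positiveprob}. Your explicit bound $(1-1/M)^M\geq 1/4$ for $M\geq 2$ and your check that the segments are well defined at every $t_j$ fill in details the paper leaves implicit.
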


\begin{proof}
Same proof approach, based on the second Borel-Cantelli lemma, as for Lemma~\ref{lem:parity-language-infinite}. 
\end{proof}

Fix a clocked verifier $V$. Let $q(n)$ be a polynomial that
simultaneously bounds:
\begin{enumerate}[label=(\roman*)]
\item the number of oracle queries made by $V$ on inputs of length
$n$, and
\item the length of every oracle query made by $V$ on inputs of
length $n$.
\end{enumerate}

Fix a sufficiently large support length $n$ and abbreviate
\begin{equation}
s=s(n)=\frac n2,
\qquad
M=M(n)=2^s,
\qquad
W=W(n).
\end{equation}

Fix an arbitrary assignment $\omega$ to all oracle coordinates
relevant to the computation except the coordinates in $W$. For an
assignment $\beta\in\bits^W$, let $A_{\omega,\beta}$ denote the
resulting oracle on the finite query cone, and define
\begin{equation}
v_\omega(\beta)
=
\val_V(A_{\omega,\beta},1^n).
\label{eq:vb}
\end{equation}

Call $\beta$ \emph{still unfilling} if no segment $S_i(n)$ is entirely
filled with $1$s. Call $\beta$ \emph{uniquely filling} if exactly
one segment is entirely filled with $1$s.
Define
\begin{align}
\mathcal E_\omega
&=
\left\{
\beta:
\beta\text{ is still unfilling and }
v_\omega(\beta)>\frac23
\right\},
\label{eq:Eblock}\\
\mathcal B_\omega
&=
\left\{
\gamma:
\gamma\text{ is uniquely filling and }
v_\omega(\gamma)>\frac13
\right\}.
\label{eq:Bblock}
\end{align}

Intuitively, we will exploit the fact that  
the verifier can afford to watch only polynomially many of the $2^{n/2}$ segments; filling an unwatched one changes membership without changing the verifier's behavior, and the uniqueness of the filled segment is what lets each resulting counterexample be traced back to few sources.

To be able to implement the double-counting principle underlying Lemma~\ref{lem:bilateral-rare-or-wrong}, we need a relation, together with two degree guarantees. The following result provides them: 

\begin{definition} Fix $\beta\in\mathcal E_\omega$. Let $P_\beta$ be an optimal prover
strategy for $V^{A_{\omega,\beta}}$ on input $1^n$.
For each segment $S_i(n)$, define
\begin{equation}
\theta_i(\beta)
=
\Prob[
\text{during the interaction with }P_\beta,
V^{A_{\omega,\beta}}(1^n)
\text{ queries at least one string in }S_i(n)
].
\label{eq:theta}
\end{equation}

Also, given an index $i$, let $\beta^{(i)}$ be obtained from $\beta$ by
setting every bit of $S_i(n)$ equal to $1$ and leaving all other
coordinates unchanged.
\end{definition} 

\begin{lemma}[Segment-completion relation] 
\label{lem:completion} 
For every sufficiently large $n$ and every outside assignment $\omega$, define

\begin{equation} 
R_\omega = \bigl\{(\beta,\gamma) \in E_\omega \times \{0,1\}^{W} \;:\; \gamma = \beta^{(i)} \text{ for some } i < M \text{ with } \theta_i(\beta) < \tfrac13\bigr\}.
\end{equation} 

Then $R_\omega \subseteq E_\omega \times B_\omega$, and
\begin{itemize} 
\item[-] every $\beta \in E_\omega$ has at least $M - 3q(n)$ distinct $R_\omega$-neighbours in $B_\omega$;
\item[-] every $\gamma \in B_\omega$ has at most $M-1$ distinct $R_\omega$-preimages in $E_\omega$.
\end{itemize} 
\end{lemma}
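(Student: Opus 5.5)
We verify the three claims of Lemma~\ref{lem:completion} in turn: the inclusion $R_\omega\subseteq E_\omega\times B_\omega$, the lower bound on out-degrees, and the upper bound on in-degrees. Fix $\beta\in E_\omega$ and an optimal prover $P_\beta$. The plan for the inclusion is to show that for any $i$ with $\theta_i(\beta)<\tfrac13$, the completed assignment $\gamma=\beta^{(i)}$ is uniquely filling and satisfies $v_\omega(\gamma)>\tfrac13$.

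\emph{Unique filling.} Since $\beta$ is still unfilling, no segment of $\beta$ is all ones. Passing to $\beta^{(i)}$ changes only coordinates in $S_i(n)$, and the segments are pairwise disjoint. Hence exactly one segment, namely $S_i(n)$, is filled in $\gamma$, so $\gamma$ is uniquely filling.

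\emph{Acceptance value.} We run the same prover strategy $P_\beta$ against $V^{A_{\omega,\gamma}}$. This is a legitimate strategy, since prover strategies are arbitrary functions of partial transcripts. Couple the two executions by using the same random tape of $V$.
\begin{itemize}
\item On any random tape on which $V^{A_{\omega,\beta}}$ never queries $S_i(n)$, the two executions are step-by-step identical. This follows by induction on the steps: every oracle answer agrees, because $\beta$ and $\gamma$ differ only on $S_i(n)$, and the prover's messages depend only on the transcript.
\item The tapes on which the two runs can differ therefore have probability at most $\theta_i(\beta)<\tfrac13$.
\end{itemize}
Consequently
\[
\val_V(A_{\omega,\gamma},1^n)\ \ge\ \Prob[\text{accept with }P_\beta\text{ on }\gamma]\ \ge\ \val_V(A_{\omega,\beta},1^n)-\theta_i(\beta)\ >\ \tfrac23-\tfrac13=\tfrac13 .
\]
Thus $\gamma\in B_\omega$, which establishes $R_\omega\subseteq E_\omega\times B_\omega$.

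\emph{Out-degree.} Since the segments are disjoint, each query lies in at most one segment. Each run makes at most $q(n)$ queries, so on each random tape the number of segments touched is at most $q(n)$. Taking expectations gives $\sum_i\theta_i(\beta)\le q(n)$. By Markov's inequality, fewer than $3q(n)$ indices can have $\theta_i(\beta)\ge\tfrac13$ (at most $3q(n)$ in any case). So at least $M-3q(n)$ indices qualify. These give distinct neighbours, because for different $i$ the assignments $\beta^{(i)}$ are filled in different segments.

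\emph{In-degree.} Fix $\gamma\in B_\omega$ and suppose $\gamma=\beta^{(i)}$. Then $i$ must be the unique filled segment $i_\gamma$ of $\gamma$, so $i$ is determined by $\gamma$. The preimage $\beta$ agrees with $\gamma$ off $S_{i_\gamma}(n)$ and must not be all ones on $S_{i_\gamma}(n)$. Hence $\beta$ is determined by its restriction to that segment, which has at most $2^{s}-1=M-1$ possible values. This gives at most $M-1$ preimages.

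\emph{Main obstacle.} The delicate step is the coupling argument. We must check that reusing $P_\beta$ is legitimate and that divergence only occurs on tapes that query $S_i(n)$. The key point is that the first divergence in the coupled runs must be an oracle answer on $S_i(n)$. Also, "sufficiently large $n$" is needed only to ensure that the segments fit inside $\bits^n$.
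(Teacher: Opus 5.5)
Your proof is correct and follows the paper's argument essentially step for step. You use the same expectation bound $\sum_i\theta_i(\beta)\le q(n)$ to get at most $3q(n)$ bad indices. You use the same same-coins coupling, reusing $P_\beta$ against $A_{\omega,\beta^{(i)}}$, to get value at least $v_\omega(\beta)-\theta_i(\beta)>\tfrac13$. You bound preimages by counting the at most $2^s-1$ proper fillings of the unique full segment.

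Your explicit check that distinct indices $i$ give distinct neighbours $\beta^{(i)}$ is a detail the paper leaves implicit. Also, Markov gives \emph{at most}, not strictly fewer than, $3q(n)$ bad indices, which is all that is needed.
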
 
\begin{proof} 

Because the segments are disjoint and every computation path makes at
most $q(n)$ oracle queries, the number of distinct segments touched
on any computation path is at most $q(n)$. Taking expectations,
\begin{equation}
\sum_{i=0}^{M-1}\theta_i(\beta)
\leq q(n).
\label{eq:sumtheta}
\end{equation}

Consequently, at most $3q(n)$ indices can satisfy
$\theta_i(\beta)\geq1/3$. Thus at least $
M-3q(n)$ 
indices satisfy $
\theta_i(\beta)<\frac13.$

Since $\beta$ is still unfilling, no segment is full under $\beta$. Because
the segments are disjoint and only $S_i(n)$ is modified,
$\beta^{(i)}$ is uniquely filled, with $S_i(n)$ as its unique full
segment.

Now run against $A_{\omega,\beta^{(i)}}$ the same prover strategy
$P_\beta$. Couple the two verifier computations by using the same
random coin string. Until the verifier queries a string in $S_i(n)$,
the oracle answers are identical and hence the two transcripts
coincide. Therefore the acceptance probability can decrease by at
most the probability that a query enters $S_i(n)$:
\begin{equation}
\val_V(A_{\omega,\beta^{(i)}},1^n)
\geq
v_\omega(\beta)-\theta_i(\beta).
\label{eq:stability}
\end{equation}
Here the left side is the optimal value for the modified oracle, so it
is at least the acceptance probability obtained by reusing
$P_\beta$.

Since $\beta\in\mathcal E_\omega$ and $i$ satisfies
$\theta_i(\beta)<\frac13$, we have
\begin{equation}
v_\omega(\beta)>\frac23
\end{equation}
Equation~\eqref{eq:stability} therefore gives
\begin{equation}
\val_V(A_{\omega,\beta^{(i)}},1^n)>\frac13.
\end{equation}
Hence $\beta^{(i)}\in\mathcal B_\omega$.

The argument above shows that every $\beta\in E_\omega$ has at least
$M-3q(n)$ distinct $R_\omega$-neighbors in $B_\omega$. On the other hand, fix $\gamma\in B_\omega$, and let $S_i(n)$ be its
unique full segment.  If $\gamma=\beta^{(i)}$ for some 
$\beta\in E_\omega$, then $\beta$ agrees with $\gamma$ outside $S_i(n)$, whereas
inside $S_i(n)$ it can be any proper subset.  Hence $\gamma$ has at
most $2^s-1=M-1$
preimages under $R_\omega$.

\end{proof} 

\begin{lemma}[One-sided segment-completion lemma] 
\label{lem:onesided}
For every sufficiently large $n$ and every outside assignment $\omega$,

\begin{equation} 
\frac{|B_\omega|}{2^{|W|}} \;\ge\; c_V(n)\,\frac{|E_\omega|}{2^{|W|}}, \qquad c_V(n) = \frac{M - 3q(n)}{M-1},
\end{equation} 

and $c_V(n) \to 1$, so $c_V(n) \ge \frac12$ for all sufficiently large $n$.
\end{lemma}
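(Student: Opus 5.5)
This follows by feeding Lemma~\ref{lem:completion} into the double-counting inequality of Lemma~\ref{lem:bilateral-rare-or-wrong}. I would fix a sufficiently large support length $n$ and an outside assignment $\omega$. Then I take $\Omega=\{0,1\}^W$ with the uniform measure. The target signature is $H(\beta)=1$ iff $\beta$ is still unfilling, and the value function is $v=v_\omega$. There is a small mismatch with the hypotheses of Lemma~\ref{lem:bilateral-rare-or-wrong}: $\mathcal B_\omega$ consists only of \emph{uniquely} filling $\gamma$ with $v_\omega(\gamma)>\frac13$, not all negative instances above the lower threshold. This causes no harm, because the double-counting proof only uses the relation $R_\omega\subseteq \mathcal E_\omega\times\mathcal B_\omega$ and the two degree bounds. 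So I would either rerun the double count directly, or apply the lemma with $B$ replaced by $\mathcal B_\omega$, noting that its proof never uses the specific form of $B$. I would take thresholds $a=\frac23$, $b=\frac13$, $d=M-3q(n)$ and $D=M-1$.

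The steps, in order, are as follows. First, by Lemma~\ref{lem:completion}, every $\beta\in\mathcal E_\omega$ has at least $M-3q(n)$ distinct $R_\omega$-neighbours, all lying in $\mathcal B_\omega$. This gives $|R_\omega|\ge (M-3q(n))|\mathcal E_\omega|$, provided $M-3q(n)>0$; that holds for large $n$ since $M=2^{n/2}$ while $q$ is polynomial. Second, every $\gamma\in\mathcal B_\omega$ has at most $M-1$ preimages, so $|R_\omega|\le (M-1)|\mathcal B_\omega|$. Combining the two bounds and dividing by $2^{|W|}$ yields the displayed inequality with $c_V(n)=(M-3q(n))/(M-1)$.

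Finally, $c_V(n)=1-\frac{3q(n)-1}{M-1}$, and $(3q(n)-1)/(2^{n/2}-1)\to 0$ because exponentials dominate polynomials. Hence $c_V(n)\to1$, and $c_V(n)\ge\frac12$ for all sufficiently large $n$.

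The main obstacle is bookkeeping, not mathematics. I need to check that ``sufficiently large $n$'' covers three conditions: the segments fit in $\{0,1\}^n$, Lemma~\ref{lem:completion} applies, and $M>3q(n)$ so that $d$ is a positive integer. The largest of these thresholds suffices. I also need to state that $\mathcal E_\omega$ may be empty, in which case the inequality is trivial, and that $\omega$ ranges over all outside assignments uniformly, so the bound holds atomwise. That atomwise form is what is needed later for the conditional Borel–Cantelli step.
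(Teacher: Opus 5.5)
Your proposal is correct and follows essentially the paper's route. The paper also double-counts the pairs in $R_\omega$ directly, using the two degree bounds $M-3q(n)$ and $M-1$ from Lemma~\ref{lem:completion}, then divides by $2^{|W|}$ and rewrites $c_V(n)=1-\frac{3q(n)-1}{M-1}\to 1$. Your remark is accurate: Lemma~\ref{lem:bilateral-rare-or-wrong} only applies after replacing $B$ by the uniquely-filling set $\mathcal B_\omega$, which is why rerunning the count, as the paper does, is cleanest.
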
 
\begin{proof}
For $\beta\in\mathcal E_\omega$, let
\[
  \deg^{+}_{R_\omega}(\beta)
  =
  \bigl|\{\gamma\in\mathcal B_\omega:
          (\beta,\gamma)\in R_\omega\}\bigr|,
\]
and, for $\gamma\in\mathcal B_\omega$, let
\[
  \deg^{-}_{R_\omega}(\gamma)
  =
  \bigl|\{\beta\in\mathcal E_\omega:
          (\beta,\gamma)\in R_\omega\}\bigr|.
\]
Double-counting the pairs in $R_\omega$ gives
\[
  |R_\omega|
  =
  \sum_{\beta\in\mathcal E_\omega}
       \deg^{+}_{R_\omega}(\beta)
  =
  \sum_{\gamma\in\mathcal B_\omega}
       \deg^{-}_{R_\omega}(\gamma).
\]
By the first degree bound in Lemma~\ref{lem:completion},
every $\beta\in\mathcal E_\omega$ has at least
$M-3q(n)$ distinct $R_\omega$-neighbours. Hence
\[
  |R_\omega|
  \ge
  \bigl(M-3q(n)\bigr)|\mathcal E_\omega|.
\]
By the second degree bound in that lemma, every
$\gamma\in\mathcal B_\omega$ has at most $M-1$
$R_\omega$-preimages. Therefore
\[
  |R_\omega|
  \le
  (M-1)|\mathcal B_\omega|.
\]
Combining these inequalities yields
\[
  |\mathcal B_\omega|
  \ge
  \frac{M-3q(n)}{M-1}\,|\mathcal E_\omega|
  =
  c_V(n)|\mathcal E_\omega|.
\]
Dividing by the total number $2^{|W|}$ of assignments to the
coordinates in $W$ gives
\[
  \frac{|\mathcal B_\omega|}{2^{|W|}}
  \ge
  c_V(n)\frac{|\mathcal E_\omega|}{2^{|W|}},
\]
as claimed.

Finally, since $M=2^{n/2}$ and $q$ is a polynomial,
\[
  c_V(n)
  =
  \frac{M-3q(n)}{M-1}
  =
  1-\frac{3q(n)-1}{M-1}
  \longrightarrow 1.
\]
In particular, $c_V(n)\geq 1/2$ for all sufficiently large $n$.
\end{proof}


By Lemma~\ref{lem:wide-spacing}, after discarding finitely many support indices
we may assume $q(t_{j})<t_{j+1}$
for every remaining $j$.

Define
\begin{equation}
\mathcal F_j
=
\sigma\bigl(
A(z):|z|<t_{j+1}
\bigr).
\label{eq:Fj}
\end{equation}

Let $W_j=W(t_j)$ be the union of the current segments, and define
\begin{equation}
\mathcal G_j
=
\sigma\bigl(
A(z):
|z|<t_{j+1},\ z\notin W_j
\bigr).
\label{eq:Gj}
\end{equation}

Because $W_j\subseteq\bits^{t_j}$,
\begin{equation}
\mathcal F_{j-1}
\subseteq
\mathcal G_j
\subseteq
\mathcal F_j.
\label{eq:FGF}
\end{equation}
Moreover, $
\mathcal F_j\subseteq\mathcal G_{j+1}$, 
because $\mathcal F_j$ contains only strings of length strictly less
than $t_{j+1}$, whereas $W_{j+1}\subseteq\bits^{t_{j+1}}$.

Thus
\begin{equation}
\mathcal F_{j-1}
\subseteq
\mathcal G_j
\subseteq
\mathcal F_j
\subseteq
\mathcal G_{j+1}
\subseteq\cdots
\label{eq:interleave}
\end{equation}
is an increasing interleaved filtration.

Conditioned on $\mathcal G_j$, the coordinates in $W_j$ remain
independent unbiased bits.

Define the events
\begin{align}
E_j
&=
\left\{
1^{t_j}\in L^A
\text{ and }
\val_V(A,1^{t_j})>\frac23
\right\},
\label{eq:Ej}\\
B_j
&=
\left\{
\begin{array}{l}
\text{exactly one segment }S_i(t_j)\text{ is contained in }A,\\
\val_V(A,1^{t_j})>\frac13
\end{array}
\right\}.
\label{eq:Bj}
\end{align}

By Lemma~\ref{lem:wide-spacing}, for $j\geq j_0$ the value $\val_V(A,1^{t_j})$ is determined by
oracle coordinates of length strictly less than $t_{j+1}$, so
$E_j,B_j\in\mathcal F_j$.

Set
\begin{equation}
P_j=\Prb(E_j\mid\mathcal G_j),
\qquad
Q_j=\Prb(B_j\mid\mathcal G_j).
\label{eq:PQ}
\end{equation}

\begin{lemma}[Conditional one-sided estimate]
\label{lem:conditional}
For all sufficiently large $j$,
\begin{equation}
Q_j
\geq
c_V(t_j)P_j
\label{eq:condbound}
\end{equation}
almost surely. In particular, eventually
\begin{equation}
Q_j\geq\frac12P_j.
\label{eq:condhalf}
\end{equation}
\end{lemma}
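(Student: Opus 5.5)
The plan is to reduce the conditional statement to the atom-wise counting bound of Lemma~\ref{lem:onesided}. Since $\mathcal G_j$ is generated by finitely many oracle coordinates (all strings of length $<t_{j+1}$ outside $W_j$), the conditional probabilities $P_j$ and $Q_j$ are simple functions that are constant on the atoms of $\mathcal G_j$. So it suffices to fix an arbitrary atom, that is, an assignment $\omega$ to the coordinates of length $<t_{j+1}$ not in $W_j$, and prove $Q_j(\omega)\ge c_V(t_j)P_j(\omega)$ there.

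\textbf{Step 1 (identifying the events on the atom).} Conditioned on $\omega$, the coordinates in $W_j$ are independent unbiased bits, so the conditional law of $A\upharpoonright W_j$ is uniform on $\{0,1\}^{W_j}$. For $j\ge j_0$, every query $V$ makes on $1^{t_j}$ has length at most $q(t_j)<t_{j+1}$. Hence the verifier's computation, and therefore $\val_V(A,1^{t_j})$, depends only on $\omega$ and $\beta=A\upharpoonright W_j$; this is exactly the quantity $v_\omega(\beta)$ in \eqref{eq:vb}.

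Membership $1^{t_j}\in L^A$ is the condition that no segment $S_i(t_j)$ is contained in $A$. Since the segments lie inside $W_j$, this is the condition that $\beta$ is still unfilling. Likewise, ``exactly one segment is contained in $A$'' means $\beta$ is uniquely filling. Therefore, on the atom,
\[
E_j=\{\beta\in\mathcal E_\omega\}
\qquad\text{and}\qquad
B_j=\{\beta\in\mathcal B_\omega\},
\]
and consequently
\[
P_j(\omega)=|\mathcal E_\omega|/2^{|W_j|},
\qquad
Q_j(\omega)=|\mathcal B_\omega|/2^{|W_j|}.
\]
The one point to handle with care is that the ``outside assignment'' $\omega$ in Lemma~\ref{lem:onesided} ranged over all coordinates relevant to the computation. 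Here the atom fixes everything of length $<t_{j+1}$ outside $W_j$, which includes all relevant coordinates, and extra fixed coordinates are harmless. I expect this matching to be the only real obstacle; it is bookkeeping rather than new mathematics.

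\textbf{Step 2 (applying the counting lemma).} Invoke Lemma~\ref{lem:onesided} with $n=t_j$, which is even and large for large $j$, so that the lemma applies. It gives $Q_j(\omega)\ge c_V(t_j)P_j(\omega)$ on every atom. Since the atoms partition the space, the inequality holds everywhere, and in particular almost surely.

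\textbf{Step 3 (the second inequality).} This follows from $c_V(t_j)\to 1$, which Lemma~\ref{lem:onesided} shows because $M=2^{t_j/2}$ dominates $3q(t_j)$. Hence $c_V(t_j)\ge\tfrac12$ for all sufficiently large $j$, and $Q_j\ge\tfrac12 P_j$ eventually.
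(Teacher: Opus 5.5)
Your proposal is correct and follows the paper's own proof: fix an atom $\omega$ of $\mathcal G_j$, identify the conditional events $E_j$ and $B_j$ with $\mathcal E_\omega$ and $\mathcal B_\omega$ (the bits of $W_j$ are uniform on that atom), apply Lemma~\ref{lem:onesided} atom by atom, and get the factor $\tfrac12$ eventually from $c_V(t_j)\to 1$. Your additional bookkeeping, namely that the atom fixes a superset of the relevant outside coordinates and that $t_j$ is even and large, only makes explicit what the paper leaves implicit.
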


\begin{proof}
Condition on an arbitrary atom of $\mathcal G_j$. All relevant oracle
bits outside $W_j$ are then fixed, while the bits of $W_j$ remain
uniformly and independently distributed.

The conditional event $E_j$ is exactly the set $\mathcal E_\omega$
of Lemma~\ref{lem:onesided}, and the conditional event $B_j$ is
exactly $\mathcal B_\omega$. The inequality follows pointwise from
Lemma~\ref{lem:onesided}.
\end{proof}

To apply Lemma~\ref{lem:CBC}, introduce the interleaved filtration
\begin{align}
\mathcal H_{2j-2}=\mathcal F_{j-1},
\mathcal H_{2j-1}=\mathcal G_j,
\mathcal H_{2j}=\mathcal F_j,
\end{align}
after reindexing from a sufficiently large initial $j$.
For the events $E_j$, define
\begin{equation}
D^E_{2j-1}=\varnothing,
\qquad
D^E_{2j}=E_j.
\end{equation}
Then
\begin{equation}
\Prb(D^E_{2j}\mid\mathcal H_{2j-1})
=
\Prb(E_j\mid\mathcal G_j)
=
P_j.
\end{equation}
Conditional Borel-Cantelli therefore gives a probability-one event
$\Omega_V^E$ on which
\begin{equation}
E_j\ \ii
\quad\Longleftrightarrow\quad
\sum_jP_j=\infty.
\label{eq:Ecbc}
\end{equation}

Similarly, with
\begin{equation}
D^B_{2j-1}=\varnothing,
\qquad
D^B_{2j}=B_j,
\end{equation}
there is a probability-one event $\Omega_V^B$ on which
\begin{equation}
B_j\ \ii
\quad\Longleftrightarrow\quad
\sum_jQ_j=\infty.
\label{eq:Bcbc}
\end{equation}

Let $\Omega_V=\Omega_V^E\cap\Omega_V^B$.
Then $\Prob(\Omega_V)=1$.

\begin{lemma}
\label{prop:fixedV}
Fix a clocked probabilistic polynomial-time oracle verifier $V$.
There is a probability-one set of oracles $\Omega_V$ such that, for
every $A\in\Omega_V$, every language $S$ for which $V^A$ is a valid
interactive proof verifier and which satisfies
$S\subseteq L^A$ 
is finite.
\end{lemma}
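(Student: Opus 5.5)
We take $\Omega_V=\Omega_V^E\cap\Omega_V^B$ exactly as constructed just before the statement, and fix $A\in\Omega_V$. We also fix a language $S\subseteq L^A$ for which $V^A$ is a valid interactive proof verifier, meaning that \eqref{eq:complete} and \eqref{eq:sound} hold for every word. The key observation is that $\Omega_V$ depends only on $V$ and not on $S$. Once $A$ is fixed, the language $S$ is in fact determined by $V^A$, since it equals $\{x:\val_V(A,x)>\frac23\}$. So it suffices to carry out a two-case analysis on the series $\sum_j P_j$ evaluated at $A$, in the same way as in the proofs of Theorems~\ref{thm:ph-pspace-immune} and~\ref{thm:pp:pspace}.

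\emph{Case 1: $\sum_j P_j<\infty$.} By \eqref{eq:Ecbc}, only finitely many $E_j$ occur at $A$. Every $x\in S$ lies in $L^A$, and $L^A$ is contained in $T_{\geq 2}$, so $x=1^{t_j}$ for some $j$. Completeness \eqref{eq:complete} gives $\val_V(A,1^{t_j})>\frac23$, and together with $1^{t_j}\in L^A$ this says that $E_j$ holds. Hence $S$ injects into $\{j:E_j\text{ holds}\}\cup\{j<j_0\}$, where the second set accounts for the finitely many discarded indices. This set is finite, so $S$ is finite.

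\emph{Case 2: $\sum_j P_j=\infty$.} By Lemma~\ref{lem:conditional}, eventually $Q_j\geq \frac12 P_j$, so $\sum_j Q_j=\infty$. By \eqref{eq:Bcbc}, $B_j$ then occurs for infinitely many $j$, and in particular for at least one $j$. On $B_j$ some segment $S_i(t_j)$ is contained in $A$, so $1^{t_j}\notin L^A$; at the same time $\val_V(A,1^{t_j})>\frac13$. This contradicts soundness \eqref{eq:sound}, since $1^{t_j}\notin S$ forces $\val_V(A,1^{t_j})<\frac13$. Therefore Case 2 cannot occur for any $S$ validly verified by $V^A$ with $S\subseteq L^A$. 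More precisely, if $V^A$ is a valid verifier for some $S\subseteq L^A$, then the $B_j$-events are impossible, and $A\in\Omega_V$ then forces $\sum_j Q_j<\infty$, hence $\sum_j P_j<\infty$, which puts us in Case 1.

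The main point needing care is the bookkeeping that the probability-one set is \emph{independent of $S$}. This holds because both case hypotheses concern quantities defined from $V$ and $A$ alone, with the validity of $V^A$ entering only through \eqref{eq:complete}--\eqref{eq:sound} at the support lengths. A second point is that $E_j$ and $B_j$ are measurable in $\mathcal F_j$ only for $j\geq j_0$, which is why small indices are absorbed into the finite exceptional part. Neither point is a genuine obstacle: all the substance already sits in Lemma~\ref{lem:conditional} and the Lévy--Borel--Cantelli lemma.
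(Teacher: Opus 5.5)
Your proof is correct and follows the paper's argument essentially verbatim. It uses the same two-case split on $\sum_j P_j(A)$: in Case~1, completeness forces $E_j$ for every member $1^{t_j}$ of $S$; in Case~2, Lemma~\ref{lem:conditional} together with \eqref{eq:Bcbc} produces a $B_j$ that contradicts soundness. Your added remarks, that $\Omega_V$ does not depend on $S$ and how the discarded indices $j<j_0$ are handled, are harmless bookkeeping that the paper leaves implicit.
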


\begin{proof}
Fix $A\in\Omega_V$. The quantities $P_j(A)$ and $Q_j(A)$ are now
ordinary nonnegative real numbers.

There are two exhaustive cases.

\medskip
\noindent
\textbf{Case 1.}
Suppose
\begin{equation}
\sum_jP_j(A)<\infty.
\label{eq:case1}
\end{equation}
By~\eqref{eq:Ecbc}, only finitely many events $E_j(A)$ occur.

Let $S$ be any language for which $V^A$ is a valid interactive proof
verifier and suppose $S\subseteq L^A$.

If $1^{t_j}\in S$, then, because $S\subseteq L^A$,
$1^{t_j}\in L^A$. By completeness~\eqref{eq:complete},
$\val_V(A,1^{t_j})>2/3$. Hence $E_j(A)$ occurs.

Therefore only finitely many support strings can belong to $S$.
Since $L^A$ contains no off-support strings and $S\subseteq L^A$,
the language $S$ is finite.

\medskip
\noindent
\textbf{Case 2.}
Suppose
\begin{equation}
\sum_jP_j(A)=\infty.
\label{eq:case2}
\end{equation}
By Lemma~\ref{lem:conditional}, eventually
\begin{equation}
Q_j(A)\geq\frac12P_j(A).
\end{equation}
Therefore
\begin{equation}
\sum_jQ_j(A)=\infty.
\end{equation}
Equation~\eqref{eq:Bcbc} implies that $B_j(A)$ occurs infinitely
often.

In particular, for some $j$, exactly one segment at length $t_j$ is
fully contained in $A$, and
\begin{equation}
\val_V(A,1^{t_j})>\frac13.
\label{eq:badvalue}
\end{equation}
The existence of a full segment means $1^{t_j}\notin L^A$. If
$S\subseteq L^A$, then $1^{t_j}\notin S$. But
soundness~\eqref{eq:sound} for the interactive proof $V^A$ would
require
\begin{equation}
\val_V(A,1^{t_j})<\frac13,
\end{equation}
contradicting~\eqref{eq:badvalue}.

Thus Case~2 is impossible whenever $V^A$ is a valid interactive proof
verifier for a sublanguage of $L^A$.

We conclude that every such sublanguage is finite.
\end{proof}

Enumerate the clocked polynomial-time oracle verifiers; intersect the probability-one sets supplied by Lemma~\ref{prop:fixedV} and the probability-one event on which $L^A$ is infinite; then no infinite $\IP^A$ sublanguage of $L^A$ exists, while $L^A\in\coNP^A$.

\subsection{Proof of Theorem~\ref{thm:npconp}.}

The proof of 
Theorem~\ref{thm:npconp} is best understood as a cryptographic version of the “rare-or-wrong” strategy.
It uses a cryptographically hard function to choose, for every test length, a specific oracle query. It then constructs two complementary languages based on the two possible answers to this hidden query, and notes that:  
\begin{itemize} 
\item[-] an unambiguous machine can locate the relevant oracle bit by guessing its unique address; this puts the two resulting languages in relativized $UP\cap coUP$. 
\item[-] a deterministic polynomial-time machine can only locate that address with negligible probability, because doing so would invert a one-way permutation;
\item[-] without querying that address, the machine cannot distinguish whether the target bit is \(0\) or \(1\).
\item[-] the flipping involution associated with flipping the hidden bit shows that accepting without locating it has a \(1/2\)-type chance of being wrong; conditional Borel-Cantelli turns this local uncertainty (via Lemma~\ref{lem:involution-rare-or-wrong}) into almost-sure immunity.
\end{itemize} 

Indeed, fix $j\geq2$ and define
\begin{equation} 
n_j=t_j,
\qquad
m_j=\frac{n_j}{4}.
\end{equation} 

\begin{remark} 
Clearly (by the definition of $t_j$) $m_j$ is an integer for $j\geq2$.
\end{remark} 

We define two explicitly computable sets of oracle coordinates of
length exactly $n_j$.

For $1\leq r\leq m_j$, let
\begin{equation}
u_{j,r}
=
00\,
\operatorname{bin}_{m_j}(r)\,
0^{\,n_j-m_j-2},
\label{4.2}
\end{equation}

For $x\in\bits^{m_j}$, let
\begin{equation}
v_{j,x}
=
01\,
x\,
0^{\,n_j-m_j-2}.
\end{equation}

The two families of words are disjoint, and words corresponding to
different $j$'s are different because they have different lengths.

Define the word 
\begin{equation}
Y_j(A)
=
A(u_{j,1})A(u_{j,2})\cdots A(u_{j,m_j})
\in\bits^{m_j}.
\end{equation}
Since the oracle coordinates in (\ref{4.2}) are independent fair bits,
$Y_j(A)$ is uniformly distributed over $\bits^{m_j}$.
Because $f_{m_j}$ is a permutation, there is a unique
string $X_j(A)$ that is the preimage of $Y_{j}(A)$, formally 
\begin{equation} 
X_j(A)=f_{m_j}^{-1}(Y_j(A)).
\end{equation} 

Make the following notations: 
$Z_j(A)=v_{j,X_j(A)}$, 
and $
H_j(A)=A(Z_j(A))$. 

For $b\in\bits$, define
\begin{equation}
L_b^A
=
\left\{
1^{t_j}, j\geq 2:H_j(A)=b
\right\}.
\label{4.8}
\end{equation}

Clearly, $
L_0^A\mathbin{\cup}L_1^A=T_{\geq 2}, L_0^A\cap L_1^A=\emptyset$. 

\begin{lemma}
\label{prop:upcoup}
For every oracle $A$ and every $b\in\bits$,
\begin{equation}
L_b^A\in\UP^A\cap\coUP^A.
\end{equation} 
\end{lemma}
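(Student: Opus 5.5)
The plan is to exhibit a single nondeterministic polynomial-time oracle machine $N_b$ that has, on every input and relative to every oracle $A$, at most one accepting path, and accepts exactly the strings of $L_b^A$. Applying the same construction with $1-b$ in place of $b$ gives an unambiguous machine for $L_{1-b}^A$. Since $L_0^A\cup L_1^A=T_{\geq 2}\in\Pclass$ and the two languages are disjoint, we have $\overline{L_b^A}=L_{1-b}^A\cup\overline{T_{\geq 2}}$. A UP machine for the complement is then obtained by first testing membership in $T_{\geq 2}$ deterministically (Lemma~\ref{lem:support-decidable}). If the input is off-support, the machine accepts on a single deterministic path. Otherwise it runs $N_{1-b}$. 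This places $L_b^A$ in $\UP^A\cap\coUP^A$.

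The machine $N_b$ works as follows on input $x$.
\begin{enumerate}
\item It checks deterministically that $x=1^{n}$ with $1^n\in T_{\geq 2}$, rejecting otherwise. The index $j$ with $n=t_j$ and the value $m=m_j=n/4$ are computable in polynomial time.
\item It queries the $m$ strings $u_{j,1},\dots,u_{j,m}$ to obtain $Y_j(A)$. Each $u_{j,r}$ has length $n$ and is explicitly computable from $n$ and $r$.
\item It nondeterministically guesses $x'\in\bits^{m}$ and computes $f_m(x')=F(1^m,x')$ in deterministic polynomial time, using the uniformity in Definition~\ref{def:owp}. If $f_m(x')\neq Y_j(A)$, that branch rejects.
\item On the surviving branch it queries $v_{j,x'}$ and accepts exactly when $A(v_{j,x'})=b$.
\end{enumerate}

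Because $f_m$ is a permutation, exactly one guess $x'$ survives step 3, namely $x'=X_j(A)$. On that branch, $v_{j,x'}=Z_j(A)$ and the queried bit equals $H_j(A)$. Hence $N_b$ has at most one accepting path, and it accepts precisely when $1^{t_j}\in L_b^A$.

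The remaining checks are routine. The running time is polynomial in $n$ because $m\le n$, $F$ runs in polynomial time, and every query has length $n$. The off-support branch of the complement machine is a single deterministic path, so it does not spoil unambiguity; and on-support inputs never reach that branch, so the two components of the complement machine cannot both produce accepting paths. The only step needing real care is unambiguity, which rests entirely on $f_m$ being a bijection. No oracle randomness is used anywhere, so the statement holds for every $A$, not merely almost surely. I do not expect a genuine obstacle here; the point to state explicitly is that the forward direction of $f_m$ is efficiently computable even though its inverse is hard, and this is what makes the guess-and-verify step in step 3 feasible.
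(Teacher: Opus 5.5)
Your proposal is correct and matches the paper's proof essentially step for step. You guess the preimage, verify $f_{m_j}(x)=Y_j(A)$ from the $u$-block queries, and on the unique surviving branch accept according to $A(v_{j,x})$; for the complement you accept off-support inputs deterministically and run the same procedure with $1-b$ on the support, with unambiguity resting solely on $f_{m_j}$ being a bijection.
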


\begin{proof}
Consider first membership in $L_b^A$.

On an input $w$, deterministically check whether
$w=1^{t_j}$
for some $j\geq2$.  If not, reject. Suppose $w=1^{t_j}$.  Compute $m_j=t_j/4$ and nondeterministically
guess a string $
x\in\bits^{m_j}.$
Query the $m_j$ oracle coordinates $u_{j,r}$ and reconstruct
$Y_j(A)$.  Compute $f_{m_j}(x)$ and check whether
$f_{m_j}(x)=Y_j(A).$ 
Since $f_{m_j}$ is a permutation, exactly one guessed string can pass
this test, namely $x=X_j(A).$ 
On this unique surviving branch, query $A(v_{j,x})$ and accept if and
only if $A(v_{j,x})=b$. 

Thus there is either exactly one accepting branch or no accepting
branch.  Hence $L_b^A\in\UP^A.$

For the complement, if $w$ is not in the unary support $T_{\geq 2}$, accept
deterministically.  If $w=1^{t_j}$, perform the same unique-witness
procedure and accept on the unique valid preimage branch if and only
if $A(v_{j,x})=1-b.$ 
So $L_b^A\in\coUP^A.$
\end{proof}


\begin{lemma}
\label{lem:fair}
For every $j\geq 2$ and $b\in\bits$, $
\Prb_A[H_j=b]=\frac12.$ 
Moreover, the random variables $H_j$, $j\geq2$, are independent.
\end{lemma}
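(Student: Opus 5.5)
The plan is to condition on the block $Y_j(A)$ at length $n_j=t_j$. Fix $j\geq 2$. The coordinates $u_{j,1},\dots,u_{j,m_j}$ and $v_{j,x}$ for $x\in\bits^{m_j}$ are pairwise distinct strings of length $n_j$; they are distinct because the two families differ in their second bit and, inside each family, in the middle block. Hence, under the product measure, the vector $Y_j(A)$ is independent of the vector $(A(v_{j,x}))_{x\in\bits^{m_j}}$, and all these bits are independent fair coins. Condition on $Y_j(A)=y$ for a fixed $y$. Then $X_j(A)=f_{m_j}^{-1}(y)=:x_y$ is a fixed string, $Z_j(A)=v_{j,x_y}$ is a fixed coordinate not among the $u_{j,r}$, and $H_j(A)=A(v_{j,x_y})$ is a fair bit independent of the conditioning event. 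So $\Prb_A[H_j=b\mid Y_j=y]=\tfrac12$ for every $y$, and averaging over $y$ gives $\Prb_A[H_j=b]=\tfrac12$.

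For independence, note that $H_j$ is a function of the oracle restricted to $\bits^{n_j}$ only: both $Y_j$ and every candidate $v_{j,x}$ have length exactly $n_j$. The lengths $n_j=t_j$ are pairwise distinct, so the restrictions $A\upharpoonright\bits^{t_j}$ for $j\geq 2$ are mutually independent random vectors under the product measure. Functions of independent random vectors are independent, so $(H_j)_{j\geq2}$ is an independent family. Concretely, for any finite $J$ and any bits $(b_j)_{j\in J}$, $\Prb[\bigcap_{j\in J}\{H_j=b_j\}]=\prod_{j\in J}\Prb[H_j=b_j]=2^{-|J|}$.

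There is no real obstacle here. The one point that needs care is that $H_j$ reads the oracle at a random address $Z_j(A)$, so "a fixed fair bit" is not immediate. The conditioning on $Y_j$ is what settles this, and it works only because the address coordinates $v_{j,x}$ are disjoint from the $u_{j,r}$ that determine the address. I would check that disjointness explicitly from the prefixes $00$ and $01$ in the definitions of $u_{j,r}$ and $v_{j,x}$.
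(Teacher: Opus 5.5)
Your proof is correct and follows essentially the same route as the paper's. You condition on the $u$-block, which fixes $X_j$ and hence the address $v_{j,X_j}$; that coordinate is a fresh unbiased bit disjoint from the $u$-block, and averaging over $Y_j$ gives $\Prb_A[H_j=b]=\frac12$. Independence then follows because $H_j$ depends only on oracle coordinates of the distinct length $t_j$.
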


\begin{proof}
Condition on all oracle coordinates $u_{j,r}$.  Then $Y_j$ and hence
$X_j$ are fixed.  The coordinate
$v_{j,X_j}$ 
belongs to the disjoint $v$-block and remains an unbiased oracle bit.
Therefore $
\Pr[H_j=b\mid Y_j]=\frac12.$ 
Averaging over $Y_j$ proves the first claim.

For different indices $j$, the entire collections of oracle
coordinates used to define $H_j$ have different lengths and are
therefore disjoint.  Independence follows from the product structure
of the random oracle.
\end{proof}

\begin{lemma}
\label{prop:infinite2}
For each $b\in\bits$,
\begin{equation} 
\Prb_A[L_b^A\text{ is infinite}]=1.
\end{equation} 
Indeed, both $L_0^A$ and $L_1^A$ are infinite simultaneously with
probability one.
\end{lemma}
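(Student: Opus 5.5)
The plan is to follow the proof of Lemma~\ref{lem:parity-language-infinite}, using Lemma~\ref{lem:fair} in place of the parity computation. Fix $b\in\bits$. By definition~\eqref{4.8}, $L_b^A$ is infinite exactly when the event $D_j^{(b)}=\{H_j(A)=b\}$ occurs for infinitely many $j\geq 2$. Lemma~\ref{lem:fair} gives $\Prb_A(D_j^{(b)})=\frac12$ for every $j\geq2$, so $\sum_{j\geq 2}\Prb_A(D_j^{(b)})=\infty$.

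The independence hypothesis of Lemma~\ref{lem:second-borel-cantelli} needs a short justification. Lemma~\ref{lem:fair} states that the random variables $H_j$, $j\geq 2$, are independent. Each $D_j^{(b)}$ is an event in $\sigma(H_j)$, so the events $D_j^{(b)}$ are mutually independent. The underlying reason is that $H_j$ depends only on the oracle coordinates $u_{j,r}$ and $v_{j,x}$, all of length $t_j$, and these sets are disjoint for distinct $j$. The second Borel--Cantelli lemma then gives $\Prb_A(D_j^{(b)}\ \ii)=1$, so $L_b^A$ is infinite almost surely.

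For the simultaneous statement, we intersect the two probability-one events for $b=0$ and $b=1$. The intersection still has probability one, so $L_0^A$ and $L_1^A$ are both infinite with probability one.

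No step here is a real obstacle. The only point to watch is that Lemma~\ref{lem:fair} really supplies independence across $j$, not just unbiasedness of each $H_j$. Its proof handles this through the disjointness of the length-$t_j$ coordinate blocks.
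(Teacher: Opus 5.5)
Your proposal is correct and follows the paper's proof, which simply cites Lemma~\ref{lem:fair} (each $H_j$ is unbiased and the $H_j$ are independent) together with the second Borel--Cantelli lemma. Your remarks on independence across $j$ and on intersecting the two probability-one events just make explicit what the paper leaves implicit.
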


\begin{proof}
By Lemma~\ref{lem:fair} and the second Borel-Cantelli lemma. 
  
\end{proof}


Fix a deterministic oracle machine $M$ and an index $j$.  Let
\begin{equation}
E_j
=
\{M^A(1^{t_j})\text{ accepts}\}.
\label{7.1}
\end{equation}

For a fixed $b\in\bits$, define the false-positive event
\begin{equation}
B_{j,b}
=
E_j\cap\{H_j\neq b\}.
\label{7.2}
\end{equation}

Finally, define
\begin{equation}
T_j
=
\left\{
M^A(1^{t_j})
\text{ queries the distinguished coordinate }
Z_j(A)
\right\}.
\label{7.3}
\end{equation}

The event $T_j$ is the exceptional case.  If it occurs, the
computation has explicitly discovered an address containing the unique
preimage $X_j$.  If it does not occur, acceptance cannot depend on
the target bit $H_j$.
It is useful to formalize the last observation without informally
speaking about ``conditioning on all bits except a randomly selected
coordinate.''

Define $
\Phi_j:\Omega\longrightarrow\Omega$
by flipping the distinguished coordinate. Formally: 
\begin{equation}
(\Phi_jA)(z)
=
\begin{cases}
1-A(z),&z=Z_j(A),\\
A(z),&z\neq Z_j(A).
\end{cases}
\label{7.4}
\end{equation}

Since $Z_j(A)$ depends only on the $u$-coordinates, which are not
changed by $\Phi_j$, $
Z_j(\Phi_jA)=Z_j(A).$
Hence $
\Phi_j^2=\mathrm{id}.$ 
Moreover, $\Phi_j$ is measure-preserving: after conditioning on the
$u$-block, it simply flips one fixed unbiased coordinate of the
independent $v$-block.

If $A\not\in T_j$ then the computations relative to $A$ and $\Phi_j A$ receive identical answers on any queries, and therefore have the same acceptance outcome. Hence $E_j\Delta \Phi_{j}(E_j)\subseteq T_j$. 

The same coupling shows that $T_j$ itself is $\Phi_j$-invariant. If the distinguished coordinate is never queried relative to $A$, all answers agree in the two computations, so the same transcript is followed and the coordinate is not queried relative to $\Phi_jA$ either. If it is queried, consider the first such query: before it the machine has received no answer from the distinguished coordinate, so the two transcripts agree up to that point and the first distinguished query occurs at the same moment in both. Hence

$$T_j(A)=T_j(\Phi_jA).$$

\begin{lemma}
\label{lem:orbit-structure}
Fix $j\geq2$ and let $\mathcal F$ be a $\sigma$-field generated by
finitely many oracle coordinates, all of length $<t_{j+1}$, that includes
every coordinate on which $E_j$, $T_j$ and $H_j$ depend, and let
\[
  \mathcal G=\bigl\{\,C\in\mathcal F:\Phi_j^{-1}(C)=C\,\bigr\}.
\]
Then:
\begin{itemize}
\item[(i). ] $\Phi_j$ permutes the atoms of $\mathcal F$, and all atoms of
      $\mathcal F$ have the same positive probability;
\item[(ii)] Every $\Phi_j$-orbit of atoms of $\mathcal F$ has
exactly two elements, and the atoms of $\mathcal G$ are precisely
the sets $C\cup\Phi_j(C)$, 
where $C$ ranges over the atoms of $\mathcal F$.
\item[(iii). ] $T_j$ is a union of atoms of $\mathcal G$; in particular
      $T_j\in\mathcal G$.
\end{itemize}
\end{lemma}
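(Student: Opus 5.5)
We work with the finite set $I$ of oracle coordinates generating $\mathcal F$. Atoms of $\mathcal F$ are the sets $[\alpha]=\{A: A\upharpoonright I=\alpha\}$ for $\alpha\in\bits^I$. Each has probability $2^{-|I|}>0$, by the product measure; this is the second half of (i).

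\textbf{Step 1: $\Phi_j$ permutes the atoms of $\mathcal F$.} Since $H_j$ depends on the coordinates in $I$, and $H_j=A(Z_j(A))$ with $Z_j$ determined by the $u$-block, we may assume $I$ contains all $u_{j,r}$ and all $v_{j,x}$. The $v_{j,x}$ are the only possible values of $Z_j$, so adjoining them is harmless and keeps every coordinate of length $t_j<t_{j+1}$. Then for $A\in[\alpha]$, the address $Z_j(A)$ is a fixed coordinate $z_\alpha\in I$, read off from the $u$-part of $\alpha$. Hence $\Phi_j([\alpha])=[\alpha']$, where $\alpha'$ is $\alpha$ with coordinate $z_\alpha$ flipped. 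The map $\alpha\mapsto\alpha'$ is an involution on $\bits^I$, because $z_{\alpha'}=z_\alpha$: the $u$-part is untouched. So $\Phi_j$ maps atoms bijectively onto atoms, which gives (i).

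\textbf{Step 2: orbits and the atoms of $\mathcal G$.} Since $\alpha'\neq\alpha$, every orbit $\{[\alpha],[\alpha']\}$ has exactly two elements. A set $C\in\mathcal F$ is a union of atoms, and $\Phi_j^{-1}(C)=C$ holds exactly when that union is closed under $\alpha\mapsto\alpha'$. So $\mathcal G$ consists of the unions of orbit-pairs $[\alpha]\cup[\alpha']$. These pairs are disjoint, cover $\Omega$, and lie in $\mathcal G$, and every member of $\mathcal G$ is a union of them. They are therefore precisely the atoms of $\mathcal G$, which gives (ii).

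\textbf{Step 3: $T_j$.} By assumption $T_j\in\mathcal F$, so $T_j$ is a union of $\mathcal F$-atoms. The coupling argument just before the lemma shows $T_j(A)=T_j(\Phi_jA)$, i.e.\ $\Phi_j^{-1}(T_j)=T_j$. Hence $T_j\in\mathcal G$, and by Step 2 it is a union of $\mathcal G$-atoms, which gives (iii).

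\textbf{Main obstacle.} The delicate point is Step 1: checking that $Z_j$ is constant on each atom, so that $\Phi_j$ acts as a fixed coordinate flip atomwise. This needs the $u$-block, and the $v$-coordinates that can become distinguished, to lie in $I$; I would justify including them from the hypothesis that $\mathcal F$ contains everything $H_j$ depends on. A secondary point is that measurability of $T_j$ in $\mathcal F$ relies on the spacing $q(t_j)<t_{j+1}$, which I would take as part of the hypothesis.
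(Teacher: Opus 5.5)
Your proof is correct and follows the paper's argument essentially step for step. Atoms are cylinders on the generating coordinates, $\Phi_j$ flips on each atom one coordinate determined by the $u$-block, orbits are pairs, the atoms of $\mathcal G$ are the orbit-unions, and $T_j$ is a $\Phi_j$-invariant union of $\mathcal F$-atoms.

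There are two small differences. First, you get orbit size two directly from the flipped coordinate lying in $I$, whereas the paper derives it from $H_j(\Phi_jA)=1-H_j(A)$ together with $\mathcal F$-measurability of $H_j$. Second, you do not need to ``adjoin'' any coordinates, and doing so would change $\mathcal F$. The hypothesis already puts every $u_{j,r}$ and every $v_{j,x}$ into $I$, because $H_j$ genuinely depends on all of them, since $X_j$ ranges over all of $\bits^{m_j}$.
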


\begin{proof}
(i) The distinguished coordinate $Z_j(A)$ has length $t_j<t_{j+1}$ and
depends only on the $u$-coordinates $u_{j,1},\dots,u_{j,m_j}$, which
$\Phi_j$ does not change.  Hence $\Phi_j$ acts on the finitely many
coordinates generating $\mathcal F$ by flipping one of them, determined by
the others, and therefore maps each atom of $\mathcal F$ (a cylinder on
these coordinates) onto another atom.  All these cylinders fix the same
number of coordinates, so they have equal, positive probability under the
uniform product measure.

(ii) An orbit has one or two elements.  If an atom $C$ satisfied
$\Phi_j(C)=C$, then picking $A\in C$ we would get $H_j(\Phi_jA)=H_j(A)$,
because $H_j$ is constant on atoms of $\mathcal F$; but
$H_j(\Phi_jA)=1-H_j(A)$ by the definition~\eqref{7.4} of
$\Phi_j$, a contradiction.  So every orbit has exactly two elements.  
A set belongs to $\mathcal G$ exactly when it is a
$\Phi_j$-invariant union of atoms of $\mathcal F$, equivalently,
when it is a union of sets of the form $C\cup\Phi_j(C)$.  Since
every orbit has two elements, these orbit-unions are precisely the
atoms of $\mathcal G$.

(iii) $T_j$ is $\mathcal F$-measurable and 
$\Phi_j$-invariant; by (ii) it is therefore a union of orbits.
\end{proof}



\begin{lemma}
\label{lem:query-error}
Let $\mathcal F,\mathcal G$ be as in Lemma~\ref{lem:orbit-structure}, and let
\begin{equation} 
  P_j=\Pr(E_j\mid\mathcal G),
  Q_{j,b}=\Pr(B_{j,b}\mid\mathcal G).
\end{equation} 
Then, for every $b\in\{0,1\}$,
\begin{equation} 
  Q_{j,b}
  \geq
  \frac12\bigl(P_j-\mathbf 1_{T_j}\bigr).
  \label{foo-b} 
\end{equation} 
\end{lemma}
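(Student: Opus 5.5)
The plan is to work atom by atom of $\mathcal G$ and reduce the claim to the involution form of the rare-or-wrong principle, Lemma~\ref{lem:involution-rare-or-wrong}.

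\textbf{Structure of an atom.} By Lemma~\ref{lem:orbit-structure}(ii), every atom of $\mathcal G$ has the form $G=C\cup\Phi_j(C)$, where $C$ is an atom of $\mathcal F$. Since $C\neq\Phi_j(C)$, this is the disjoint union of two $\mathcal F$-atoms, and by part (i) they have equal probability. Hence conditioning on $G$ is the uniform measure on the two-point space $\Omega_G=\{C,\Phi_j(C)\}$. On it, $E_j$, $T_j$ and $H_j$ are constant on each point, because they are $\mathcal F$-measurable.

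\textbf{Hypotheses of Lemma~\ref{lem:involution-rare-or-wrong} on $\Omega_G$.} Take $\Phi=\Phi_j$, which swaps the two points and so is an involution on $\Omega_G$. Take $H=H_j$; the flip property~\eqref{eq:H-flips} holds because $H_j(\Phi_jA)=1-H_j(A)$. Take $E=E_j\cap G$ and $T=T_j\cap G$.
\begin{itemize}
\item $T$ is $\Phi_j$-invariant by Lemma~\ref{lem:orbit-structure}(iii). Concretely, $T_j\cap G$ is either $\emptyset$ or all of $\Omega_G$.
\item The containment $E\triangle\Phi(E)\subseteq T$ is the coupling fact already recorded: $E_j\triangle\Phi_j(E_j)\subseteq T_j$. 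Here we use that $\Phi_j(E_j)=\Phi_j^{-1}(E_j)$ since $\Phi_j$ is an involution, and that $A\in\Phi_j^{-1}(E_j)$ means $\Phi_jA\in E_j$.
\end{itemize}
The lemma, applied with $B_b=E\cap\{H\neq b\}=B_{j,b}\cap G$, then gives
\[
\Pr(B_{j,b}\mid G)\ \ge\ \tfrac12\bigl(\Pr(E_j\mid G)-\Pr(T_j\mid G)\bigr).
\]

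\textbf{Conclusion.} Since $T_j\in\mathcal G$, we have $\Pr(T_j\mid G)=\mathbf 1_{T_j}$ on $G$. Because $\mathcal G$ is finite, the conditional probabilities $P_j$ and $Q_{j,b}$ are the atomwise values. So the displayed inequality, holding on every atom, is exactly~\eqref{foo-b}.

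A direct two-case check is also available. If $G\subseteq T_j$, the right-hand side of~\eqref{foo-b} is $\le 0$ and there is nothing to prove. Otherwise $E_j$ is $\Phi_j$-invariant on $G$, so either $E_j\cap G=\emptyset$ or $E_j\supseteq G$. In the latter case exactly one of the two points has $H_j\neq b$, which gives $Q_{j,b}=\tfrac12=\tfrac12 P_j$.

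\textbf{Main obstacle.} The only delicate point is the measure-theoretic bookkeeping: conditioning on $\mathcal G$ must be genuinely uniform on each orbit. This is why Lemma~\ref{lem:orbit-structure}(i)–(ii) are needed, and why $\mathcal F$ has to be generated by finitely many coordinates that include those determining $Z_j$.
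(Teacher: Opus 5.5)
Your proposal is correct and matches the paper's proof essentially step for step. You split each $\mathcal G$-atom into the two equiprobable $\mathcal F$-atoms $C,\Phi_j(C)$, apply Lemma~\ref{lem:involution-rare-or-wrong} on the resulting two-point quotient, and use $T_j\in\mathcal G$ to replace $\Pr(T_j\mid G)$ by $\mathbf 1_{T_j}$. Your extra direct two-case check (either $G\subseteq T_j$, or $E_j$ is constant on $G$ and exactly one of its two points has $H_j\neq b$) is also valid and bypasses the involution lemma entirely.
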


\begin{proof}
Fix an atom $D$ of $\mathcal G$. By Lemma~\ref{lem:orbit-structure}(ii), there are
two distinct atoms $C_0,C_1$ of $\mathcal F$ such that $D=C_0\cup C_1$, 
$\Phi_j(C_0)=C_1$, 
$\Phi_j(C_1)=C_0.$
  
By Lemma~\ref{lem:orbit-structure}(i), $C_0$ and $C_1$ have equal positive probability.
Consequently, conditionally on $D$, the choice between the two
$\mathcal F$-atoms is uniform.

We now pass explicitly to the associated two-point quotient.  Let $\widehat\Omega_D=\{0,1\}$
with the uniform probability measure, and define the involution $\widehat\Phi_D(r)=1-r$. 
Since $E_j$, $T_j$, and $H_j$ are $\mathcal F$-measurable, they
are constant on each $C_r$. Define
$ \widehat E_D=
  \{r\in\widehat\Omega_D:C_r\subseteq E_j\},$ 
$ \widehat T_D=\{r\in\widehat\Omega_D:C_r\subseteq T_j\}$, 
and define $\widehat H_D:\widehat\Omega_D\to\{0,1\}$ by $
  \widehat H_D(r)=H_j(A), \text{for any }A\in C_r.$
These definitions do not depend on the choice of $A\in C_r$.

The hypotheses of Lemma~\ref{lem:involution-rare-or-wrong} hold on $\widehat\Omega_D$. Indeed,
from $H_j(\Phi_j A)=1-H_j(A)$
we obtain $\widehat H_D(\widehat\Phi_D(r))
  =1-\widehat H_D(r).$
The $\Phi_j$-invariance of $T_j$ gives
$\widehat\Phi_D(\widehat T_D)=\widehat T_D$, 
and the inclusion $
  E_j\mathbin{\triangle}\Phi_j(E_j)\subseteq T_j$ 
descends to
$\widehat E_D
  \mathbin{\triangle}
  \widehat\Phi_D(\widehat E_D)
  \subseteq
  \widehat T_D.
$

Put $\widehat B_{D,b}=\widehat E_D\cap\{\widehat H_D\neq b\}.$ 

Lemma~\ref{lem:involution-rare-or-wrong}, applied to the two-point probability space
$\widehat\Omega_D$, therefore gives
\begin{equation} 
  \Pr_{\widehat\Omega_D}(\widehat B_{D,b})
  \geq
  \frac12\left(
    \Pr_{\widehat\Omega_D}(\widehat E_D)
    -
    \Pr_{\widehat\Omega_D}(\widehat T_D)
  \right).
\end{equation} 
Because $C_0$ and $C_1$ have equal probability, the quotient
probabilities are exactly the corresponding probabilities
conditioned on $D$. Thus
\begin{equation} 
  \Pr(B_{j,b}\mid D)
  \geq
  \frac12\left(
    \Pr(E_j\mid D)-\Pr(T_j\mid D)
  \right).
\end{equation} 

Finally, $T_j\in\mathcal G$ by Lemma~\ref{lem:orbit-structure}(iii). Hence, on the
$\mathcal G$-atom $D$, $\Pr(T_j\mid D)=\mathbf 1_{T_j}$. Moreover, on $D$, $P_j=\Pr(E_j\mid D)$, 
$Q_{j,b}=\Pr(B_{j,b}\mid D)$.  It follows that
$Q_{j,b}
  \geq
  \frac12\bigl(P_j-\mathbf 1_{T_j}\bigr)$ 
on $D$. Since $D$ was an arbitrary atom of $\mathcal G$, the
inequality holds everywhere.
\end{proof}

We next prove that the exceptional events $T_j$ occur with summable total probability:

\begin{lemma}
\label{lem:invert}
Fix a deterministic polynomial-time oracle machine $M$. Then there is a probabilistic polynomial-time inverter $I_M$ such that,
for every sufficiently large $j$,
\begin{equation}
\Prb_A[T_j]
=
\Prb_{y\leftarrow U_{m_j},r}
\left[
f_{m_j}
\bigl(I_M(1^{m_j},y,r)\bigr)
=
y
\right].
\label{8.1}
\end{equation}
\end{lemma}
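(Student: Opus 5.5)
The plan is to build $I_M$ as a simulator of $M$ that lazily samples every oracle bit except the $u$-block, and to read off a preimage from the queries $M$ makes into the $v$-block. On input $(1^{m_j},y;r)$, with $y$ uniform in $\bits^{m_j}$, the inverter first recovers $j$ and $t_j=4m_j$ from $m_j$; this is possible in time polynomial in $m_j$ because $T$ is sparse and decidable (Lemma~\ref{lem:support-decidable}), and when $4m_j$ is not of the form $t_j$ it outputs $\perp$. It then runs $M(1^{t_j})$ and answers each oracle query $z$ as follows. If $z=u_{j,r'}$ it returns the $r'$-th bit of $y$. If $z=v_{j,x}$ for some $x\in\bits^{m_j}$ it first records $x$ as a candidate and then answers with a fresh random bit, memoized. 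Every other $z$ is answered with a fresh random bit drawn from $r$ and memoized. At the end, $I_M$ tests each recorded candidate $x$ by checking whether $f_{m_j}(x)=y$, and outputs the first one that passes, or $\perp$ if none does.

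The running time is polynomial in $m_j$. The machine $M$ runs in time $p(t_j)=p(4m_j)$. Deciding whether a query has the $u$- or $v$-format is syntactic and takes polynomial time, and $f_{m_j}$ is polynomial-time computable by Definition~\ref{def:owp}. For the unsimulated-length issue, note that $M$ can only make queries of length at most $p(t_j)$, so the lazy sampling stays polynomial.

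The key step is the distributional identity in \eqref{8.1}. The simulation feeds $M$ a lazily sampled oracle whose $u$-block spells out $y$, and all other coordinates are independent fair bits. Since $Y_j(A)$ is uniform, this is exactly the law of $A$ restricted to the coordinates $M$ can query. The coupling is: draw $A$, set $y=Y_j(A)$, and use the remaining bits of $A$ as the randomness $r$ answering queries in order. Under this coupling the transcript of the simulated $M$ coincides with that of $M^A(1^{t_j})$.

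The event $T_j$ says that $M^A$ queries $Z_j(A)=v_{j,X_j(A)}$, where $f_{m_j}(X_j(A))=Y_j(A)=y$. On $T_j$, therefore, $X_j$ appears among the candidates and passes the test. Conversely, a candidate $x$ passes only if $f_{m_j}(x)=y$, which by injectivity forces $x=X_j$, and being a candidate means $v_{j,x}=Z_j(A)$ was queried. So success of $I_M$ and the event $T_j$ coincide under the coupling, and their probabilities are equal.

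The main obstacle is making this coupling rigorous despite adaptivity. Query addresses depend on earlier answers, so "remaining bits of $A$" must be indexed by coordinate rather than by time. The plan is to define the simulated answer to a non-$u$ query $z$ as $A(z)$ itself, viewing $r$ as the restriction of $A$ to the non-$u$ coordinates of length at most $p(t_j)$. This sidesteps ordering entirely, because then the simulated computation is literally $M^A$. A small point to check is that the formal inverter only materializes polynomially many bits of $r$; the product structure of the measure justifies sampling them on demand. "Sufficiently large $j$" is used only so that the $u$/$v$ formats are well defined and $t_j\ge 8$.
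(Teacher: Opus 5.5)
Your proposal is correct and is essentially the paper's proof: you simulate $M(1^{4m})$ with the $u$-coordinates programmed to the bits of $y$ and every other coordinate lazily sampled and memoized, and the inverter succeeds exactly when some queried $v_{j,x}$ satisfies $f_{m}(x)=y$, which by injectivity of $f_m$ happens precisely on $T_j$. Deferring the preimage test to the end of the run and indexing the lazily sampled bits by coordinate to make the coupling explicit are only cosmetic refinements of the paper's argument that the simulated oracle has exactly the random-oracle distribution.
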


\begin{proof}
Given $1^m$ and $y\in\bits^m$, the inverter first sets
$n=4m.$
It tests whether $n=t_j$
for some $j\geq2$.  If not, it fails.

Suppose that indeed $n=t_j$ and hence $m=m_j$.  The inverter simulates $M(1^{n})$
together with a lazy random oracle.
For the programmed coordinates $u_{j,r}$, this lazy oracle responds
\begin{equation} 
A(u_{j,r})=y_r.
\label{8.2}
\end{equation} 
All other oracle queries are answered by independent unbiased bits,
with a table used to maintain consistency on repeated queries. Whenever $M$ asks a query of the form
$v_{j,x}, (x\in\bits^m)$, 
the inverter computes $f_m(x)$.  If
$f_m(x)=y$, 
it outputs $x$.

Because $f_m$ is a permutation, there is exactly one such string,
namely $x=f_m^{-1}(y).$
Thus the inverter succeeds precisely when the simulated machine asks
the distinguished query $v_{j,f_m^{-1}(y)}.$

It remains to compare distributions.  When $y$ is uniform, the
programmed coordinates (\ref{8.2}) are themselves independent unbiased
bits.  Every other oracle coordinate is generated independently and
uniformly by lazy sampling.  Hence the simulated oracle has exactly
the random-oracle distribution.  This proves (\ref{8.1}).

Finally, since $n=4m$, simulating a polynomial-time computation on
input $1^n$ takes time polynomial in $m$.  The lazy oracle table
contains only polynomially many strings of polynomial length.
Therefore $I_M$ is probabilistic polynomial time.
\end{proof}

\begin{corollary}
\label{cor:summableT}
For every polynomial-time oracle machine $M$,
\begin{equation}
\sum_{j=2}^\infty \Prb_A[T_j]<\infty.
\label{8.3}
\end{equation}
Consequently,
\begin{equation}
\Prb_A[T_j\text{ infinitely often}]=0.
\label{8.4}
\end{equation}
\label{cor-cantelli}
\end{corollary}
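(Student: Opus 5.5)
The plan is to combine Lemma~\ref{lem:invert} with the one-wayness hypothesis of Definition~\ref{def:owp}, and then apply the first Borel--Cantelli lemma. Fix the deterministic polynomial-time oracle machine $M$. Lemma~\ref{lem:invert} supplies a probabilistic polynomial-time inverter $I_M$ with $\Prb_A[T_j]$ equal to its success probability at length $m_j$, for all sufficiently large $j$. Because $f$ is a uniform one-way permutation, there is a negligible function $\nu=\nu_{I_M}$ such that
\[
\Prb_{y\leftarrow U_{m_j},r}\bigl[f_{m_j}(I_M(1^{m_j},y;r))=y\bigr]\leq \nu(m_j).
\]
One should check that $I_M$ matches the syntactic form required in Definition~\ref{def:owp}. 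Its input is $(1^m,y)$ and it has internal randomness $r$. On lengths $m$ with $4m\notin T_{\geq 2}$ it outputs $\perp$, and this is allowed by the convention $f_m(\perp)=\perp$. Hence $\Prb_A[T_j]\leq\nu(m_j)$ for all large $j$.

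Next, convert negligibility into summability along the sparse sequence $m_j=t_j/4$. Apply the definition of negligible with the polynomial $p(m)=m^2$. This gives $\nu(m_j)<m_j^{-2}$ for all large $j$. Since $m_j=2^{j!}/4\geq 2^{j}$ for $j\geq 3$, the tail terms are at most $2^{-2j}$, a geometric series. Any fixed polynomial would in fact suffice, because $m_j$ grows super-exponentially in $j$. The finitely many small $j$ contribute at most $1$ each, which proves~\eqref{8.3}.

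Finally, \eqref{8.4} follows from the first (convergence) Borel--Cantelli lemma, which needs no independence. Alternatively, it follows from Lemma~\ref{lem:CBC} with the trivial filtration, since the conditional sum then equals the unconditional sum and is finite.

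The only delicate point is the first step: making sure the quantifier order is right. The negligible function must be fixed by the single inverter $I_M$, which in turn depends only on $M$. It must also be legitimate to apply \eqref{2.1} only on the lengths $m_j$. This is immediate, because \eqref{2.1} holds for every $m$.
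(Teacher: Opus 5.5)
Your proposal is correct and follows essentially the same route as the paper. Lemma~\ref{lem:invert} combined with one-wayness gives $\Prb_A[T_j]\leq\nu(m_j)$, negligibility with the polynomial $m^2$ then gives summability, and the first Borel--Cantelli lemma gives the second claim. The only cosmetic difference is that you bound the tail by a geometric series using $m_j\geq 2^j$, whereas the paper compares it with $\zeta(2)$.
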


\begin{proof}
By one-wayness and Lemma~\ref{lem:invert}, there is a negligible 
function $\nu_M$ such that
$\Prb_A[T_j]\leq\nu_M(m_j).$ 
Since $\nu_M$ is negligible, eventually $\nu_{M}(k)\leq \frac{1}{k^2}$. The convergence of $\zeta(2)$ completes the proof. 
 
The first Borel-Cantelli lemma gives (\ref{8.4}). 
\end{proof}

\begin{remark} 
Corollary~\ref{cor-cantelli} is the precise point at which cryptographic average-case
one-wayness is used.  A purely worst-case noninvertibility assumption
would not imply (\ref{8.3}).
\end{remark}

Fix now a clocked polynomial-time oracle machine $M$, and let $q_M$ be an increasing polynomial $q_M(n)\geq n$ bound
both its running time and the lengths of its oracle queries.  

\begin{lemma}
\label{prop:fixedM}
Fix $b\in\bits$.  With probability one over $A$,
\begin{equation}
L(M^A)\subseteq L_b^A
\quad\Longrightarrow\quad
L(M^A)\text{ is finite}.
\label{11.1}
\end{equation}
\end{lemma}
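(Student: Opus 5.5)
We follow the template of the proofs of Theorems~\ref{thm:ph-pspace-immune} and~\ref{thm:pp:pspace}, with Lemma~\ref{lem:query-error} serving as the rare-or-wrong inequality and Corollary~\ref{cor:summableT} supplying the summable correction. By Lemma~\ref{lem:wide-spacing}, we discard finitely many indices and assume $q_M(t_j)<t_{j+1}$ for all remaining $j$. Then $E_j$, $T_j$, $H_j$ and $B_{j,b}$ all depend only on coordinates of length $<t_{j+1}$.

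\textbf{Step 1: the filtration.} Put $\mathcal F_j=\sigma(A(z):|z|<t_{j+1})$. Formally, $\mathcal F_j$ is generated by finitely many coordinates of length $<t_{j+1}$, including every coordinate on which $E_j,T_j,H_j$ depend. Let $\mathcal G_j=\{C\in\mathcal F_j:\Phi_j^{-1}(C)=C\}$, as in Lemma~\ref{lem:orbit-structure}. We must check the interleaving $\mathcal F_{j-1}\subseteq\mathcal G_j\subseteq\mathcal F_j$. The second inclusion is immediate. For the first, $\Phi_j$ modifies only a coordinate of length $t_j$, so it fixes every coordinate of length $<t_j$; hence every $\mathcal F_{j-1}$-set is $\Phi_j$-invariant and lies in $\mathcal F_j$. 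Thus $(\mathcal H_n)$, with $\mathcal H_{2j-1}=\mathcal G_j$ and $\mathcal H_{2j}=\mathcal F_j$, is a filtration, exactly as in \eqref{eq:interleaved-filtration}.

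\textbf{Step 2: the local inequality and Lévy.} Lemma~\ref{lem:query-error} gives $Q_{j,b}\ge\frac12(P_j-\mathbf 1_{T_j})$ everywhere, where $P_j=\Pr(E_j\mid\mathcal G_j)$ and $Q_{j,b}=\Pr(B_{j,b}\mid\mathcal G_j)$. Applying Lemma~\ref{lem:CBC} to $D_{2j}=E_j$ and to $D_{2j}=B_{j,b}$ (with empty odd-indexed events) yields a probability-one set on which
\[
E_j\ \ii\iff\textstyle\sum_jP_j=\infty,\qquad B_{j,b}\ \ii\iff\textstyle\sum_jQ_{j,b}=\infty .
\]
By Corollary~\ref{cor:summableT}, we may intersect further with the probability-one event that $T_j$ occurs only finitely often. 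On this event $\sum_j\mathbf 1_{T_j}<\infty$, and this pathwise summability is what replaces $\sum\eta_j<\infty$ from the earlier proofs.

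\textbf{Step 3: the two cases.} If $\sum_jP_j<\infty$, then $M$ accepts only finitely many $1^{t_j}$. Since $L_b^A\subseteq T_{\geq2}$, the inclusion $L(M^A)\subseteq L_b^A$ forces $L(M^A)$ to be finite. If instead $\sum_jP_j=\infty$, then $\sum_jQ_{j,b}\ge\frac12\sum_jP_j-\frac12\sum_j\mathbf 1_{T_j}=\infty$. Hence $B_{j,b}$ occurs infinitely often, so $M$ accepts some $1^{t_j}$ with $H_j\ne b$, i.e.\ a string outside $L_b^A$. This contradicts $L(M^A)\subseteq L_b^A$, and \eqref{11.1} follows.

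\textbf{Main obstacle.} The delicate point is Step~1. $\mathcal G_j$ is not a coordinate-generated $\sigma$-field; it is the invariant $\sigma$-field of an oracle-dependent involution. We must verify carefully that it still sits between $\mathcal F_{j-1}$ and $\mathcal F_j$, and that $\mathcal F_j$ is finitely generated so that Lemma~\ref{lem:orbit-structure} applies. We also need the correction term $\mathbf 1_{T_j}$ to be handled pathwise rather than in expectation. Both points follow from the facts that $Z_j$ has length exactly $t_j$ and that $T_j\in\mathcal G_j$ is finitely often almost surely.
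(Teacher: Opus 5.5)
Your proposal is correct and follows essentially the same route as the paper: the same filtration $\mathcal F_j$, the invariant $\sigma$-field $\mathcal G_j$ interleaved via the fact that $\Phi_j$ only flips a coordinate of length $t_j$, Lévy's lemma applied to $E_j$ and to $B_{j,b}$, and the two-case analysis based on Lemma~\ref{lem:query-error} together with Corollary~\ref{cor:summableT}. The only cosmetic difference is in the second case: you absorb the correction term as $\sum_j\mathbf 1_{T_j}<\infty$, whereas the paper notes that $T_j$ eventually fails, so that $Q_{j,b}\geq\frac12P_j$ for all large $j$.
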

\begin{proof} 
As in previous proofs, discard
a finite initial range so that $q_M(t_j)<t_{j+1}.$

Define
\begin{equation}
\mathcal F_j
=
\sigma\bigl(
A(z):|z|<t_{j+1}
\bigr).
\label{9.2}
\end{equation}
Then the acceptance event $E_j$ belongs to $\mathcal F_j$.

Let $\Phi_j$ be the involution from (\ref{7.4}), and define
\begin{equation}
\mathcal G_j
=
\left\{
C\in\mathcal F_j:
\Phi_j^{-1}(C)=C
\right\}.
\end{equation}

Since $q_M(t_j)<t_{j+1}$ and the coordinates $u_{j,r},v_{j,x}$ have length $t_j$, the pair $(\mathcal F_j,\mathcal G_j)$ satisfies the hypotheses of Lemma~\ref{lem:orbit-structure}.

Since $\Phi_j$ changes only an oracle bit of length exactly $t_j$, $
\mathcal F_{j-1}
\subseteq
\mathcal G_j
\subseteq
\mathcal F_j.$ 

Likewise, $
\mathcal F_j\subseteq\mathcal G_{j+1}$, 
because $\Phi_{j+1}$ changes a coordinate of length $t_{j+1}$, while
$\mathcal F_j$ contains only coordinates of smaller length. Hence 
\begin{equation} 
\mathcal F_{j-1}
\subseteq
\mathcal G_j
\subseteq
\mathcal F_j
\subseteq
\mathcal G_{j+1}
\subseteq\cdots
\label{filt-1} 
\end{equation} 
is an increasing interleaved filtration.


We now apply Lemma~\ref{lem:CBC} to filtration~(\ref{filt-1}), inserting the empty event at odd
indices.  Thus, if
\begin{equation}
P_j=\Pr(E_j\mid\mathcal G_j),
\label{10.2}
\end{equation}
then almost surely
\begin{equation}
E_j\text{ i.o.}
\quad\Longleftrightarrow\quad
\sum_jP_j=\infty.
\label{10.3}
\end{equation}

Similarly, putting
\begin{equation}
Q_{j,b}
=
\Pr(B_{j,b}\mid\mathcal G_j),
\label{10.4}
\end{equation}
we obtain almost surely
\begin{equation}
B_{j,b}\text{ i.o.}
\quad\Longleftrightarrow\quad
\sum_jQ_{j,b}=\infty.
\label{10.5}
\end{equation}

We now complete the proof of Lemma~\ref{prop:fixedM}. 
Take the intersection of the following probability-one events:

\begin{itemize}
\item[(*). ] by Corollary~\ref{cor:summableT}, only finitely many $T_j$
occur;
\item[(**). ] the equivalence (\ref{10.3}) holds;
\item[(***). ] the equivalence (\ref{10.5}) holds.
\end{itemize}

Fix an oracle $A$ in this intersection.  We now regard
$P_j(A)$ and $Q_{j,b}(A)$ as ordinary nonnegative real numbers.

There are two cases: 

\begin{itemize} 
\item{\textbf{Case 1:} $
\sum_jP_j(A)<\infty.$} 

By conditional Borel-Cantelli, only finitely many events $E_j$
occur.  Hence $M^A$ accepts only finitely many support strings
$1^{t_j}$. If $
L(M^A)\subseteq L_b^A$ 
then $M^A$ accepts no string outside $T$, because $L_b^A\subseteq T$.
Thus $L(M^A)$ is finite.

\medskip
\noindent
\item{\textbf{Case 2:} $\sum_jP_j(A)=\infty.$} 

Because of (*), $T_j(A)=0$ for all sufficiently large $j$. The query-or-error
inequality (\ref{foo-b}) gives eventually
\begin{equation} 
Q_{j,b}(A)
\geq
\frac12P_j(A).
\end{equation} 
Therefore
\begin{equation} 
\sum_jQ_{j,b}(A)=\infty.
\end{equation} 
Conditional Borel-Cantelli Lemma~\ref{lem:CBC} implies that $B_{j,b}$ occurs infinitely
often.  In particular, for some $j$,
$M^A(1^{t_j})\text{ accepts}$ 
but $H_j(A)\neq b.$ 
Hence $
1^{t_j}\in
L(M^A)\setminus L_b^A$, 
so $
L(M^A)\not\subseteq L_b^A.$ 
\end{itemize} 

Thus, whenever $L(M^A)\subseteq L_b^A$, only Case~1 is possible, and
then $L(M^A)$ is finite.

\end{proof} 

Intersecting the probability 1 events in Lemma~\ref{prop:fixedM} over the countably many clocked machines $M$ and both values of $b\in \bits$, and intersecting once more with the event from Lemma~\ref{prop:infinite2} yields a measure-one set of oracles such that both $L_0^A$ and $L_{1}^{A}$ are infinite and have no infinite subset in $\Pclass^A$,  completing the proof of Theorem~\ref{thm:npconp}.

\section{Proof of the Main Theorem}
\label{sec:proofmain}

We now prove Theorem~\ref{thm:main}.  Our approach does not use directly the Sipser functions from \cite{1}, but a family $G_{d,m}$ based on a minor reparametrization $\widehat{Sipser}_{d,m}$ of the RST family. More precisely 

\begin{equation}\label{eq:effective-orientation}
  G_{d,m}=
  \begin{cases}
    \widehat{Sipser}_{d,m},&d\text{ even},\\
    1-\widehat{Sipser}_{d,m},&d\text{ odd}.
  \end{cases}
\end{equation}

The reason for this choice is technical: we want that the boolean circuit representing our test functions has an OR gate at the top. In the second case, pushing the complement to the leaves exchanges AND
and OR and negates every input literal. 

As for the modified Sipser functions $\widehat{Sipser}_{d,m}$ used in the definition of $G_{d,m}$: 
\begin{itemize} 
\item The depth, alternating AND/OR structure, read-once property, root orientation, bottom fan-in \(m\) and intermediate fan-in $w$ are the same as for the family $Sipser_{d,m}$ in \cite{1}. 
\item RST’s top fan-in \(w_0\), defined indirectly through their restriction parameters, is replaced by a new fan-in $\widehat w_0$ that is computable in polynomial time by elementary rational arithmetic, and that agrees with $w_0$ up to a \emph{relative} error $2^{-\Omega_d(m)}$. The net result will be a class of formulas $\widehat{Sipser}_{d,m}$, that will, in general have a different number of variables than
$Sipser_{d,m}$, so a direct comparison between the two formulas is not possible. However, we will show that there is a way to couple their input variables, with uniform marginals, such that 
\begin{equation} 
\Prb_{X,Y,Z}[\widehat{Sipser}_{d,m}(X,Y)\neq Sipser_{d,m}(X,Z)]\leq 2^{-\Omega_{d}(m)}
\end{equation} 

This will allow us to transfer the correlation upper bound from \cite{1} (which is valid for the family $Sipser_{d,m}$) to the new class of functions $\widehat{Sipser}_{d,m}$ and, ultimately, to functions $G_{d,m}$.  

\item The reason we use the modified version of the Sipser function is that we need to ensure the polynomial time computability of the fan-ins, in order to prove that the test language we construct actually belongs to the claimed level of the relativized polynomial hierarchy. For the correlation bounds in \cite{1} ignoring this computability was harmless, but in our case it may affect the complexity of the constructed test language. Whereas for the intermediate fan-ins used in \cite{1} we could prove that they are computable in polynomial time \emph{exactly}, for the top fan-in the situation is more delicate, and motivated the use of a polynomial time computable approximation $\widehat{w_0}$ to $w_0$. It is worth stressing at the outset why, in our proof strategy,  an \emph{exact} value is indispensable for the intermediate fan-in $w$ but a relative approximation suffices for the top fan-in: the coupling of Lemma~\ref{lem:coupling} matches root subtrees \emph{together with all the leaves below them}, so a discrepancy in $w$ (or in the bottom fan-in $m$) would perturb every one of the $\approx 2^m$ root subtrees simultaneously, whereas a discrepancy $\Delta$ in the top fan-in leaves $\Delta$ subtrees unmatched, each of which contributes at most $p_{d-1}\leq 2^{1-m}$ to the disagreement probability\footnote{One can, indeed, use approximate fan-ins at all levels, at the expense of a more complicated coupling strategy.}. 
\end{itemize} 


\subsection{Polynomial time computability of the intermediate fan-ins.} 

We recall the following standard number-theoretic input (with a deliberately weak exponent, since that's all that we need): 

\begin{lemma}
\label{lem:effective-ln2}
There exist effectively computable constants $c>0$ and $Q_0\geq 1$
such that, for all integers $P$ and all integers $Q\geq Q_0$,
\[
   \left|\ln 2-\frac{P}{Q}\right|\geq c\cdot Q^{-5}.
\]
\end{lemma}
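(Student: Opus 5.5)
The plan is to prove an effective irrationality measure of at most $5$ for $\ln 2$, using the classical Legendre-polynomial (Alladi--Robinson) construction. That construction gives the exponent $\approx 4.62$, so the stated exponent $5$ leaves room to be generous with constants. An alternative is to simply invoke the literature (Alladi--Robinson, Rukhadze, Marcovecchio, all giving measures below $5$ with effective constants). To keep the paper self-contained, I would sketch the elementary argument.

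\emph{Step 1 (linear forms).} For $n\geq 1$ consider
\[
I_n=\int_0^1 \frac{x^n(1-x)^n}{(1+x)^{n+1}}\,dx .
\]
Expanding the numerator in powers of $(1+x)$ (equivalently, integrating by parts $n$ times against the Legendre polynomial $\frac{1}{n!}\frac{d^n}{dx^n}x^n(1-x)^n$) shows $I_n=a_n\ln 2-b_n$. Here $a_n\in\mathbb Z$ and $d_n b_n\in\mathbb Z$, where $d_n=\mathrm{lcm}(1,\dots,n)$. Moreover $I_n>0$, so the linear form is nonzero. The integrand is maximized where $x(1-x)/(1+x)$ is maximal, namely at $x=\sqrt2-1$, with value $(\sqrt2-1)^2=3-2\sqrt2$. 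Hence $I_n\leq (3-2\sqrt2)^n$. Also $|a_n|\leq (3+2\sqrt2)^{n+o(n)}$, and the prime number theorem in the effective Chebyshev form gives $d_n\leq e^{(1+o(1))n}$; a crude effective $d_n\leq 3^n$ also suffices to get below $5$. Set $q_n=d_na_n$ and $p_n=d_nb_n$. Then, with effective constants,
\[
q_n\leq e^{\alpha n},\qquad 0<|q_n\ln2-p_n|\leq e^{-\beta n},
\]
with $\alpha\approx 1+\ln(3+2\sqrt2)$ and $\beta\approx\ln(3+2\sqrt2)-1>0$.

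\emph{Step 2 (standard transfer lemma).} Given $P/Q$ with $Q\geq Q_0$, choose the least $n$ with $e^{-\beta n}<1/(2Q)$. Then either $p_nQ-q_nP\neq 0$ or the same holds for index $n+1$. This is the usual nonvanishing argument: consecutive forms are linearly independent because their determinant is nonzero, which one checks from the recurrence or from $I_n>0$ with distinct growth rates. If it fails one can instead take $n$ and $n+1$ and use that $(p_n,q_n)$, $(p_{n+1},q_{n+1})$ are not proportional. For a good index,
\[
1\leq |p_nQ-q_nP|\leq Q|q_n\ln2-p_n|+q_n|Q\ln2-P|,
\]
and the first term is $<1/2$. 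Therefore $|\ln2-P/Q|\geq 1/(2Qq_n)\geq c\,Q^{-1-\alpha/\beta}$. Numerically $1+\alpha/\beta\approx 4.62<5$, so the claimed bound with exponent $5$ holds, and all constants are explicit.

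\emph{Main obstacle.} The delicate point is making the arithmetic and growth estimates \emph{effective and sharp enough}. The margin is $4.62$ versus $5$, so a naive bound like $d_n\leq 4^n$ would break the exponent. I would use the effective Chebyshev bound $\psi(n)\leq 1.04\,n$ (Rosser--Schoenfeld), so $d_n\leq e^{1.04n}$. Carrying that through gives an exponent of about $4.7$. The second delicate point is the nonvanishing of $p_nQ-q_nP$ for one of two consecutive indices. I would handle it by checking that the $2\times2$ determinant $p_nq_{n+1}-p_{n+1}q_n$ is nonzero, via the three-term recurrence satisfied by the Legendre-type forms.
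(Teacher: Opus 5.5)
Your main line is correct, but it is not the paper's route. The paper proves nothing here: it simply cites Alladi--Robinson \cite{51} (Corollary~2 of Theorem~3), with $c=1/2000$ and $Q_0=1$. You instead reconstruct the argument behind that citation: Legendre forms $I_n=a_n\ln 2-b_n$, an explicit bound on $d_n$, and the two-consecutive-forms transfer. Your route buys independence from the reference and an exponent you can check is below $5$. The citation buys brevity and a concrete $c$, which you would only get by tracking constants. The lemma is used only with the deliberately weak exponent $5$, so either suffices.

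Three points in your write-up need repair.

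\emph{The $3^n$ aside is false.} The bound $d_n\le 3^n$ does \emph{not} get below $5$: it gives $\alpha=\ln 3+\ln(3+2\sqrt2)\approx 2.861$ and $\beta=\ln(3+2\sqrt2)-\ln 3\approx 0.664$, hence $1+\alpha/\beta\approx 5.31>5$. You genuinely need an explicit Chebyshev bound such as Rosser--Schoenfeld's $\psi(x)<1.03883\,x$. With it the exponent is about $4.87$, not $4.7$; this still works, but with less margin than you state.

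\emph{Make the bound on $a_n$ explicit.} For effectivity, replace $|a_n|\le(3+2\sqrt2)^{n+o(n)}$ by the exact bound from Laplace's integral: $a_n=P_n(3)=\frac1\pi\int_0^\pi(3+2\sqrt2\cos\phi)^n\,d\phi\le(3+2\sqrt2)^n$.

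\emph{Justify non-proportionality via the recurrence.} Non-proportionality of consecutive forms does not follow from ``$I_n>0$ with distinct growth rates''; use the recurrence, which does work. Write $\widetilde P_n(x)=\frac{1}{n!}\frac{d^n}{dx^n}x^n(1-x)^n=P_n(1-2x)$, so $I_n=\int_0^1\widetilde P_n(x)\,\frac{dx}{1+x}$. Since $\int_0^1\widetilde P_n=0$ for $n\ge 1$ and $\frac{1-2x}{1+x}=\frac{3}{1+x}-2$, the sequences $a_n$, $b_n$, $I_n$ all satisfy $(n+1)u_{n+1}=3(2n+1)u_n-nu_{n-1}$. Hence the Casoratian $W_n=a_nb_{n+1}-a_{n+1}b_n$ obeys $(n+1)W_n=nW_{n-1}$. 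With $a_0=1$, $b_0=0$, $a_1=3$, $b_1=2$ this gives $W_n=2/(n+1)\neq 0$. Therefore $p_nq_{n+1}-p_{n+1}q_n=-d_nd_{n+1}W_n\neq 0$, which is exactly what your transfer step needs.
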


\begin{proof} One can take $c=(2000)^{-1}$ and $Q_0=1$.  
See \cite{51}, Corollary 2 of Theorem 3, page 150. 
\end{proof}

\begin{lemma}
\label{lem:compute-w}
The integer
$w(m)=\left\lfloor m2^m\ln 2\right\rfloor$ 
is computable in deterministic time polynomial in $m$.
\end{lemma}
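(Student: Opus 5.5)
The plan is to compute a rational approximation of $\ln 2$ to sufficient precision and use Lemma~\ref{lem:effective-ln2} to certify that the floor is determined correctly. Write $K=m2^m$, so $w(m)=\lfloor K\ln 2\rfloor$. The integer $K$ has $O(m)$ bits, so it is computable in time polynomial in $m$. The only danger is that $K\ln 2$ lies extremely close to an integer $P$. In that case a crude approximation could round the wrong way.

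\textbf{Step 1 (separation from integers).} For every integer $P$, Lemma~\ref{lem:effective-ln2} with $Q=K\geq Q_0$ (true for all large $m$) gives
\[
|K\ln 2-P| = K\left|\ln 2-\frac PK\right|\geq cK^{-4}.
\]
Hence $K\ln 2$ is at distance at least $\gamma:=cK^{-4}=c\,m^{-4}2^{-4m}$ from every integer. In particular $K\ln 2$ is not an integer.

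\textbf{Step 2 (approximating $\ln 2$).} Compute a rational $r$ with $|r-\ln 2|\leq \gamma/(4K)$. This accuracy is $2^{-O(m)}$, so we need $b=O(m)$ bits. Use the rapidly convergent series
\[
\ln 2=\sum_{k\geq 1}\frac{1}{k2^k}.
\]
Truncate after $b+\lceil\log_2 b\rceil+O(1)$ terms; the tail is at most $2^{-b}$. Evaluate each term as a dyadic rational truncated to $2b$ bits, so the accumulated rounding error is at most $b\cdot 2^{-2b}$. Every number involved has $O(m)$ bits, so all of this is polynomial in $m$. Alternatively, one can sum the exact rationals with a common denominator, whose bit length is still $O(m^2)$.

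\textbf{Step 3 (rounding).} We have $|Kr-K\ln 2|\leq\gamma/4$, and $K\ln 2$ is at least $\gamma$ from every integer. So the interval $[Kr-\gamma/4,\,Kr+\gamma/4]$ contains no integer, and therefore $\lfloor Kr\rfloor=\lfloor K\ln 2\rfloor$. Computing $\lfloor Kr\rfloor$ for a rational $r$ with $O(m)$-bit numerator and denominator is integer division, which is polynomial time. For the finitely many small $m$ with $K<Q_0$, hardcode the values.

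\textbf{Main obstacle.} The only nontrivial point is Step 1, the irrationality-measure input; everything else is routine error bookkeeping. The key check is that the exponent $5$ in Lemma~\ref{lem:effective-ln2} yields a gap of only $2^{-O(m)}$ rather than something doubly exponential. This keeps the required precision, and hence the running time, polynomial in $m$.
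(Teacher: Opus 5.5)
Your proposal is correct and follows essentially the same route as the paper. The irrationality measure in Lemma~\ref{lem:effective-ln2} shows that $m2^m\ln 2$ is $2^{-O(m)}$-far from every integer, and $O(m)$ terms of $\sum_k 1/(k2^k)$ then determine the floor in polynomial time. The only cosmetic difference is that the paper brackets $\ln 2$ exactly between the rationals $\ell_s$ and $u_s=\ell_s+\frac{1}{(s+1)2^s}$ and observes that $[q_m\ell_s,q_mu_s]$ contains no integer, whereas you use a single approximation $r$ with an explicit error budget (optionally with dyadic rounding), which is equally valid.
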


\begin{proof}
Put $q_m=m\cdot 2^m$.  Lemma~\ref{lem:effective-ln2} gives, for every
integer $P$ and all sufficiently large $m$,
\begin{equation} 
   |q_m\ln 2-P|\geq c\cdot q_m^{-4}.
\end{equation} 
Since $q_m\leq 2^{2m}$ for $m\geq2$, there are effectively
computable constants $C>0$ and $m_0$ such that
$\operatorname{dist}(q_m\ln 2,\mathbb Z)
      \geq 2^{-Cm}
      \mbox{ for }(m\geq m_0)$. 

We use the positive-term expansion
\begin{equation} 
   \ln 2=\sum_{r=1}^{\infty}\frac{1}{r2^r}.
\end{equation} 
For $s\geq1$, define
\begin{equation} 
   \ell_s=\sum_{r=1}^{s}\frac{1}{r2^r},
   \qquad
   u_s=\ell_s+\frac{1}{(s+1)2^s}.
\end{equation} 
The tail satisfies
\begin{equation} 
  0<\ln2-\ell_s
    =\sum_{r=s+1}^{\infty}\frac{1}{r2^r}
    \leq
      \frac{1}{s+1}\sum_{r=s+1}^{\infty}2^{-r}
    =\frac{1}{(s+1)2^s},
\end{equation} 
and hence $\ell_s<\ln2\leq u_s$.

Take $s=\lceil(C+3)m\rceil$.  Then
\begin{equation} 
  q_m(u_s-\ell_s)
     \leq q_m\cdot 2^{-s}
     \leq 2^{2m-(C+3)m}
     <2^{-Cm}.
\end{equation} 
Thus the rational interval $ I_m=[q_m\ell_s,q_mu_s]$
contains $q_m\ln2$ and has length strictly smaller than
$\operatorname{dist}(q_m\ln2,\mathbb Z)$.  It follows that
$I_m$ contains no integer.  Consequently,
$\lfloor q_m\ell_s\rfloor
   =\lfloor q_m u_s\rfloor
   =\lfloor q_m\ln2\rfloor=w(m).$

The rationals $\ell_s$ and $u_s$ can be formed exactly using
$s=O(m)$ additions and multiplications of integers of polynomially
many bits.  Even the naive common-denominator computation uses only
$O(s^2)$ bits, and therefore runs in time polynomial in $m$.
The finitely many cases $m<m_0$ may be hardwired.
\end{proof}

For the top fan-in $w_0=\lceil R_m \rceil$ (see equation~(\ref{eq:w0-definition})) no analogue of Lemma~\ref{lem:compute-w} is available, and none is obviously within reach: unlike $w$, the number $R_m$ is not an explicit closed-form expression but the output of a chain of $\Theta(d)$ transcendental operations, so an exact computation would require a separation-from-the-integers statement for $R_m$ of the kind that Lemma~\ref{lem:effective-ln2} provides for $\ln2$.

Fortunately we do not need one.  As explained in the third bullet above, the top fan-in enters the argument only through the number of unmatched root subtrees in the coupling of Lemma~\ref{lem:coupling}, and each unmatched subtree contributes only $p_{d-1}\leq2^{1-m}$ to the disagreement probability.  It therefore suffices to approximate $R_m$ to \emph{relative} accuracy $2^{-\Omega_d(m)}$.  That is an easy task, because the leading term of $R_m$ is the single explicit constant $2^m\ln2$:

\begin{lemma}
\label{lem:R-asymptotic-lemma}
For every fixed $d\geq2$ and all sufficiently large $m$,
\begin{equation}
       \frac q2\leq t_r\leq2q
       \qquad(1\leq r\leq d-1),
  \label{eq:t-range-main}
\end{equation}
\begin{equation}
       2^{-m-1}\leq D_m\leq2^{2-m},
  \label{eq:D-range}
\end{equation}
and
\begin{equation}
  R_m=2^m\ln2\left(1+O_d\!\left(2^{-m/20}\right)\right).
  \label{eq:R-asymptotic}
\end{equation}
\end{lemma}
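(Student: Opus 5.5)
We will analyse the recursion \eqref{eq:last-t}--\eqref{eq:t-recursion} backwards from $r=d-1$ down to $r=1$. The induction hypothesis is that $t_r=q(1+O_d(2^{-m/20}))$, which is stronger than \eqref{eq:t-range-main}. We first compare the parameters. We have $q=2^{-m/2}$ and $w=\Theta(m2^m)$. Hence $\lambda=(\log_2 w)^{3/2}w^{-5/4}=2^{-5m/4+O(\log m)}$, so $\lambda/q=2^{-3m/4+o(m)}$, which is far smaller than $2^{-m/20}$.

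\textbf{Base case.} We have $t_{d-1}=(\rho-\lambda)/q=q-\lambda/q=q(1-\lambda/q^2)$, and $\lambda/q^2=2^{-m/4+o(m)}=O(2^{-m/20})$.

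\textbf{Inductive step.} Assume $t_r=q(1+\epsilon_r)$ with $|\epsilon_r|=O_d(2^{-m/20})$. Write $(1-t_r)^{qw}=\exp(qw\ln(1-t_r))$. Since $\ln(1-t_r)=-t_r-O(t_r^2)$, the exponent is $-q^2w(1+\epsilon_r)+O(q^3w)$. Now $q^2w=2^{-m}\lfloor m2^m\ln2\rfloor=m\ln 2-O(2^{-m})$, so $(1-t_r)^{qw}=2^{-m}\cdot\exp\bigl(-m\ln2\cdot\epsilon_r+O(m2^{-m/2})\bigr)$.

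This step is the main obstacle. The relative error $\epsilon_r$ gets multiplied by $m\ln 2$ at each level, so after $d$ levels it is $O_d(m^d2^{-m/4+o(m)})$ rather than $O(2^{-m/20})$. To keep the induction closed I will track the explicit bound $|\epsilon_r|\le K_r m^{d-r}2^{-m/4+o(m)}$, which is at most $2^{-m/20}$ for large $m$. With this bound the exponential factor is $1+O_d(m\,|\epsilon_r|)$. Then $t_{r-1}=(2^{-m}(1+O(m\epsilon_r))-\lambda)/q=q(1+O(m\epsilon_r)+O(\lambda/q^2))$, and the same form of bound propagates to $t_{r-1}$. This gives \eqref{eq:t-range-main}, with margin to spare.

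\textbf{Bounds on $D_m$.} We have $D_m=-q\ln(1-t_1)=qt_1(1+O(t_1))=q^2(1+O_d(2^{-m/20}))=2^{-m}(1+o(1))$. This lies in $[2^{-m-1},2^{2-m}]$, which is \eqref{eq:D-range}.

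\textbf{Asymptotics of $R_m$.} Finally $R_m=\ln2/D_m=2^m\ln2\,(1+O_d(2^{-m/20}))^{-1}$, and inverting the factor gives \eqref{eq:R-asymptotic}.

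The only delicate points are the bookkeeping of the compounding relative error through the $d-1$ levels, and the exact identity $q^2w=m\ln2+O(2^{-m})$. That identity is what makes the recursion approximately stationary at $t\approx q$, because it gives $(1-q)^{qw}\approx e^{-q^2w}=2^{-m}=q^2$. Every other estimate is a routine Taylor expansion with errors exponentially small in $m$.
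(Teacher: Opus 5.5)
Your proof is correct, but it takes a different route from the paper's.

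\textbf{How the paper does it.} The paper does not analyse the recursion itself. It imports the estimate $|t_r-q|\le q^{1.1}$ for $1\le r\le d-1$ from Lemma~7.1 of RST (proved in their Appendix~A, valid for $2\le d\le cm/\log m$). From this it reads off $q/2\le t_r\le 2q$ and $t_1=q\bigl(1+O(q^{0.1})\bigr)$. It then handles $D_m$ and $R_m$ exactly as you do, via the sandwich $x\le-\ln(1-x)\le x/(1-x)$ and inversion of $D_m=\rho\bigl(1+O(2^{-m/20})\bigr)$.

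\textbf{How yours differs.} You re-derive a fixed-$d$ version of that lemma by backward induction. The key observation is that $q^2w=m\ln2-O(2^{-m})$ makes $q$ an approximate fixed point of $t\mapsto\bigl((1-t)^{qw}-\lambda\bigr)/q$. Relative errors are amplified by about $m\ln2$ per level and fed by the source term $\lambda/q^2=\Theta(m^{1/4}2^{-m/4})$. You correctly saw that a bare $O(2^{-m/20})$ hypothesis does not close under this step, and you strengthened it to keep the induction closed.

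\textbf{What each route buys.} Yours is self-contained. It does not rely on the external lemma's hypotheses matching the parameters as transcribed here, which matters since the paper itself notes that the RST versions use different parameter values. It also gives a sharper relative error, of order $m^{O(d)}2^{-m/4}$ rather than $2^{-m/20}$. That would tighten $\Delta_m$ in Lemma~\ref{lem:w0-distance} to about $m^{O(d)}2^{3m/4}$. The paper's citation buys brevity and uniformity in $d$ up to $cm/\log m$. Your compounding factor $(m\ln2)^{d}$ does not give that uniformity, but it is not needed here because $d$ is fixed.

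\textbf{Two small fixes for the write-up.} First, make the $o(m)$ in your exponents explicit; it is $O(\log m)$. Second, carry the $O(m2^{-m/2})$ term from the exponent into the expression for $t_{r-1}$. It is dominated by $\lambda/q^2$, so nothing changes.
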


\begin{proof}
Recall that
\begin{equation}
 \rho=2^{-m},\qquad q=2^{-m/2},\qquad
 w=\lfloor m2^m\ln2\rfloor,
 \qquad
 \lambda=\frac{(\log_2 w)^{3/2}}{w^{5/4}},
\end{equation}
and
\begin{equation}
 t_{d-1}=\frac{\rho-\lambda}{q},
 \qquad
 t_{r-1}=\frac{(1-t_r)^{qw}-\lambda}{q}
       \quad(2\leq r\leq d-1),
\end{equation}

By~\cite[Lemma~7.1]{1} (whose proof is given in
\cite[Appendix~A]{1}) there is a universal constant
$c>0$ such that
\begin{equation}
  |t_r-q|\le q^{1.1}
  \qquad (1\le r\le d-1)
  \label{eq:RST-t-estimate}
\end{equation}
whenever $2\le d\le cm/\log m$.  Since $d$ is fixed, this condition
holds for all sufficiently large $m$.  Furthermore,
$q^{0.1}=2^{-m/20}\le 1/2$ for $m\ge20$.  Consequently,
\[
  t_r\ge q-q^{1.1}=q(1-q^{0.1})\ge\frac q2
  \qquad\text{and}\qquad
  t_r\le q+q^{1.1}=q(1+q^{0.1})
       \le\frac{3q}{2}\le2q,
\]
which is~\eqref{eq:t-range-main}; in particular, after increasing
$m_0=m_0(d)$ if necessary, $2q\leq1/2$.

The inequalities
\begin{equation}
       x\leq-\ln(1-x)\leq\frac{x}{1-x}
       \qquad(0\leq x<1)
  \label{eq:log-sandwich}
\end{equation}
applied with $x=t_1$ therefore give
\begin{equation}
   \frac12q^2
       \leq D_m
       \leq\frac{2q^2}{1-2q}
       \leq4q^2,
\end{equation}
that is, \eqref{eq:D-range}.  For the sharper
statement~\eqref{eq:R-asymptotic}, write
$-\ln(1-t_1)=t_1(1+\xi)$ with, by~\eqref{eq:log-sandwich},
$0\leq\xi\leq t_1/(1-t_1)\leq4q$.  Since
$t_1=q\bigl(1+O(q^{0.1})\bigr)$ by~\eqref{eq:RST-t-estimate} and
$q\leq q^{0.1}=2^{-m/20}$, we get
\[
  D_m=q\,t_1(1+\xi)
     =q^2\bigl(1+O(2^{-m/20})\bigr)
     =\rho\bigl(1+O(2^{-m/20})\bigr),
\]
and \eqref{eq:R-asymptotic} follows from $R_m=\ln2/D_m$ and
$\rho^{-1}=2^m$.
\end{proof}

Note that~\eqref{eq:R-asymptotic} carries an absolute error
$O(2^{19m/20})$ and therefore says nothing whatsoever about the integer
$w_0$ itself.  What it does deliver is precisely the relative accuracy
that the coupling can absorb, and that is what we exploit in the next
subsection.

\subsection{The reparameterized Sipser functions.}

For integers $m\geq 2$, $d\geq2$, let $\ell_m$ denote the $2m$-th partial sum
of the series for $\ln2$ already used in the proof of
Lemma~\ref{lem:compute-w}, and let $\widehat w_0$ be the integer it
determines:
\begin{equation}
  \widehat{\ell_m}=\ell_{2m}=\sum_{r=1}^{2m}\frac{1}{r2^{r}},
  \qquad
  \widehat w_0=\left\lfloor 2^{m}\widehat{\ell_m}\right\rfloor .
  \label{eq:modified-w0}
\end{equation}
By the tail estimate already recorded in the proof of
Lemma~\ref{lem:compute-w} (with $s=2m$),
\begin{equation}
  0<\ln2-\widehat{\ell_m}\leq\frac{1}{(2m+1)2^{2m}}<2^{-2m}.
  \label{eq:ell-tail}
\end{equation}

Let $\widehat{Sipser}_{d,m}$ be the regular, read-once, monotone
formula with $d$ alternating layers, bottom AND gates, and fan-ins
\begin{equation}
  \widehat{w_0},\underbrace{w,\ldots,w}_{d-2\text{ intermediate levels}}, m.
  \label{eq:effective-fanin-vector}
\end{equation}

The intermediate range is, of course, empty when $d=2$. Thus the root is OR when
$d$ is even and AND when $d$ is odd. Write
\begin{equation}\label{eq:effective-N}
  N(d,m)=m\cdot \widehat{w_0}\cdot w^{d-2}.
\end{equation}

\subsection{Useful properties of the reparameterized Sipser functions.} 

Next, we record some useful properties of the family $\widehat{Sipser}_{d,m}$:  

First of all, the change in top-width is, as promised, small enough to make the upper bound work: 

\begin{lemma}
\label{lem:w0-distance}
The integer $\widehat w_0$ is computable in deterministic time polynomial
in $m$, and $\tfrac12<\widehat{\ell_m}<\ln2<1$.  Moreover, for every fixed $d\geq 2$ there exists a constant $C_d>0$ such that, for all sufficiently
large $m$,
\begin{equation}
  |\widehat w_0-w_0|\leq\Delta_m:=C_d\cdot 2^{\frac{19}{20}m}.
  \label{eq:w0-distance}
\end{equation}
\end{lemma}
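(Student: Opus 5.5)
The lemma has three parts: polynomial-time computability of $\widehat w_0$, the bounds $\tfrac12<\widehat{\ell_m}<\ln2<1$, and the distance estimate \eqref{eq:w0-distance}. I would prove them in that order, using only the exact rational form of $\widehat{\ell_m}$, the tail estimate \eqref{eq:ell-tail}, and the asymptotics \eqref{eq:R-asymptotic} of Lemma~\ref{lem:R-asymptotic-lemma}.

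\textbf{Computability.} As in the proof of Lemma~\ref{lem:compute-w}, the partial sum $\widehat{\ell_m}=\sum_{r=1}^{2m}1/(r2^r)$ can be formed exactly as a fraction. Its common denominator divides $2^{2m}\operatorname{lcm}(1,\dots,2m)$, which has $O(m)$ bits (even the crude bound $(2m)!\,2^{2m}$ gives $O(m\log m)$ bits). Multiplying by $2^m$ and taking the floor is a single integer division on polynomially many bits. Hence $\widehat w_0$ is computable in time polynomial in $m$.

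\textbf{Bounds on $\widehat{\ell_m}$.} The lower bound is immediate: the first term of the sum is $1/2$, and every later term is positive, so $\widehat{\ell_m}>\tfrac12$ for $m\geq1$. The middle inequality $\widehat{\ell_m}<\ln 2$ is the strict positivity of the tail in \eqref{eq:ell-tail}. Finally, $\ln 2<1$ is standard.

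\textbf{Distance estimate.} I would use the triangle inequality
\[
|\widehat w_0-w_0|\leq |\widehat w_0-2^m\ln2|+|2^m\ln2-R_m|+|R_m-w_0|.
\]
The three terms are bounded as follows.
\begin{enumerate}[label=(\roman*)]
\item Since $\widehat w_0=\lfloor 2^m\widehat{\ell_m}\rfloor$, we have $|\widehat w_0-2^m\widehat{\ell_m}|<1$. By \eqref{eq:ell-tail}, $2^m|\ln2-\widehat{\ell_m}|<2^{-m}$. So the first term is at most $2$.
\item By \eqref{eq:R-asymptotic}, the second term is $2^m\ln2\cdot O_d(2^{-m/20})=O_d(2^{19m/20})$.
\item Since $w_0=\lceil R_m\rceil$ by \eqref{eq:w0-definition}, the third term is less than $1$.
\end{enumerate}
Summing gives $|\widehat w_0-w_0|\leq 3+O_d(2^{19m/20})$, and the additive constant is absorbed into $C_d2^{19m/20}$ for large $m$. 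This only requires that the second equality in \eqref{eq:w0-definition} holds, i.e.\ that $w_0=\lceil R_m\rceil$; this follows because $(1-t_1)^{qs}\leq\tfrac12$ is equivalent to $s\geq \ln 2/D_m$, given $0<t_1<1$ from \eqref{eq:t-range-main}.

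\textbf{Main obstacle.} The only delicate point is that the implicit constant in \eqref{eq:R-asymptotic} depends on $d$. This forces $C_d$ to depend on $d$, which the statement allows. No step requires anything beyond the results already established.
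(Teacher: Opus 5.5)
Your proposal is correct and follows essentially the same route as the paper. It uses exact rational evaluation of $\widehat{\ell_m}$ over a denominator dividing $2^{2m}\operatorname{lcm}(1,\dots,2m)$, the tail estimate for the bounds, and the same three-term triangle inequality through $2^m\ln 2$ and $R_m$, giving $3+O_d(2^{19m/20})$. The differences are cosmetic: you use the first term $1/2$ plus positivity instead of the paper's bound $5/8$ from the first two terms, and you also check that $w_0=\lceil R_m\rceil$, which the paper takes directly from the definition.
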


\begin{proof}
\emph{Computability.}  The rational $\widehat{\ell_m}$ has reduced denominator
dividing $\operatorname{lcm}(1,\dots,2m)\cdot2^{2m}$, a number of $O(m)$
bits by the elementary bound $\operatorname{lcm}(1,\dots,n)\leq4^{n}$.
Summing the $2m$ terms over this common denominator and performing a
single integer division computes $\lfloor2^m\widehat{\ell_m}\rfloor$ exactly, using
$O(m)$ arithmetic operations on $O(m)$-bit integers.

\emph{Size.}  The first two terms already give
$\widehat{\ell_m}\geq\frac12+\frac18=\frac58>\frac12$, and $\widehat{\ell_m}<\ln2<1$
by~\eqref{eq:ell-tail}.

\emph{Accuracy.}  By~\eqref{eq:ell-tail},
$|2^m\widehat{\ell_m}-2^m\ln2|\leq2^m\cdot2^{-2m}=2^{-m}\leq1$, and taking the
floor costs at most $1$ further, so $|\widehat w_0-2^m\ln2|\leq2$.
By~\eqref{eq:R-asymptotic}, $|R_m-2^m\ln2|\leq2^{m}\ln2\cdot
O_d(2^{-m/20})$, so there exists $K_{d}>0$ such that 
for all sufficiently large $m$, 
$|R_m-2^m\ln2| \leq K_{d}\cdot 2^{\frac{19}{20}m}$. 
On the other hand $|w_0-R_m|\leq1$, because $w_0=\lceil R_m\rceil$.  Adding the three
estimates gives
$|\widehat w_0-w_0|\leq3+K_{d}\cdot 2^{\frac{19}{20}m}\leq C_{d}\cdot 2^{\frac{19}{20}m}$, 
for all sufficiently large $m$, where e.g. $C_{d}=K_{d}+3$. 
\end{proof}

\begin{lemma} 
\label{lemma-validity} 
$\widehat w_0$ and all the other fan-ins of
$\widehat\Sipser_{d,m}$, $N(d,m)$, validity of a child index, the rank of
a leaf, and the rank of a valid root-to-leaf path are computable in time polynomial in $m$ (that may depend on $d$). 

Finally, for every fixed $d\geq2$ and all sufficiently large $m$,
\begin{equation}
\label{eq:effective-size}
  2^{-(d-1)}m^{d-1}2^{(d-1)m}
  \leq N(d,m)
  \leq m^{d-1}2^{(d-1)m}.
\end{equation}
Consequently,
\begin{equation}
\label{eq:N-coarse-bounds}
  2^m\leq N(d,m)\leq 2^{2dm}
\end{equation}
for all sufficiently large $m$.
\end{lemma}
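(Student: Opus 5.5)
The plan is to split the statement into an effective-computability part and a size-estimate part, and handle both by elementary means.

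\emph{Computability.} I treat the fan-in vector \eqref{eq:effective-fanin-vector} first. The bottom fan-in $m$ is trivial. The intermediate fan-in $w$ is computable in time polynomial in $m$ by Lemma~\ref{lem:compute-w}, and $\widehat w_0$ is handled by Lemma~\ref{lem:w0-distance}. Each fan-in is an integer with $O(m)$ bits, since $w\le m2^m$ and $\widehat w_0\le 2^m$. So $N(d,m)=m\,\widehat w_0\,w^{d-2}$ is a product of $d$ such integers, has $O(dm)$ bits, and is computed with $d-1$ multiplications.

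I address gates and leaves by mixed-radix vectors. A node at depth $r$ is a tuple $(i_1,\dots,i_r)$ with $0\le i_1<\widehat w_0$, $0\le i_k<w$ for $2\le k\le d-1$, and $0\le i_d<m$. Checking validity of a child index then means comparing $O(m)$-bit integers against the precomputed fan-in of the appropriate level. The rank of a leaf, equivalently of a valid root-to-leaf path, is the mixed-radix value
\[
 i_1\,w^{d-2}m+i_2\,w^{d-3}m+\cdots+i_{d-1}m+i_d .
\]
This takes $O(d)$ arithmetic operations on $O(dm)$-bit integers, so it is polynomial in $m$ with a degree depending on $d$. Inverting the rank back to a path is the same computation via successive divisions.

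\emph{Size bounds.} For large $m$, Lemma~\ref{lem:w0-distance} gives $\widehat w_0=\lfloor 2^m\widehat{\ell_m}\rfloor$ with $\tfrac12<\widehat{\ell_m}<\ln 2<1$. Hence $\widehat w_0\le 2^m$, and also $\widehat w_0\ge \tfrac58 2^m-1\ge 2^{m-1}$ for $m\ge 2$. Similarly, since $\tfrac12<\ln 2<1$, I get $m2^m/2\le w\le m2^m$ once $m2^m(\ln 2-\tfrac12)\ge 1$. Multiplying these bounds gives
\[
 N(d,m)\le m\cdot 2^m\cdot (m2^m)^{d-2}=m^{d-1}2^{(d-1)m},
\]
and
\[
 N(d,m)\ge m\cdot 2^{m-1}(m2^m/2)^{d-2}=2^{-(d-1)}m^{d-1}2^{(d-1)m},
\]
which is \eqref{eq:effective-size}.

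For \eqref{eq:N-coarse-bounds}, the lower bound needs $2^{-(d-1)}m^{d-1}2^{(d-1)m}\ge 2^m$. For $d\ge 2$ this follows from $(m/2)^{d-1}\ge 1$ when $m\ge 2$. The upper bound follows from $m^{d-1}\le 2^{(d-1)m}$, which gives $N(d,m)\le 2^{2(d-1)m}\le 2^{2dm}$.

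The only step needing care is the lower bound on $\widehat w_0$. There the floor and the bound $\widehat{\ell_m}\ge \tfrac58$ from Lemma~\ref{lem:w0-distance} supply the necessary slack. Everything else is routine bookkeeping.
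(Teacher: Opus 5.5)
Your proposal is correct and follows essentially the same route as the paper. Computability comes from Lemma~\ref{lem:compute-w}, Lemma~\ref{lem:w0-distance} and the mixed-radix rank formula, and the size bounds come from multiplying $2^{m-1}\le\widehat w_0\le 2^m$ with $\tfrac12 m2^m\le w\le m2^m$. One harmless slip: $\tfrac58 2^m-1\ge 2^{m-1}$ needs $m\ge 3$, not $m\ge 2$; this does not matter since only large $m$ is claimed, and the paper avoids it by noting that $2^m\widehat{\ell_m}>2^{m-1}$ with $2^{m-1}$ an integer forces $\widehat w_0\ge 2^{m-1}$ for every $m$.
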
 
\begin{proof} 
Polynomial-time computability of fan-ins follows from
Lemma~\ref{lem:compute-w}, Lemma~\ref{lem:w0-distance},  and the
fact that the bottom fan-in is $m$.

To index bottom leaves we use zero-based child indices. Denote by $f_r$ the fan-in of the $r$-th level from the top, according to equation~(\ref{eq:effective-fanin-vector}). A path $(i_0,\ldots,i_{d-1})$ from the root to the given leaf is valid
exactly when $0\leq i_r<f_r$ for every $r$, and its rank is
\begin{equation}\label{eq:effective-rank}
  \operatorname{rank}_m(i_0,\ldots,i_{d-1})
  =\sum_{r=0}^{d-1} i_r\prod_{s=r+1}^{d-1}f_s.
\end{equation}
This is a bijection from the valid paths onto
$\{0,\ldots,N(d,m)-1\}$. Since $d$ is fixed, its evaluation and all
validity tests use a fixed number of arithmetic operations on
$O_d(m)$-bit integers. 

Directly from~\eqref{eq:modified-w0} and $\tfrac12<\ell_m<1$
(Lemma~\ref{lem:w0-distance}) we obtain, for every $m\geq1$,
\begin{equation}
       2^{m-1}\leq\widehat w_0\leq2^m,
  \label{eq:w0-hat-size}
\end{equation}
since $2^m\ell_m>2^{m-1}$ forces
$\lfloor2^m\ell_m\rfloor\geq2^{m-1}$ (as $2^{m-1}$ is an integer), while
$\widehat w_0\leq2^m\ell_m<2^m$.
Moreover, $w=\lfloor m\cdot 2^m\ln2\rfloor$, 
and therefore, again using $1/2<\ln2<1$,
\begin{equation} 
       \frac12\cdot m\cdot 2^m\leq w\leq m\cdot 2^m
\end{equation}
for all sufficiently large $m$.  Recalling that
$N(d,m)=m\cdot \widehat{w_0}\cdot w^{d-2}$, 
we obtain
\begin{equation} 
 N(d,m)\geq
 m\,2^{m-1}
 \left(\frac12m2^m\right)^{d-2}
 =2^{-(d-1)}m^{d-1}2^{(d-1)m},
\end{equation} 
\begin{equation} 
 N(d,m) \leq
 m\,2^m(m2^m)^{d-2}
 =m^{d-1}2^{(d-1)m}.
\end{equation} 
This proves \eqref{eq:effective-size}.  In particular, $
  N(d,m)=\Theta_d\!\left(m^{d-1}2^{(d-1)m}\right)$, 
$\log N(d,m)=\Theta_d(m)$, and all fan-ins and $N(d,m)$ have $O_d(m)$ bits. 
\end{proof} 

\begin{lemma}
\label{lem:input-counts}
For fixed $d$,
\begin{equation}
  \frac{N(d,m)}{N_{RST}(d,m)}=1+O_{d}\!\left(2^{-m/20}\right).
  \label{eq:input-count-comparison}
\end{equation}
In particular $N(d,m)\leq2N_{RST}(d,m)$ and $w_0\geq2^{m-2}$ for all
sufficiently large $m$.
\end{lemma}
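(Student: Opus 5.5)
Since $N(d,m)=m\,\widehat w_0\,w^{d-2}$ and $N_{RST}(d,m)=m\,w_0\,w^{d-2}$ share the bottom fan-in $m$ and the intermediate factor $w^{d-2}$ (the intermediate fan-in $w$ is the same exact integer in both formulas), the plan is to cancel these common factors. This gives
\[
  \frac{N(d,m)}{N_{RST}(d,m)}=\frac{\widehat w_0}{w_0}=1+\frac{\widehat w_0-w_0}{w_0},
\]
so everything reduces to comparing the two top fan-ins. Lemma~\ref{lem:w0-distance} already supplies the numerator bound $|\widehat w_0-w_0|\leq C_d\,2^{\frac{19}{20}m}$. It remains to bound $w_0$ from below by a constant multiple of $2^m$.

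For the lower bound on $w_0$, I would use $w_0=\lceil R_m\rceil\geq R_m=\ln2/D_m$ together with the upper estimate $D_m\leq 2^{2-m}$ from \eqref{eq:D-range} in Lemma~\ref{lem:R-asymptotic-lemma}. This yields $w_0\geq 2^{m-2}\ln 2$. That is slightly short of the claimed $2^{m-2}$, because $\ln 2<1$. To close the gap I would invoke the sharper asymptotic \eqref{eq:R-asymptotic}, namely $R_m=2^m\ln2\,(1+O_d(2^{-m/20}))$. It gives $R_m\geq \tfrac12\cdot 2^m\ln2\geq 2^{m-2}$ for all sufficiently large $m$, since $\tfrac12\ln 2>\tfrac14$. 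Hence $w_0\geq 2^{m-2}$.

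Combining the two bounds,
\[
  \left|\frac{\widehat w_0-w_0}{w_0}\right|\leq \frac{C_d\,2^{\frac{19}{20}m}}{2^{m-2}}=4C_d\,2^{-m/20},
\]
which is \eqref{eq:input-count-comparison}. For $m$ large enough the right-hand side is at most $1$, so $N(d,m)\leq 2N_{RST}(d,m)$. There is no real obstacle in this argument. The only point that needs care is to use the relative asymptotic \eqref{eq:R-asymptotic}, rather than the cruder \eqref{eq:D-range}, when obtaining the constant in $w_0\geq 2^{m-2}$. One must also confirm that the intermediate fan-ins really coincide exactly, which holds because both formulas use the same $w=\lfloor m2^m\ln2\rfloor$.
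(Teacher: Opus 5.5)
Your proposal is correct and follows the paper's proof: you cancel the common factors to get $\widehat w_0/w_0$, bound the numerator via Lemma~\ref{lem:w0-distance}, and divide by $w_0\geq 2^{m-2}$. The only difference is cosmetic: you obtain $w_0\geq 2^{m-2}$ directly from $w_0\geq R_m$ and \eqref{eq:R-asymptotic}, whereas the paper uses \eqref{eq:w0-hat-size} to get $w_0\geq\widehat w_0-\Delta_m\geq 2^{m-1}-C_d2^{19m/20}$, and both routes are valid.
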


\begin{proof}
The two formulas differ only in their top fan-in, so
\[
  \frac{N(d,m)}{N_{RST}(d,m)}=\frac{\widehat w_0}{w_0}.
\]
By~\eqref{eq:w0-hat-size} and Lemma~\ref{lem:w0-distance},
$w_0\geq\widehat w_0-\Delta_m\geq2^{m-1}-C_{d}\cdot 2^{\frac{19}{20}m}\geq2^{m-2}$
for all sufficiently large $m$.  Hence
\[
  \left|\frac{\widehat w_0}{w_0}-1\right|
  \leq\frac{\Delta_m}{w_0}\leq \frac{C_{d}\cdot 2^{19m/20}}{2^{m-2}}=4C_{d}\cdot 2^{-m/20}. 
\]
which is~\eqref{eq:input-count-comparison}.
\end{proof}

The most important result is that the reparameterized class of formulas can be coupled with the original one: 

\begin{lemma} 
There is a way to couple (some of) the variables in the two formulas so that, if $X$ denotes the variables in common to both formulas, $Y$ the set of variables private to the first formula and $Z$ the set of variables private to the second formula 
\begin{equation} 
\Prb_{X,Y,Z}[\widehat{Sipser}_{d,m}(X,Y)\neq Sipser_{d,m}(X,Z)]\leq 2^{-\Omega_{d}(m)}
\label{foofoo}
\end{equation} 
\label{lem:coupling} 
\end{lemma}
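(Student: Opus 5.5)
The plan is to couple the two formulas root subtree by root subtree. Both $\widehat{Sipser}_{d,m}$ and $Sipser_{d,m}$ have a root gate (OR if $d$ even, AND if $d$ odd) whose children are identical copies of the same depth-$(d-1)$ read-once formula $F_{d-1,m}$, with fan-ins $w,\ldots,w,m$. The two formulas differ only in the number of children: $\widehat w_0$ versus $w_0$. Put $k=\min(\widehat w_0,w_0)$. The first $k$ root subtrees of the two formulas receive the same block of variables; these form $X$, consisting of $k\cdot w^{d-2}m$ uniform bits. The remaining $\widehat w_0-k$ subtrees of $\widehat{Sipser}_{d,m}$ (if any) get fresh variables $Y$, and the remaining $w_0-k$ subtrees of $Sipser_{d,m}$ get fresh variables $Z$, all independent and uniform. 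Since both formulas are read-once, each marginal is uniform on its own input cube, as required.

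Now compare the two outputs. Let $G(X)$ be the gate of the root type applied to the $k$ shared subtrees. Then $\widehat{Sipser}_{d,m}(X,Y)=G(X)\circ G'(Y)$ and $Sipser_{d,m}(X,Z)=G(X)\circ G''(Z)$, where $\circ$ is the root operation. One of $G',G''$ is the empty gate, equal to its neutral constant. The two values can disagree only if some unmatched subtree evaluates to the \emph{non-neutral} value: $1$ for an OR root, $0$ for an AND root. A union bound over the at most $|\widehat w_0-w_0|\le\Delta_m$ unmatched subtrees (Lemma~\ref{lem:w0-distance}) then gives
\[
\Prb[\widehat{Sipser}_{d,m}(X,Y)\neq Sipser_{d,m}(X,Z)]\le \Delta_m\cdot p_{d-1},
\]
where $p_{d-1}$ is the probability that one depth-$(d-1)$ subtree takes the non-neutral value.

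The remaining, and main, step is the bound $p_{d-1}\le 2^{1-m}$. Here the RST calibration enters. The children of the root are gates of the opposite type to the root. At the bottom, an AND of $m$ uniform bits is $1$ with probability exactly $2^{-m}=\rho$. I would propagate through the intermediate layers using the recursion behind the $t_r$: a layer of fan-in $w$ maps a child probability of roughly $q\,t_r$ (the "rare" value) to $1-(1-qt_r)^{w}$, or its complement. The estimate $|t_r-q|\le q^{1.1}$ from \cite[Lemma~7.1]{1}, quoted in Lemma~\ref{lem:R-asymptotic-lemma}, keeps the probability that a depth-$(d-1)$ subtree takes its "rare" value within a factor $1+O(2^{-m/20})$ of $qt_1\approx 2^{-m}$. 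The same fact appears as $D_m=-q\ln(1-t_1)\le 2^{2-m}$ in \eqref{eq:D-range}. One must check that the rare value is the non-neutral value for the root. This holds because RST choose $w_0$ so that $(1-t_1)^{qw_0}\approx 1/2$, i.e.\ the root is balanced exactly when each child is non-neutral with probability about $D_m\approx 2^{-m}$. So I would take $p_{d-1}\le 1-(1-t_1)^{q}\le D_m\le 2^{2-m}$. The constant $2$ versus $4$ is immaterial. The delicate part is matching the RST restriction-based parametrization to the plain uniform-input probabilities, which is the step I expect to be the obstacle. If a direct match fails, I would instead bound $p_{d-1}$ by an independent induction on depth, using $w\approx m2^m\ln 2$: an OR of $w$ children, each $1$ with probability $\approx 2^{-m}$, is $0$ with probability $\approx 2^{-m}$, and the corrections are controlled by the floor in $w$.

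Combining the two bounds gives a disagreement probability of at most $C_d 2^{19m/20}\cdot 2^{2-m}=4C_d2^{-m/20}=2^{-\Omega_d(m)}$, which is \eqref{foofoo}.
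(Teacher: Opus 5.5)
Your coupling (match the first $\min(w_0,\widehat w_0)$ root subtrees together with all leaves below them, fresh bits elsewhere) and your union bound over the at most $\Delta_m$ unmatched root children are exactly the paper's argument, and reducing everything to $p_{d-1}=O(2^{-m})$ is right. The gap is your primary argument for that bound. The RST parameters $t_r$ are biases in their random-projection/completion scheme. They satisfy $qt_{r-1}=(1-t_r)^{qw}-\lambda$, which is not the uniform-input recursion, so the estimate $|t_r-q|\le q^{1.1}$ says nothing directly about $p_{d-1}$. Moreover, the inequality $p_{d-1}\le 1-(1-t_1)^q$ on which your chain rests is false. Already for $d=2$ you have $p_1=2^{-m}=\rho$ exactly, whereas $t_1=(\rho-\lambda)/q$ gives
\[
1-(1-t_1)^q<-q\ln(1-t_1)=D_m=\rho-\lambda+O(q^3)<\rho ,
\]
because $\lambda=\Theta(m^{1/4}2^{-5m/4})$ dominates $q^3=2^{-3m/2}$. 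The target bound $p_{d-1}\le 2^{2-m}$ is still true, but it does not follow this way.

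Your fallback closes the gap, and it is precisely the paper's Claim~\ref{lem:root-child-rarity}; but it needs more than ``the floor in $w$''. Let $p_h$ be the uniform probability that a height-$h$ non-root subformula takes the value activating its parent. Then $p_1=\rho$ and $p_h=(1-p_{h-1})^w$, since a gate activates its parent iff none of its $w$ children activates it. This also settles automatically your separate question of whether the rare value is the root-activating one. Write $p_h=\rho(1+\varepsilon_h)$ and $w=m\ln2\,\rho^{-1}+\theta_m$ with $|\theta_m|\le1$. Then
\[
\ln\frac{p_h}{\rho}=-m\ln2\,\varepsilon_{h-1}-\theta_m\rho(1+\varepsilon_{h-1})+O(w\rho^2).
\]
So at each level the relative error is multiplied by about $m\ln 2$. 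It also picks up an additive $O(m\rho)$ from the quadratic term of $\ln(1-x)$, in addition to the $O(\rho)$ from the floor. Since $\varepsilon_1=0$, induction gives $|\varepsilon_h|=O_h(m^{h-1}\rho)$, hence $p_{d-1}\le2\rho$ for fixed $d$ and large $m$.

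If you want to keep your $w_0$-based intuition, you can make it rigorous through uniform balance instead of the $t_r$. Corollary~\ref{cor:rst} applied to the two constant circuits gives $(1-p_{d-1})^{w_0}\ge\frac12-N_{RST}(d,m)^{-b_d}$. Hence $p_{d-1}\le(\ln2+o(1))/w_0\le(1+o(1))D_m$, because $w_0\ge R_m=\ln2/D_m$. This route is valid, but it invokes the full RST correlation bound for what the elementary induction gives directly.
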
 
\begin{proof} Define the following coupling: 
\begin{definition}
Put $r=\min(w_0,\widehat w_0)$.  For each $1\leq i\leq r$, identify the
$i$th root child of $\Sipser_{d,m}$ with the $i$th root child of
$\widehat\Sipser_{d,m}$, including all leaves below it.  Assign the same
independent unbiased bits to the corresponding leaves.  The
$|\widehat w_0-w_0|$ remaining root children of whichever formula has the
larger top fan-in are left \emph{unmatched}; assign independent unbiased
bits to every leaf in each unmatched subtree.
\label{def:root-coupling}
\end{definition}

In the following result we use the following conventions: an input to a gate \emph{activates its parent} if the gate is an OR and the input is one or if the gate is an AND and the input is zero. 

\begin{claim}
\label{lem:root-child-rarity} Let $p_h$ be the probability that a height-$h$ non-root RST subformula takes
the value activating its parent. 
For fixed $d$,
\begin{equation}
  p_h=\rho\left(1+O_d(m^{2d}\rho)\right)
  \qquad(1\leq h\leq d-1).
  \label{eq:p-estimate}
\end{equation}
In particular,
\begin{equation}
  p_{d-1}\leq2\rho
  \label{eq:p-upper}
\end{equation}
for all sufficiently large $m$.
\end{claim}

\begin{proof} 

We prove the following stronger assertion by induction on $h$: for
every fixed $h\geq1$ there is a constant $C_h>0$ such that, for all
sufficiently large $m$,
\begin{equation}
  p_h=\rho(1+\varepsilon_h),
  \qquad
  |\varepsilon_h|\leq C_h m^{h-1}\rho.
  \label{eq:strong-ph-estimate}
\end{equation}
The constants and the threshold for $m$ may depend on $h$.  Since
only $1\leq h\leq d-1$ are considered and $d$ is fixed, they may
eventually be chosen uniformly in terms of $d$.

For $h=1$ we have $p_1=2^{-m}=\rho$, 
so~\eqref{eq:strong-ph-estimate} holds with $\varepsilon_1=0$.

Now let $2\leq h\leq d-1$ and suppose that
$p_{h-1}=\rho(1+\varepsilon_{h-1}),
  |\varepsilon_{h-1}|
     \leq C_{h-1}m^{h-2}\rho.$ 
Since $m^{h-2}\rho=m^{h-2}2^{-m}=o(1)$, after increasing the
threshold for $m$ we have $|\varepsilon_{h-1}|\leq1/2$ and hence
$0\leq p_{h-1}\leq\frac32\rho\leq2\rho.$ 
Thus $p_{h-1}\leq1/2$ for sufficiently large $m$, and we may use $\ln(1-x)=-x+O(x^2), (0\leq x\leq1/2)$, 
with an absolute implied constant.

Write $w=m(\ln2)\rho^{-1}+\theta_m, |\theta_m|\leq1,$
which follows from
$w=\lfloor m2^m\ln2\rfloor$ and $\rho=2^{-m}$.  Using the recursion
$p_h=(1-p_{h-1})^w$, we obtain
\begin{align*}
 & \ln\frac{p_h}{\rho}  =w\ln(1-p_{h-1})+m\ln2
   =-wp_{h-1}+m\ln2+O(wp_{h-1}^2) = \\ 
 &  =-m(\ln2)\varepsilon_{h-1}
     -\theta_m\rho(1+\varepsilon_{h-1})
     +O(w\rho^2).
\end{align*}
Here we used $p_{h-1}=O(\rho)$.  Moreover,
$w\rho^2
   =\bigl(m(\ln2)\rho^{-1}+O(1)\bigr)\rho^2
   =O(m\rho).$ 
Consequently,
\begin{align*}
 \ln\frac{p_h}{\rho}
   &=O_h\bigl(m|\varepsilon_{h-1}|\bigr)+O(m\rho)=O_h\bigl(m^{h-1}\rho\bigr).
\end{align*}
Because $m^{h-1}\rho=o(1)$, exponentiation and
$e^x=1+O(x)$ for $x=o(1)$ give
\[
  p_h
   =\rho\cdot \exp\bigl(O_h(m^{h-1}\rho)\bigr)
   =\rho\bigl(1+O_h(m^{h-1}\rho)\bigr).
\]
This proves~\eqref{eq:strong-ph-estimate} and completes the induction.

Finally, since $1\leq h\leq d-1$ and $d$ is fixed, the finitely many
implied constants may be replaced by a single constant depending
only on $d$.  Furthermore, $
  m^{h-1}\rho\leq m^{d-2}\rho\leq m^{2d}\rho$. 
Therefore
\[
  p_h=\rho\bigl(1+O_d(m^{2d}\rho)\bigr),
  \qquad 1\leq h\leq d-1,
\]
which proves~(\ref{eq:p-estimate}).  Since $m^{2d}\rho=m^{2d}2^{-m}=o(1)$, it also
follows that $p_{d-1}\leq2\rho$ for all sufficiently large $m$.
\end{proof}

Now we're in position to prove the lemma: given the coupling described in
Definition~\ref{def:root-coupling}, every matched root subtree agrees
\emph{identically} with its coupled match, not merely with high
probability.  Hence the two root outputs can differ only if some
\emph{unmatched} root child takes the value activating the root.  By
Lemma~\ref{lem:w0-distance} there are at most
$\Delta_m=C_{d}\cdot 2^{\frac{19}{20}m}$ unmatched children, and each takes the
activating value with probability $p_{d-1}\leq2\rho=2^{1-m}$ by
Claim~\ref{lem:root-child-rarity}.  A union bound (no independence is
needed) therefore gives
\begin{equation}
  \Prb_{X,Y,Z}[\widehat{Sipser}_{d,m}(X,Y)\neq Sipser_{d,m}(X,Z)]
  \leq\Delta_m\cdot2^{1-m}
  =2C_{d}\cdot 2^{-m/20},
  \label{eq:coupling-quantitative}
\end{equation}
which proves~\eqref{foofoo}.
\end{proof}

\subsection{Hardness result for family $G_{d,m}$.} 
The following is the specific hardness result we employ:  

\begin{lemma}[Hardness of the uniform family $G_{d,m}$]
\label{lem:effective-sipser}
For every fixed $d\geq2$, there are constants $\alpha_d,\beta_d>0$ such that, for
all sufficiently large $m$, every normal form circuit
\[
  C:\bits^{N(d,m)}\longrightarrow\bits
\]
with an AND root, depth at most $d$ and 
$
  \operatorname{size}(C)\leq2^{N(d,m)^{\alpha_d}}$ 
satisfies
\begin{equation}
  \Prb[C(U)=G_{d,m}(U)]
  \leq\frac12+ N(d,m)^{-\beta_d}.
  \label{eq:G-correlation}
\end{equation}
Here $U=(X,Y)$ is a random input in $\bits^{N(d,m)}$, $X$ consists of the first $min(N(d,m),N_{RST}(d,m))$ bits, and $Y$ of the remaining (if any) bits. 

The dual statement holds for $1-G_{d,m}$ against OR-root circuits.
\end{lemma}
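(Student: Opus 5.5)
Our plan is to transfer Corollary~\ref{cor:rst} from $\Sipser_{d,m}$ to $\widehat\Sipser_{d,m}$ through the coupling of Lemma~\ref{lem:coupling}, and then orient the result by complementation. We treat $d$ even first, so that $G_{d,m}=\widehat\Sipser_{d,m}$ has an OR root and $C$ has the opposite, AND, root, as Corollary~\ref{cor:rst} requires. Let $C$ be a normal-form AND-root circuit on $N=N(d,m)$ inputs of depth at most $d$ and size at most $2^{N^{\alpha_d}}$. Realize the coupling of Definition~\ref{def:root-coupling} so that the common block $X$ consists exactly of the first $\min(N,N_{RST})$ coordinates. We use the rank ordering of Lemma~\ref{lemma-validity}; root children are blocks of consecutive ranks, so matched subtrees occupy the same initial segment in both formulas. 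Write $Y$ for the private variables of $\widehat\Sipser$ and $Z$ for those of $\Sipser$.

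Next we build a circuit on the RST input $(X,Z)$. By averaging over $Y$, there is a fixed assignment $y^\ast$ with
\[\Prb_{X,Z}[C(X,y^\ast)=\Sipser_{d,m}(X,Z)]\geq \Prb_{X,Y,Z}[C(X,Y)=\Sipser_{d,m}(X,Z)],\]
because $Y$ is independent of $(X,Z)$. The circuit $C'(X,Z):=C(X,y^\ast)$ is obtained by substituting constants. It is still in normal form (constant leaves allowed), has an AND root, depth at most $d$, and size at most $2^{N^{\alpha_d}}$. By Lemma~\ref{lem:input-counts}, $N\leq 2N_{RST}$, so choosing $\alpha_d<a_d=1/(7(d-1))$ gives $2^{N^{\alpha_d}}\leq 2^{N_{RST}^{a_d}}$ for large $m$. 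Corollary~\ref{cor:rst} then yields
\[\Prb[C'=\Sipser_{d,m}]\leq \tfrac12+N_{RST}^{-b_d}.\]
By the triangle inequality and Lemma~\ref{lem:coupling},
\[\Prb[C(X,Y)=\widehat\Sipser_{d,m}(X,Y)]\leq \Prb[C(X,Y)=\Sipser_{d,m}(X,Z)]+2C_d2^{-m/20}\leq \tfrac12+N_{RST}^{-b_d}+2C_d2^{-m/20}.\]

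It remains to absorb both error terms into $N^{-\beta_d}$. By \eqref{eq:N-coarse-bounds}, $N\leq 2^{2dm}$, so $2^{-m/20}\leq N^{-1/(40d)}$. Also $N_{RST}\geq N/2$, so $N_{RST}^{-b_d}\leq 2^{b_d}N^{-b_d}$. Taking $\beta_d<\min(b_d,1/(40d))$ and $m$ large handles the constants.

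For $d$ odd, $\widehat\Sipser_{d,m}$ has an AND root and $G_{d,m}=1-\widehat\Sipser_{d,m}$. Since $\Prb[C=1-f]=\Prb[(1-C)=f]$, it suffices to bound agreement between $1-C$ and $\widehat\Sipser$. Here $1-C$, with negations pushed to the leaves, is a normal-form OR-root circuit of the same depth and size, which is opposite to the AND root of $\widehat\Sipser$. The even-case argument then applies verbatim. The dual statement, for $1-G_{d,m}$ against OR-root circuits, follows by the same complementation.

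The main obstacle is the bookkeeping that the coupling's common block really is the first $\min(N,N_{RST})$ coordinates under the chosen indexing, and that fixing $Y$ preserves independence and uniformity of $(X,Z)$. Each of the other steps is a routine parameter comparison.
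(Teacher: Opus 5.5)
Your proposal is correct and follows essentially the same route as the paper. You couple $\widehat\Sipser_{d,m}$ with $\Sipser_{d,m}$ via Definition~\ref{def:root-coupling}, then restrict the private variables $Y$ of $C$ to constants and invoke Corollary~\ref{cor:rst}; your choice of a single best fixing $y^\ast$ is equivalent to the paper's averaging over all fixings. You then transfer via the triangle inequality and Lemma~\ref{lem:coupling}, absorb the error terms into $N(d,m)^{-\beta_d}$ (your use of $N_{RST}\geq N/2$ and $N\leq 2^{2dm}$ is slightly more direct than the paper's $\theta_d$ bookkeeping), and handle odd $d$ by complementing $C$.
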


\begin{proof}
We first treat the case where the top gate is an OR, 
that is $G_{d,m}(U)=\widehat{Sipser}_{d,m}(U)$. 

Couple the exact and modified formulas $Sipser_{d,m}(X,Z)$ and $\widehat{Sipser}_{d,m}(X,Y)$ as in
Definition~\ref{def:root-coupling}.  Denote their outputs by $S$ and
$\widehat S$.

On the other hand, circuit $C(X,Y)$ and the formula $Sipser_{d,m}(X,Z)$ may as well have input
sets of slightly different sizes, $N(d,m)$ variables for $C$, $N_{RST}(d,m)$ for $Sipser_{d,m}$.  We will couple them as well, as follows: 

\begin{claim}
\label{claim:C-vs-RST}
We can couple the circuit $C(X,Y)$ and the formula $Sipser_{d,m}(X,Z)$ so that
\begin{equation}
  \Prb[C(X,Y)=Sipser_{d,m}(X,Z)]\leq\frac12+N_{RST}(d,m)^{-b_d},
  \label{eq:RST-after-fixing}
\end{equation}
where $b_d>0$ is supplied by Proposition~\ref{thm:RST}. 
\end{claim}
\begin{proof}
There are two cases:
\begin{itemize}
\item[-] If the modified formula $\widehat{Sipser}_{d,m}(X,Y)$ has the unmatched root subtrees (i.e. $Y\neq \emptyset$, $Z=\emptyset$) fix all
variables in $Y$.  For each fixing $u$, the restricted circuit
$C_u$ has no larger depth or size.  $C_u$ is in normal form (Definition~\ref{def:normal-form}), possibly with constant leaves. Thus the bound from Corollary~\ref{cor:rst} applies to every $C_u$.  Averaging over the independent uniform
choice of $u$ yields~\eqref{eq:RST-after-fixing}.
\item[-] The same reasoning applies
when the exact formula $Sipser_{d,m}(X,Z)$, rather than the modified formula $\widehat{Sipser}_{d,m}(X,Y)$, has the unmatched
subtrees, that is $Z\neq \emptyset$ and $Y=\emptyset$: in that case $C(X,Y)$ simply ignores the variables in $Z$, and we apply again Corollary~\ref{cor:rst}.
\end{itemize}
\end{proof}

Now triangle inequality gives:
\begin{equation}
  \mathbf 1_{\{C(X,Y)=\widehat{Sipser}_{d,m}(X,Y)\}}
  \leq
  \mathbf 1_{\{C(X,Y)=Sipser_{d,m}(X,Z)\}}+
  \mathbf 1_{\{Sipser_{d,m}(X,Z)\neq\widehat{Sipser}_{d,m}(X,Y)\}},
\end{equation}
so Lemma~\ref{lem:coupling}, in the quantitative
form~\eqref{eq:coupling-quantitative}, gives
\begin{equation}
  \Prb[C(U)=\widehat{Sipser}_{d,m}(U)]
  \leq\Prb[C(X,Y)=Sipser_{d,m}(X,Z)]+2C_{d}\cdot 2^{-m/20}.
  \label{eq:triangle-transfer}
\end{equation}

By Lemma~\ref{lem:input-counts} we may take $\alpha_d:=a_d/2$: indeed
$N(d,m)\leq2N_{RST}(d,m)$ for all sufficiently large $m$, so
$N(d,m)^{\alpha_d}\leq(2N_{RST}(d,m))^{a_d/2}\leq N_{RST}(d,m)^{a_d}$ and
hence $2^{N(d,m)^{\alpha_d}}\leq2^{N_{RST}(d,m)^{a_d}}$.  This justifies the
appeal to Corollary~\ref{cor:rst} in
Claim~\ref{claim:C-vs-RST}.  Combining~\eqref{eq:RST-after-fixing} with
\eqref{eq:triangle-transfer} gives
\begin{equation}
  \Prb[C(U)=\widehat{Sipser}_{d,m}(U)]
  \leq\frac12+N_{RST}(d,m)^{-b_d}+2C_{d}\cdot 2^{-m/20}.
  \label{eq:pre-final-bound}
\end{equation}
It remains to absorb both error terms into a single power of $N(d,m)$.
Set
\begin{equation}
  \theta_d:=\min\left\{b_d(d-1),\ \tfrac1{10}\right\},
  \qquad
  \beta_d:=\frac{\theta_d}{8d}>0 .
  \label{eq:beta-choice}
\end{equation}
By~\eqref{eq:N-coarse-bounds}, $\log_2N(d,m)\leq2dm$, so
$N(d,m)^{-\beta_d}\geq2^{-2d\beta_dm}=2^{-\theta_dm/4}$.  On the other
hand $w_0\geq2^{m-2}$ by Lemma~\ref{lem:input-counts}, so
\[
  N_{RST}(d,m)=m\,w_0\,w^{d-2}
  \geq m\cdot2^{m-2}\left(\tfrac12m2^m\right)^{d-2}
  =2^{-d}m^{d-1}2^{(d-1)m},
\]
whence $\log_2N_{RST}(d,m)\geq(d-1)m-d\geq(d-1)m/2$ for $m\geq4$, and
therefore
$N_{RST}(d,m)^{-b_d}\leq2^{-b_d(d-1)m/2}\leq2^{-\theta_dm/2}$.  Since
$\theta_d\leq\frac1{10}$ we also have
$2C_{d}\cdot 2^{-m/20}\leq2C_{d}\cdot 2^{-\theta_dm/2}$.  Consequently
\begin{equation}
  N_{RST}(d,m)^{-b_d}+2C_{d}\cdot 2^{-m/20}
  \leq2^{-\theta_dm/2}(1+2C_{d})
  \leq2^{-\theta_dm/4}
  \leq N(d,m)^{-\beta_d}
  \label{eq:final-exponent}
\end{equation}
for every large enough $m$.  This
proves~\eqref{eq:G-correlation} for the uncomplemented orientation.

If the formula $Sipser_{d,m}$ has an AND root, complement both exact and modified
targets.  For an AND-root approximating circuit $C$, its complement $1-C$ can be converted using de Morgan's law, to have an OR root, unchanged depth and size, and
\[
  \Prb[C=1-\widehat{Sipser}_{d,m}]=\Prb[1-C=\widehat{Sipser}_{d,m}].
\]
Applying the estimate to the AND-rooted formula $\widehat{Sipser}_{d,m}$ and the OR-rooted circuit $1-C$ (using the same type of couplings described above) proves the normalized
claim.  
\end{proof}

\subsection{Deriving the Main Theorem from the Hardness Lemma.}

Fix $k\geq1$ and set
$d=k+1$.
We construct an infinite language in $\SigmaP{k}{A}$ that is
$\PiP{k}{A}$-immune for almost all oracles $A$. 

Recall that, by Lemma~\ref{lem:wide-spacing}, for every fixed polynomial $q$, there is $j_0\geq 1$ such that
\begin{equation}
   q(t_j)<t_{j+1}
   \qquad\text{for all }j\geq j_0.
   \label{eq:PH-spacing}
\end{equation}

Let $m_\ast(d)$ be sufficiently
large that Lemmas~\ref{lem:R-asymptotic-lemma},~\ref{lem:w0-distance},~\ref{lem:input-counts},~\ref{lem:effective-sipser}, as well as the estimates
\eqref{eq:effective-size}--\eqref{eq:N-coarse-bounds}, hold for every $m\geq m_\ast(d)$.  Choose
\begin{equation}
    J_d :=
    \min\left\{
        j\geq 1 :
        \left\lfloor \frac{t_j}{8d}\right\rfloor
        \geq m_\ast(d)
    \right\}.
\end{equation}
For every $j\geq J_d$, define

\begin{equation}
   m_j=
   \left\lfloor
      \frac{t_j}{8d}
   \right\rfloor
   \label{eq:mj-choice}
\end{equation}
and write
\begin{equation}
   N_j=N(d,m_j).
   \label{eq:Nj-def}
\end{equation}
By~\eqref{eq:effective-size}, there is a constant $\gamma_d>0$ such that for $j\geq J_d$ 
\begin{equation}
   2^{\gamma_dt_j}
   \leq N_j
   \leq 2^{t_j/2}
   <2^{t_j}.
   \label{eq:Nj-range}
\end{equation}

We may therefore identify the $N_j$ leaves of $G_{d,m_j}$ with a
uniformly computable set
$V_j\subseteq\{0,1\}^{t_j}$,
$|V_j|=N_j$.
For example, order the leaves lexicographically by their tuples of
child indices and map the $\ell$th leaf to the $t_j$-bit binary
representation of $\ell$.  Since
$\log N_j=O(t_j)$
and $d$ is fixed, this address is computable in time polynomial in
$t_j$.

For an oracle $A$ and $j\geq J_d$, define
\begin{equation}
   X_j^A=
   G_{d,m_j}\bigl((A(z))_{z\in V_j}\bigr)
   \label{eq:Xj-PH}
\end{equation}
and
\begin{equation}
   L_k^A=
   \left\{
      1^{t_j}:X_j^A=1, j\geq J_d. 
   \right\}
   \label{eq:Lk-PH}
\end{equation}

\begin{lemma}
\label{lem:Lk-in-Sigma}
For every oracle $A$,
\begin{equation}
   L_k^A\in\SigmaP{k}{A}.
   \label{eq:Lk-Sigma}
\end{equation}
\end{lemma}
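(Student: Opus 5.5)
We will exhibit an alternating oracle machine with $k$ alternations, beginning with an existential block, that decides $L_k^A$. The machine simply evaluates the read-once formula $G_{d,m_j}$, with $d=k+1$, on the oracle block $V_j$. The argument has three parts: a deterministic preprocessing phase, a quantifier expansion of the formula, and a count of the levels used.

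First, on input $x$, the machine deterministically checks that $x=1^n$ with $n=t_j$ for some $j\geq J_d$. Membership in $T$ is decided by Lemma~\ref{lem:support-decidable}. The threshold $J_d$ depends only on the fixed $k$, so it can be hardwired. The machine then computes $m_j=\lfloor t_j/(8d)\rfloor$ and all fan-ins $\widehat w_0,w(m_j),m_j$. By Lemmas~\ref{lem:compute-w}, \ref{lem:w0-distance} and~\ref{lemma-validity}, this takes time polynomial in $m_j$, hence in $n$. The same lemmas show that validity of a child index and the rank/address of a leaf, i.e.\ the map from a root-to-leaf path $(i_0,\dots,i_{d-1})$ to the string of $V_j\subseteq\{0,1\}^{t_j}$, are polynomial-time computable. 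Every index has $O_d(m_j)=O(n)$ bits.

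Second, we choose the orientation so that the formula has an OR root. By~\eqref{eq:effective-orientation}, $G_{d,m}$ is an OR-rooted formula of depth $d$: for $d$ even it is $\widehat{Sipser}_{d,m}$ itself, and for $d$ odd, de Morgan turns $1-\widehat{Sipser}_{d,m}$ into an OR-rooted alternating formula with negated literals. We evaluate it by quantifiers, one per gate layer:
\[
  X_j^A=1\iff \exists i_0\,\forall i_1\,\exists i_2\cdots Q_{d-1} i_{d-1}:\ \ell\bigl(A(\mathrm{addr}(i_0,\ldots,i_{d-1}))\bigr)=1 ,
\]
where $\ell$ is the identity or negation according to the parity of $d$. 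Each quantifier ranges over valid indices only. For an existential index, an invalid guess is rejected; for a universal index, an invalid guess is accepted vacuously. These validity checks are deterministic and polynomial-time, so they can be absorbed into the final predicate.

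Third, we count quantifier blocks. Naively there are $d=k+1$ blocks, one too many. The bottom layer, however, is a gate of fan-in $m_j=O(n)$: an AND in the even case, an OR of negated literals in the odd case. It can therefore be evaluated deterministically by querying all $m_j$ leaves, so the last quantifier is eliminated. What remains is $\exists i_0\,\forall i_1\cdots$ with $d-1=k$ alternating blocks followed by a polynomial-time oracle predicate making $O(n)$ queries, which is a $\SigmaP{k}{A}$ computation. For $k=1$ this is the statement that an NP machine guesses a top child and checks its $m_j$ leaves.

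The main point of care is this third step. We must verify that in both parity cases the bottom gate type matches the final quantifier, so that it really collapses into deterministic work; this holds because the layers strictly alternate and the bottom fan-in is only $m_j$. We must also verify that the uniform computability of $\widehat w_0$ is what makes the range of $i_0$ decidable, which is exactly where the reparametrization replacing $w_0$ is needed.
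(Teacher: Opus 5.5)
Your proposal is correct and follows essentially the same route as the paper: decide the support deterministically, use $k$ alternating quantifiers (existential first) for the top $k$ layers of the OR-rooted formula $G_{d,m_j}$, handle invalid indices by rejecting at existential levels and accepting at universal levels via Lemma~\ref{lemma-validity}, and evaluate the fan-in-$m_j$ bottom gate deterministically by querying its leaves. The ``type-matching'' check you flag in the third step is unnecessary: a bottom gate of fan-in $m_j=O(n)$ can be evaluated deterministically whatever its type.
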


\begin{proof}
By Lemma~\ref{lem:support-decidable}, the support $T_{d}=\{1^{t_j}:j\geq J_d\}$ 
is decidable in polynomial time.  Thus it remains to evaluate
$G_{d,m_j}$ on input $1^{t_j}$.

The formula has depth $d=k+1$ and an OR root.  The machine
existentially chooses a child of the root, universally chooses a
child of the next layer, and continues alternating in this fashion.
After $d-1=k$ such choices, a bottom gate has been reached.  Its
$m_j=O(t_j)$ leaves are then evaluated deterministically by computing
their addresses in $V_j$, querying $A$, and evaluating the bottom
AND or OR gate.  All choices have length $O(t_j)$.  

One aspect to take care of in the algorithm above is that of invalid (guessed) indices. By Lemma~\ref{lemma-validity} they can be recognized in polynomial time, and are handled as follows: 
\begin{itemize} 
\item[-] on existential levels invalid indices lead to rejection. 
\item[-] on universal levels invalid indices lead to acceptance. 
\end{itemize} 

This is a
$\SigmaP{k}{A}$ computation.
\end{proof}

\begin{lemma} 
Put
\begin{equation}
   \varepsilon_{d,m}=N(d,m)^{-\beta_d}.
   \label{eq:epsilon-dm}
\end{equation}
Then, for $m\geq m_{\ast}(d)$, 
\begin{equation}
   \left|
   \Prb_X[G_{d,m}(X)=1]-\frac12
   \right|
   \leq \varepsilon_{d,m}.
   \label{eq:sipser-balance}
\end{equation}
\end{lemma}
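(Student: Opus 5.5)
The plan is to derive the balance estimate directly from the hardness Lemma~\ref{lem:effective-sipser}, applied to constant circuits. The point is that a circuit which ignores its input and outputs a constant $b\in\{0,1\}$ can be written in normal form with an AND root, depth at most $d$, and bounded size. Its agreement with $G_{d,m}$ is then exactly $\Prb[G_{d,m}(U)=b]$. The hardness bound therefore caps both $\Prb[G_{d,m}=1]$ and $\Prb[G_{d,m}=0]$ by $\frac12+\varepsilon_{d,m}$.

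Concretely, I would first take, as in Remark~\ref{rem:constants}, the constant gadgets
\[
\mathrm{AND}(\mathrm{OR}(x_1,\neg x_1))\equiv 1,
\qquad
\mathrm{AND}(\mathrm{OR}(x_1),\mathrm{OR}(\neg x_1))\equiv 0 .
\]
These are layered alternating normal-form circuits with an AND root and depth $2\le d$. Each has at most $3$ gates. That is below $2^{N(d,m)^{\alpha_d}}$ once $m\ge m_\ast(d)$, since $N(d,m)\ge 2^m$ by \eqref{eq:N-coarse-bounds}. If the notion of depth in Lemma~\ref{lem:effective-sipser} requires matching the layering exactly, I would pad by unary gates as in the proof of Corollary~\ref{cor:rst}. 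This adds at most $d$ gates and keeps the AND root.

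Applying Lemma~\ref{lem:effective-sipser} to the constant-$1$ circuit gives
\[
\Prb[G_{d,m}(U)=1]\le \tfrac12+N(d,m)^{-\beta_d}.
\]
Applying it to the constant-$0$ circuit gives
\[
\Prb[G_{d,m}(U)=0]\le \tfrac12+N(d,m)^{-\beta_d},
\]
that is, $\Prb[G_{d,m}(U)=1]\ge \frac12-\varepsilon_{d,m}$. Together these two inequalities are \eqref{eq:sipser-balance}. Here $U$, resp.\ $X$ in the statement, is uniform on $\bits^{N(d,m)}$, which is exactly the distribution used in the hardness lemma.

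The only delicate point is the orientation. Lemma~\ref{lem:effective-sipser} is stated for AND-root circuits against $G_{d,m}$, and I am using it exactly in that form, so no dual statement is needed. If one worried that $G_{d,m}$ is OR-rooted while the lemma's proof internally complements targets, the alternative is to use the dual statement: $1-G_{d,m}$ against OR-root constant circuits, $\mathrm{OR}(\mathrm{AND}(x_1),\mathrm{AND}(\neg x_1))\equiv 1$ and $\mathrm{OR}(\mathrm{AND}(x_1,\neg x_1))\equiv 0$. This gives the same two-sided bound, since $|\Prb[1-G=1]-\frac12|=|\Prb[G=1]-\frac12|$.

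I expect no real obstacle. The main check is that $m\ge m_\ast(d)$ guarantees the regime in which Lemma~\ref{lem:effective-sipser} holds, which it does by the choice of $m_\ast(d)$.
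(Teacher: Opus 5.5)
Your proposal is correct and is essentially the paper's proof. The paper likewise applies Lemma~\ref{lem:effective-sipser} to the constant-$0$ and constant-$1$ functions, realized by the gadgets of Remark~\ref{rem:constants} to get the required depth and AND-root orientation; your extra checks (gadget size against $2^{N(d,m)^{\alpha_d}}$, optional padding, and the dual-orientation fallback) only make explicit what the paper leaves implicit.
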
 
\begin{proof} 
Indeed, apply Lemma~\ref{lem:effective-sipser} to the constant
zero and constant one functions, using the gadgets in Remark~\ref{rem:constants}
if necessary to obtain the required syntactic depth and top-gate
orientation.
\end{proof} 

\begin{lemma} Define $\varepsilon_j=N_j^{-\beta_d}$. Then: 
\begin{equation}
   \sum_{j=J_d}^{\infty}\varepsilon_j<\infty.
   \label{eq:epsilon-PH-summable}
\end{equation}
\end{lemma}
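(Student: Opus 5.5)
The plan is a direct comparison with a geometric series, exactly in the spirit of Lemma~\ref{foo-d}. The only inputs needed are the lower bound $N_j\geq 2^{\gamma_d t_j}$ from \eqref{eq:Nj-range} and the positivity of the constant $\beta_d$ from Lemma~\ref{lem:effective-sipser}.

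First, for every $j\geq J_d$ I would write
\[
\varepsilon_j=N_j^{-\beta_d}\leq 2^{-\beta_d\gamma_d t_j}.
\]
This holds because $x\mapsto x^{-\beta_d}$ is decreasing on $(0,\infty)$ and $\beta_d>0$. Put $\kappa:=\beta_d\gamma_d>0$, a constant depending only on $d$.

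Second, I would use the growth of the tower lengths $t_j=2^{j!}$. Since $j!\geq j$ for $j\geq1$, we have $t_j\geq 2^j\geq j$, and therefore
\[
\varepsilon_j\leq 2^{-\kappa j}=(2^{-\kappa})^j .
\]
The right-hand side is the general term of a convergent geometric series with ratio $2^{-\kappa}<1$. Comparison then gives $\sum_{j\geq J_d}\varepsilon_j\leq \sum_{j\geq 1}(2^{-\kappa})^j=\frac{2^{-\kappa}}{1-2^{-\kappa}}<\infty$, which is \eqref{eq:epsilon-PH-summable}.

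I expect no real obstacle. The only point needing care is that \eqref{eq:Nj-range} genuinely holds for all $j\geq J_d$ with a fixed $\gamma_d>0$. This follows from \eqref{eq:effective-size}: it gives $N(d,m)\geq 2^{(d-1)m-(d-1)}\geq 2^{m}$ for large $m$, and $m_j=\lfloor t_j/(8d)\rfloor\geq t_j/(16d)$ once $t_j\geq 8d$, which is guaranteed by the choice of $J_d$. One may then take $\gamma_d=1/(16d)$. Any finitely many initial indices are irrelevant to summability.
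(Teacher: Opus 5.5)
Your proof is correct and is the same as the paper's: the paper also bounds $\varepsilon_j\le 2^{-\beta_d\gamma_d t_j}$ using \eqref{eq:Nj-range} and then uses the tower growth of $t_j=2^{j!}$ to get a convergent series, which you make explicit as a geometric comparison. One small slip in your side check: for $d=2$ the step $2^{(d-1)m-(d-1)}\ge 2^m$ fails, because it reads $2^{m-1}\ge 2^m$, but $N(d,m)\ge 2^m$ still holds by \eqref{eq:N-coarse-bounds} (for $d=2$, the factor $m^{d-1}=m\ge 2$ covers the missing factor of $2$), so your choice $\gamma_d=1/(16d)$ stands.
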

\begin{proof} 
By~\eqref{eq:Nj-range},
\begin{equation}
   \varepsilon_j
   \leq
   2^{-\beta_d\gamma_dt_j}=2^{-\beta_d\gamma_d\cdot  2^{j!}}
   \label{eq:epsilon-PH-bound}
\end{equation}
The sum of right-hand bounds is finite since, e.g. $2^{j!}\geq j!$. 
\end{proof}

\begin{lemma} For every $k\geq 1$ the language $L_{k}^A$ is almost surely infinite.  
\end{lemma}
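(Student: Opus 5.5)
The plan follows the pattern of Lemma~\ref{lem:parity-language-infinite}, with the balance estimate \eqref{eq:sipser-balance} standing in for exact balance. For each $j\geq J_d$, the bit $X_j^A$ depends only on the oracle coordinates in $V_j\subseteq\{0,1\}^{t_j}$. Under the random oracle these coordinates are independent unbiased bits, so $(A(z))_{z\in V_j}$ is uniform on $\{0,1\}^{N_j}$. Since $m_j\geq m_\ast(d)$, \eqref{eq:sipser-balance} gives
\[
  \Prb_A\bigl[X_j^A=1\bigr]\;\geq\;\frac12-\varepsilon_j .
\]

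Next, the events $\{X_j^A=1\}$ for $j\geq J_d$ are mutually independent. The reason is that the blocks $V_j$ consist of strings of the pairwise distinct lengths $t_j$, so they are disjoint, and the oracle has product structure.

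Finally, I would check the divergence hypothesis of the second Borel--Cantelli lemma. By \eqref{eq:epsilon-PH-summable}, $\varepsilon_j\to0$, so $\Prb_A[X_j^A=1]\geq 1/4$ for all sufficiently large $j$. Hence $\sum_{j\geq J_d}\Prb_A[X_j^A=1]=\infty$. Lemma~\ref{lem:second-borel-cantelli} then shows that almost surely $X_j^A=1$ for infinitely many $j$. By \eqref{eq:Lk-PH}, this means $1^{t_j}\in L_k^A$ for infinitely many $j$, so $L_k^A$ is infinite almost surely.

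If the claim is also wanted simultaneously for all $k$, intersect these countably many probability-one events. I do not expect any real obstacle. The only point to check is that the balance bound really applies to $G_{d,m_j}$, which holds because $m_j\geq m_\ast(d)$ by the choice of $J_d$.
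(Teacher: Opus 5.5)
Your proposal is correct and follows the paper's proof essentially step for step. Both argue that the $X_j^A$ are independent because the blocks $V_j$ are disjoint, use the balance bound for $G_{d,m_j}$ to bound $\Prb[X_j^A=1]$ below by a constant for all large $j$ (you take $1/4$, the paper takes $1/3$), and then apply the second Borel--Cantelli lemma.
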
 
\begin{proof} 
Indeed, the blocks $V_j$
are pairwise disjoint, so the variables $X_j^A$ are independent.
By~\eqref{eq:sipser-balance},
\begin{equation}
   \Prb[X_j^A=1]
   =
   \frac12\pm\varepsilon_j.
   \label{eq:Xj-balance}
\end{equation}
In particular,
\begin{equation}
   \Prb[X_j^A=1]\geq\frac13
   \label{eq:Xj-positive-prob}
\end{equation}
for all sufficiently large $j$.  The ordinary second
Borel--Cantelli lemma therefore gives
\begin{equation}
   \Prb[X_j^A=1\text{ infinitely often}]=1.
   \label{eq:Lk-infinite}
\end{equation}
\end{proof} 

It remains to prove immunity:   

\begin{lemma} 
For each fixed $\Pi^P_k$ clocked oracle machine $M$,
\begin{equation}
   \Prb_A\left[
      L(M^A)\subseteq L_k^A
      \Longrightarrow
      L(M^A)\text{ is finite}
   \right]
   =1.
   \label{eq:machinewise-PH-immunity}
\end{equation}

\end{lemma}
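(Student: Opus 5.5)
The plan mirrors the proof of Lemma~\ref{lem:fixed-ph-machine}, with H{\aa}stad's parity bound replaced by the hardness Lemma~\ref{lem:effective-sipser}.

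\textbf{Circuit representation.} Fix a $\PiP{k}{A}$ clocked machine $M$ with polynomial clock $q$. On input $1^{t_j}$, expanding the $k$ alternating quantifiers (the first one universal) into gates, and the final deterministic polynomial-time oracle computation into a decision-tree CNF or DNF, gives a circuit in the oracle variables. This circuit has an AND root, size $2^{t_j^{c}}$, and depth $k+1=d$, provided the innermost CNF/DNF is merged into the last quantifier layer. Concretely: if the last quantifier is existential we use a DNF, if universal a CNF, so that consecutive layers of the same type collapse. We then convert it to normal form with size $2^{t_j^{O(1)}}$.

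\textbf{Exposure of the oracle.} Discard the $j<j_0$, where $q(t_j)<t_{j+1}$ holds for $j\ge j_0$ by Lemma~\ref{lem:wide-spacing}. Define
\begin{equation*}
\mathcal F_j=\sigma\bigl(A(z):|z|<t_{j+1}\bigr),\qquad \mathcal G_j=\sigma\bigl(A(z):|z|<t_{j+1},\ z\notin V_j\bigr).
\end{equation*}
Let $E_j=\{M^A(1^{t_j})\text{ accepts}\}$ and $B_j=E_j\cap\{X_j^A=0\}$; both lie in $\mathcal F_j$. On each atom of $\mathcal G_j$ the predicate of $E_j$ becomes an AND-root normal-form circuit $C_{j,\omega}$ of depth at most $d$ on the uniform fresh block $(A(z))_{z\in V_j}$. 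Restriction introduces only constant leaves, which are allowed by Lemma~\ref{lem:effective-sipser}.

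\textbf{Size check and rare-or-wrong step.} We verify $2^{t_j^c}\le 2^{N_j^{\alpha_d}}$. This holds for large $j$ since $N_j\ge 2^{\gamma_d t_j}$ by \eqref{eq:Nj-range}, so $N_j^{\alpha_d}\ge 2^{\alpha_d\gamma_d t_j}\gg t_j^c$. Lemma~\ref{lem:effective-sipser} then gives
\begin{equation*}
\Pr\bigl(\mathbf 1_{E_j}=X_j^A\mid\mathcal G_j\bigr)\le \tfrac12+\varepsilon_j .
\end{equation*}
By \eqref{eq:sipser-balance}, $|\Pr(X_j^A=1)-\frac12|\le\varepsilon_j$ holds on each atom, since the block is fresh and uniform. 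Lemma~\ref{lem:rare-or-wrong} with $\delta=\varepsilon=\varepsilon_j$ then yields
\begin{equation*}
\Pr(B_j\mid\mathcal G_j)\ \ge\ \tfrac12\Pr(E_j\mid\mathcal G_j)-\varepsilon_j .
\end{equation*}

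\textbf{Borel--Cantelli.} Interleave $\mathcal H_{2j-1}=\mathcal G_j$ and $\mathcal H_{2j}=\mathcal F_j$. This is a filtration because $V_j\subseteq\{0,1\}^{t_j}$. Apply Lemma~\ref{lem:CBC} to both $E_j$ and $B_j$. If $\sum_jP_j<\infty$, then $M^A$ accepts only finitely many support strings, and since $L_k^A\subseteq T$, the inclusion $L(M^A)\subseteq L_k^A$ forces $L(M^A)$ to be finite. Otherwise the summability \eqref{eq:epsilon-PH-summable} makes $\sum_j\Pr(B_j\mid\mathcal G_j)$ diverge, so false positives occur infinitely often and $L(M^A)\not\subseteq L_k^A$.

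\textbf{Main obstacle.} The delicate step is the circuit orientation and depth bookkeeping. The conditioned circuit for a $\Pi_k$ machine must have an AND root and depth at most $d=k+1$, which matches the hypothesis of Lemma~\ref{lem:effective-sipser} against the OR-rooted $G_{d,m}$. Achieving this requires the innermost decision-tree expansion to be merged into the bottom quantifier layer, plus an extra bottom layer of bounded fan-in $\mathrm{poly}(t_j)$. I would handle this by the standard choice: take a CNF when the last quantifier is universal and a DNF when it is existential, so the merge is possible and depth stays $k+1$.
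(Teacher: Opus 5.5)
Your proposal is correct and follows the paper's proof essentially step for step. It uses the same ingredients:
\begin{itemize}
\item the depth-$(k+1)$, AND-root circuit obtained by merging the decision-tree CNF/DNF into the last quantifier layer;
\item the same $\mathcal F_j/\mathcal G_j$ exposure with syntactic restriction to the fresh block $V_j$;
\item Lemma~\ref{lem:effective-sipser} together with Lemma~\ref{lem:rare-or-wrong} (with $\delta=\varepsilon=\varepsilon_j$), giving $Q_j\ge\tfrac12P_j-\varepsilon_j$;
\item L\'evy's conditional Borel--Cantelli lemma on the interleaved filtration, followed by the same two-case analysis.
\end{itemize}
The paper is slightly more explicit on one point your sketch takes for granted: it pads quantified strings, vacuous quantifier blocks and degenerate formulas, using unary gates or the constant gadgets of Remark~\ref{rem:constants}, so that the AND root and depth $d$ are preserved.
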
 
\begin{proof} 

There is a polynomial $q_M$ such that
on every input $x$ the acceptance predicate of $M^A(x)$, viewed as a
Boolean function of the oracle bits, is represented by a depth-$d$
circuit $C_{M,x}$ satisfying
\begin{align}
   \operatorname{depth}(C_{M,x})&=d=k+1,
   \label{eq:PH-circuit-depth}\\
   \operatorname{top}(C_{M,x})&=\mathrm{AND},
   \label{eq:PH-circuit-top}\\
   \operatorname{size}(C_{M,x})&\leq2^{q_M(|x|)},
   \label{eq:PH-circuit-size}\\
   |z|&\leq q_M(|x|)
   \quad\text{for every oracle variable }A(z)
   \text{ occurring in }C_{M,x}.
   \label{eq:PH-query-length}
\end{align}

Fix the machine \(M\).  After padding with unused quantified bits if necessary,
there are a polynomial \(p\), polynomially bounded strings
\(y_1,\ldots,y_k\), and a deterministic polynomial-time oracle machine \(R\)
such that
\[
  M^A(x)=1
  \quad\Longleftrightarrow\quad
  \forall y_1\,\exists y_2\cdots Q_k y_k\;
  R^A(x,y_1,\ldots,y_k)=1,
\]
where the quantifiers alternate, beginning universally.  The running time,
number of oracle queries, and every query length of \(R\) are at most \(p(|x|)\)
after enlarging \(p\).

For fixed \(x,y_1,\ldots,y_k\), expand the adaptive computation of \(R^A\) into
its binary oracle decision tree.  This tree has depth at most \(p(|x|)\) and at
most \(2^{p(|x|)}\) leaves.  It can be expressed either as a DNF, by taking one
term for every accepting leaf, or as a CNF, by taking one clause for every
rejecting leaf.  Repeated oracle variables may be retained or identified; in
either case all negations occur only on input literals.

Choose the DNF representation when \(Q_k=\exists\), and the CNF representation
when \(Q_k=\forall\).  Its top gate then has the same type as the gate
representing the last quantifier and can be merged with that gate.  The
remaining term or clause layer has the opposite type.  The \(k\) quantifier
layers together with this final term/clause layer therefore form an
alternating circuit of depth exactly \(k+1=d\), with an AND root.  All gates at
a fixed level have the same type.  Constant decision-tree leaves are permitted
by Definition~\ref{def:normal-form}; vacuous quantifier blocks and degenerate formulas can be
padded by unary gates or the constant gadgets of Remark~\ref{rem:constants} without changing
the root type.

The number of assignments to all quantified strings and the total size of all
decision-tree formulas are \(2^{p'(|x|)}\) for a polynomial \(p'\) depending
only on \(M\).  Any remaining layered-normal-form conversion has at most a
fixed-degree polynomial cost and can be absorbed by enlarging \(p'\).
Consequently there is a polynomial \(q_M\) for which
\[
  \operatorname{depth}(C_{M,x})=k+1,
  \qquad \operatorname{top}(C_{M,x})=\mathrm{AND},
  \qquad \operatorname{size}(C_{M,x})\leq2^{q_M(|x|)},
\]
and every oracle variable \(A(z)\) occurring in \(C_{M,x}\) satisfies
\(|z|\leq q_M(|x|)\).

By~\eqref{eq:PH-spacing}, for $j\geq j_0$, $q_M(t_j)<t_{j+1}.$ 
Moreover,~\eqref{eq:Nj-range} implies
$q_M(t_j)<N_j^{\alpha_d}$ 
for all sufficiently large $j$, and hence
\begin{equation}
   2^{q_M(t_j)}
   \leq
   2^{N_j^{\alpha_d}}.
   \label{eq:circuit-fits-RST-exp}
\end{equation}

Define
\begin{align}
   \mathcal F_j
   &=
   \sigma\bigl(A(z):|z|<t_{j+1}\bigr),
   \label{eq:Fj-PH}\\
   \mathcal G_j
   &=
   \sigma\bigl(A(z):|z|<t_{j+1},\ z\notin V_j\bigr).
   \label{eq:Gj-PH}
\end{align}
Then
\begin{equation}
   \mathcal F_{j-1}
   \subseteq
   \mathcal G_j
   \subseteq
   \mathcal F_j \subseteq \mathcal G_{j+1}
   \label{eq:F-G-inclusions-PH}
\end{equation}
The last inclusion follows since $V_{j+1}$ consists of strings of length exactly $t_{j+1}$, whereas $\mathcal{F}_j$ sees only strings of shorter length. 

Conditional on $\mathcal G_j$, the bits $(A(z))_{z\in V_j}$ 
remain mutually independent unbiased bits.
Let
\begin{equation}
   E_j=
   \{M^A(1^{t_j})\text{ accepts}\}.
   \label{eq:Ej-PH}
\end{equation}
By~\eqref{eq:PH-spacing}, $
   E_j\in\mathcal F_j.$ 

After conditioning on $\mathcal G_j$, all oracle variables of
$C_{M,1^{t_j}}$ outside $V_j$ become constants.  We perform this
restriction syntactically, without simplifying away gates.
Consequently the alternating layered structure and the AND root
are preserved.  Thus, on each atom of $\mathcal G_j$, $E_j$ is
represented as a function of the fresh block $V_j$ by a
normal-form depth-$d$, AND-root circuit of size at most
$2^{N_j^{\alpha_d}}$.  Lemma~\ref{lem:effective-sipser}
therefore yields

\begin{equation}
   \Prb[\one_{E_j}=X_j^A\mid\mathcal G_j]
   \leq
   \frac12+\varepsilon_j,
   \qquad
   \varepsilon_j=N_j^{-\beta_d}.
   \label{eq:conditional-RST}
\end{equation}
At the same time,
\begin{equation}
   \left|
   \Prb[X_j^A=1\mid\mathcal G_j]-\frac12
   \right|
   \leq\varepsilon_j.
   \label{eq:conditional-balance}
\end{equation}

For $j\geq J_d$ set 
\begin{align}
   P_j
   &=
   \Prb[E_j\mid\mathcal G_j],
   \label{eq:Pj-PH}\\
   B_j
   &=
   E_j\cap\{X_j^A=0\},
   \label{eq:Bj-PH}\\
   Q_j
   &=
   \Prb[B_j\mid\mathcal G_j].
   \label{eq:Qj-PH}
\end{align}
The event $B_j$ is a false positive: $M$ accepts $1^{t_j}$ although
$1^{t_j}\notin L_k^A$.

Let $J=J(M,d)\geq J_d$ be large enough so that~\eqref{eq:PH-spacing},~\eqref{eq:circuit-fits-RST-exp} and all asymptotic hardness results hold for $j\geq J$.  Applying the rare-or-wrong inequality (Lemma~\ref{lem:rare-or-wrong}) to $\one_{E_j}$ and $X_j^A$, $j\geq J$, implies
\begin{equation}
 Q_j\geq \frac12P_j-\varepsilon_j,
 \qquad \sum_{j\geq J}\varepsilon_j<\infty.
\end{equation}
On the interlaced filtration $\mathcal H_{2j-2}=\mathcal F_{j-1}$, $\mathcal H_{2j-1}=\mathcal G_j$, $\mathcal H_{2j}=\mathcal F_j$, apply L\'evy's lemma to the sequences $D_{2j}=E_j,D_{2j-1}=\varnothing$ and $D'_{2j}=B_j,D'_{2j-1}=\varnothing$.  On a common event of probability one:

\begin{itemize}
    \item if $\sum_{j\geq J}P_j<\infty$, only finitely many $E_j$ occur;
    \item if $\sum_{j\geq J}P_j=\infty$, then $\sum_{j\geq J}Q_j=\infty$, so infinitely many $B_j$ occur.
\end{itemize}

In the first case, $L(M^A)\subseteq L_k^A$ forces $L(M^A)$ to be finite because no word outside the support can be accepted.  In the second case, $B_j$ gives a false positive, so $L(M^A)\nsubseteq L_k^A$.  Therefore
\begin{equation}
 \Prb_A\bigl[L(M^A)\subseteq L_k^A\Longrightarrow L(M^A)\text{ is finite}\bigr]=1.
\end{equation}

\end{proof} 

There are only countably many syntactically clocked $\Pi^P_k$
machines.  Intersecting their probability-one events and using
\eqref{eq:Lk-infinite}, we obtain
\begin{equation}
   \Prb_A\left[
      L_k^A\in\SigmaP{k}{A}
      \text{ and }L_k^A
      \text{ is }\PiP{k}{A}\text{-immune}
   \right]
   =1.
   \label{eq:first-half-theorem1}
\end{equation}

For the second separation we consider the symmetric witness
\begin{equation}
   R_k^A=
   \left\{
      1^{t_j}:X_j^A=0, j\geq J_d. 
   \right\}
   \label{eq:Rk-PH}
\end{equation}
Its target function on $V_j$ is $1-G_{d,m_j}$, an alternating
depth-$d$ formula with an AND root.  Hence
$R_k^A\in\PiP{k}{A}.$
For a $\Sigma^P_k$ oracle machine the corresponding restricted
circuit has an OR root, so the dual part of
Lemma~\ref{lem:effective-sipser} applies.  The entire
rare-or-wrong, filtration, and conditional Borel-Cantelli argument
is unchanged.  Consequently
\begin{equation}
   \Prb_A\left[
      R_k^A\in\PiP{k}{A}
      \text{ and }R_k^A
      \text{ is }\SigmaP{k}{A}\text{-immune}
   \right]
   =1.
   \label{eq:second-half-theorem1}
\end{equation}

Finally, there are only countably many integers $k\geq1$.
Intersecting the probability-one events
\eqref{eq:first-half-theorem1} and
\eqref{eq:second-half-theorem1} over all $k$ proves Theorem~1
simultaneously at every level of the polynomial hierarchy.

\section{Conclusions and Open Problems} 
\label{sec:conc}
Although a single general metatheorem covering all the applications of our general method may be hard to state and prove, we believe that (if you have made it thus far in the paper) you may be convinced that the approach we proposed is reasonably flexible, and can conceivably be employed to upgrade many other separation results under a random oracle to separations with immunity. 

It is also somewhat instructive to scrutinize the limits of our method. Its clearest failure was Case (b) of Observation~\ref{thm:bh}, but there isn't that much to learn from this particular case: there the obstruction was structural, rather than a limitation of our rare-or-wrong approach.  Indeed, every language in the Boolean hierarchy is a finite union of $\DP$ languages (and this remains true relative to every oracle).  Consequently, every infinite language in the Boolean hierarchy contains an infinite $\DP$ sublanguage, making the corresponding immunity statement impossible, independently of the choice of oracle or of the proof technique.  This contrasts with the situation of Theorem~\ref{thm:npconp}, where no analogous structural obstruction is present: there the difficulty is probabilistic, and consists in making sure that the exceptional events in the rare-or-wrong argument have summable probabilities. Thus, while the method cannot turn every random-oracle separation into a separation with immunity, its failures may reflect either genuine structural barriers or insufficient \emph{local} hardness. 

On the other hand the method, as described in this paper, does not currently produce the strongest possible separations. Hemaspaandra and Zimand \cite{32} have studied the concept of \emph{balanced immunity} and have shown that with probability 1 (relative to a random oracle $A$) $NP^A$ contains a set that is $P^A$-balanced-immune. 

The witness sets constructed in this paper are not even biimmune, much less balanced immune. The reason is that they include in the witnessing languages words of type $1^{t_j}$ only. The complement of the set $T=\{1^{t_j}:j\geq 1\}$ is polynomial time decidable and included in the complement of the constructed languages $L^{A}$. 
It remains to be seen whether Hemaspaandra and Zimand's more complicated approach to witness language construction can be combined with the "rare-or-wrong" technique from this paper to obtain the given separations with balanced immunity, when appropriate\footnote{Not all strong separations can extend to separations with biimmunity: for instance an infinite set in $\Sigma_{k}^{P,A}$ cannot be $\Pi_{k}^{P,A}$-biimmune.}.

Another interesting open problem concerns \emph{generic oracle separations}. However, generic separations \emph{with immunity} are a topic with very few results: practically the only one we know is from the same paper of Hemaspaandra and Zimand \cite{32}. 

On the other hand, a recent exciting approach has connected separations with respect to a generic oracle in (the different) \emph{Ellentuck topology} \cite{33} to unconditional separations. It is quite interesting (in this author's opinion) that methods based on Ramsey Theory and Descriptive Set Theory may have implications for separating complexity classes. 
Whether a version of this phenomenon extends to separations \emph{with immunity} is a fascinating (and open) research question. 

\section{Acknowledgments} 

I acknowledge the use of AI assistants (precise names available on request) to help me oversell the results of this paper, for idea exploration,  and for editorial assistance as a "virtual referee"\footnote{In particular, the most significant contribution of the LLMs was to single out the issue of polynomial time computability of the fan-ins in an attempted proof of our Theorem~\ref{thm:main} using the original family of Sipser functions from \cite{1}, and to suggest a way around it by employing an approximation of the top fan-in; the suggested fix was incomplete (it did not realize that the number of variables of the reparameterized formulas was different from those of the original ones, and was missing the coupling lemma, whose idea was supplied by the author) and more complicated: the approximation proposed by the LLM guaranteed that $|\widehat{w_0}-w_0|\leq 1$ and ensured polynomial time computability of $\widehat{w_0}$ by a rather baroque argument; the author realized that the coupling lemma allowed a looser approximation $\widehat{w_0}$, with a simplified complexity argument.}. Said LLMs and their parent companies are \textbf{not} responsible, however, for several aspects of the paper, including (i) the main technical ingredient of this paper, the "rare-or-wrong" approach, which existed in a preliminary form (Lemma~\ref{lem:rare-or-wrong}, and the proof of Theorem~\ref{thm:ph-pspace-immune}) before LLMs were a thing; 
(ii) the results in this paper and the writing of the proofs, for which the author assumes full responsibility;  
(iii) some personal touches in the writing.  

On the other hand, I would like to thank all the \emph{people} whose results motivated this paper\footnote{I first asked myself the question settled by Theorem~\ref{thm:main} in 1995, reading \cite{3} as a graduate student; this is a paper I wish I had written back then.}, especially Pavan Aduri, Jin-Yi Cai, Lance Fortnow, the late Juris Hartmanis, Lane Hemaspaandra, Elvira Mayordomo, Ken Regan, J\"org Rothe, Ben Rossman, the late Alan Selman, Marius Zimand. I have had the good fortune to meet every one of the people listed above, and to interact substantially with (and learn from) many of them.



\begin{thebibliography}{99}

\bibitem{51} 
K.~Alladi and M.~L.~Robinson. 
\emph{Legendre polynomials and irrationality}.
J.~Reine Angewandte Mathematik, 318 (1980), pp. 137--155.


\bibitem{23} S. Arora and B. Barak. \emph{Computational complexity: a modern approach.} Cambridge University Press, 2009.

\bibitem{4}
J.~Aspnes, R.~Beigel, M.~Furst, and S.~Rudich. 
\newblock \emph{The expressive power of voting polynomials},
\newblock Combinatorica \textbf{14} (1994), no.~2, pp. 135--148. 







\bibitem{5}
L.~Babai. 
\newblock \emph{Random oracles separate PSPACE from the polynomial-time
hierarchy},
\newblock Information Processing Letters \textbf{26} (1987),
pp. 51--53.
\newblock DOI:
\href{https://doi.org/10.1016/0020-0190(87)90036-6}
{10.1016/0020-0190(87)90036-6}.









\bibitem{43} 
J.L. Balcázar and U. Schöning. \emph{Bi-immune sets for complexity classes.} Mathematical Systems Theory 18(1), 1985, pp. 1--10.


\bibitem{7}
R. Beigel. \emph{Relativized counting classes: Relations among thresholds, parity, and mods.} Journal of Computer and System Sciences, 1991, 42.1: 76-96.



\bibitem{2} C.H.~Bennett and J.~Gill. \emph{Relative to a Random Oracle A, $P^A\neq NP^A\neq \mbox{co-}NP^A$ with Probability 1}. SIAM Journal on Computing, 10(1), 1981, pp. 96--113. \newblock DOI: \href{https://doi.org/10.1137/0210008}{10.1137/0210008}.





\bibitem{12}
D.~Bruschi,
\newblock \emph{Strong separations of the polynomial hierarchy with
oracles: constructive separations by immune and simple sets},
\newblock Theoretical Computer Science \textbf{102}(2)
(1992), 215--252.
\newblock DOI:
\href{https://doi.org/10.1016/0304-3975(92)90232-5}
{10.1016/0304-3975(92)90232-5}.

\bibitem{6}
J.-Y. Cai. 
\newblock \emph{With probability one, a random oracle separates PSPACE from the
polynomial-time hierarchy},
\newblock Journal of Computer and System Sciences \textbf{38}
(1989), no.~1, pp. 68--85.

\bibitem{29} J.-Y. Cai. \emph{Probability One Separation of the Boolean Hierarchy}, In: Brandenburg, F.J., Vidal-Naquet, G., Wirsing, M. (eds) Proceedings of STACS 87. Lecture Notes in Computer Science, vol 247, 
pp. 148--158. Springer Verlag. \href{https://doi.org/10.1007/BFb0039602}{doi:10.1007/BFb0039602}

\bibitem{37} J-Y. Cai, T. Gundermann, J. Hartmanis, L. Hemachandra, V. Sewelson, K. Wagner and G. Wechsung. \emph{The Boolean Hierarchy I: Structural Properties}. SIAM Journal on Computing, 17(6), 1988, pp. 1232--1252. 



\bibitem{30} J-Y. Cai and L. Hemachandra. \emph{The Boolean hierarchy: hardware over NP.} In: Selman, A.L. (eds) Structure in Complexity Theory. Lecture Notes in Computer Science, vol 223, pp. 105-124.  Springer Verlag. \href{https://doi.org/10.1007/3-540-16486-3$\_$93}{doi:10.1007/3-540-16486-3$\_$93}

\bibitem{44} J-Y. Cai, A. Selman. \emph{Fine separation of average-time complexity classes.} SIAM Journal on Computing 28(4), 1999, pp. 1310--1325.

\bibitem{31}
T. K. Chandra. \emph{The Borel-Cantelli Lemma}, SpringerBriefs in Statistics, Springer, 2012.

\bibitem{13}
R.~Chang, B.~Chor, O.~Goldreich, J.~Hartmanis, J.~H{\aa}stad,
D.~Ranjan, and P.~Rohatgi. 
\newblock \emph{The Random Oracle Hypothesis is false.} 
\newblock Journal of Computer and System Sciences,
\textbf{49}(1) (1994), 24--39.
\newblock DOI:
\href{https://doi.org/10.1016/S0022-0000(05)80084-4}
{10.1016/S0022-0000(05)80084-4}.


\bibitem{33} 
A. Creiner and S. Jackson. \emph{Borel complexity and Ramsey largeness of sets of oracles separating complexity classes.} Mathematical Logic Quarterly 69.3 (2023): 267-286.

\bibitem{50}
R. Durrett. \emph{Probability. Theory and Examples}, 4th Edition, Cambridge University Press, 2010. 

\bibitem{46} 
L. Fortnow, A. Pavan, and A. L. Selman. \emph{Distributionally hard languages.} Theory of Computing Systems 34(3), 2001, pp. 245--261.






\bibitem{24}
R.~Ghosal, Y.~Ishai, A.~Korb, E.~Kushilevitz, P.~Lou,
and A.~Sahai. 
\newblock \emph{Hard languages in $NP\cap coNP$ and NIZK proofs from
unstructured hardness.} 
\newblock in \emph{Proceedings of the 55th Annual ACM Symposium on
Theory of Computing (STOC 2023)}, Association for Computing
Machinery, New York, 2023, pp.~1243--1256.
\newblock DOI:
\href{https://doi.org/10.1145/3564246.3585119}
{10.1145/3564246.3585119}.
\newblock Full version: Cryptology ePrint Archive,
Report 2023/1972.



\bibitem{25}
O.~Goldreich,
\newblock \emph{Foundations of Cryptography, Volume 1: Basic Tools},
\newblock Cambridge University Press, Cambridge, 2001.
\newblock DOI:
\href{https://doi.org/10.1017/CBO9780511546891}
{10.1017/CBO9780511546891}.

\bibitem{26}
O.~Goldreich,
\newblock \emph{On one-way permutations (a clarification)}. 
\newblock technical note,
Weizmann Institute of Science.
\newblock Available from
\url{https://www.wisdom.weizmann.ac.il/~oded/owp.html}.

\bibitem{27}
O.~Goldreich, L.~A. Levin, and N.~Nisan,
\newblock \emph{On constructing 1-1 one-way functions}. 
\newblock in O.~Goldreich (ed.), Studies in Complexity and Cryptography:
Miscellanea on the Interplay between Randomness and Computation,
Lecture Notes in Computer Science, vol.~6650,
Springer, 2011, pp.~13--25.
\newblock DOI:
\href{https://doi.org/10.1007/978-3-642-22670-0_3}
{10.1007/978-3-642-22670-0\_3}.

\bibitem{15}
S.~Goldwasser, S.~Micali, and C.~Rackoff,
\newblock \emph{The knowledge complexity of interactive proof systems}. 
\newblock SIAM Journal on Computing, \textbf{18}(1)
(1989), 186--208.
\newblock DOI:
\href{https://doi.org/10.1137/0218012}
{10.1137/0218012}.


\bibitem{16}
S.~Goldwasser and M.~Sipser. 
\newblock \emph{Private coins versus public coins in interactive proof
systems}
\newblock in Proceedings of the 18th Annual ACM Symposium on
Theory of Computing (STOC 1986), pp.~59--68.
\newblock DOI:
\href{https://doi.org/10.1145/12130.12137}
{10.1145/12130.12137}.




\bibitem{61} 
J. Håstad. \emph{Computational Limitations for Small-Depth Circuits}. M.I.T. Press, 1987. 

\bibitem{34} 
J. Håstad. \emph{On the correlation of parity and small-depth circuits.} SIAM Journal on Computing 43.5 (2014): 1699-1708.

\bibitem{39} J. Håstad, B. Rossman, R. Servedio and L.Y. Tan. \emph{An average-case depth hierarchy theorem for boolean circuits}. Journal of the A.C.M., 64(5), 2017, pp. 1--27.

\bibitem{32}
L. Hemaspaandra and M. Zimand. 
\emph{Strong self-reducibility precludes strong immunity.} Mathematical systems theory 29.5 (1996): 535-548.


\bibitem{48} 
J. Hitchcock, A. Pavan, and N. V. Vinodchandran. \emph{Partial bi-immunity, scaled dimension, and NP-completeness.} Theory of Computing Systems 42(2), 2008, pp. 131--142.

\bibitem{45} 
S. Homer and W. Maass. \emph{Oracle-dependent properties of the lattice of NP sets.} Theoretical Computer Science 24.3 (1983), pp. 279--289.

\bibitem{65} 
G. Istrate. \emph{Game-theoretic Models of Moral and Other-Regarding Agents (extended abstract)}. Proceedings of TARK 2021, 18th Conference on Theoretical Aspects of Rationality and Knowledge, Electronic Proceedings of Theoretical Computer Science, vol. 335. pp. 217--227. 

\bibitem{42} C. G. Jockusch, Jr. \emph{The degrees of bi-immune sets.} Mathematical Logic Quarterly 15.7‐12 (1969), pp. 135--140.

\bibitem{36} 
J. Katz and Y. Lindell. \emph{Introduction to Modern Cryptography, 3rd ed.} CRC Press, 2020. 

\bibitem{38} 
Ker-I Ko. \emph{A note on separating the relativized polynomial time hierarchy by immune sets.} RAIRO. Informatique théorique et applications 24.3 (1990): pp. 229--240.





\bibitem{21}
C.~Lund, L.~Fortnow, H.~Karloff, and N.~Nisan.
\newblock \emph{Algebraic methods for interactive proof systems},
\newblock Journal of the A.C.M. \textbf{39}(4)
(1992), 859--868.
\newblock DOI:
\href{https://doi.org/10.1145/146585.146605}
{10.1145/146585.146605}.

\bibitem{49}
E. Mayordomo. \emph{Almost every set in exponential time is P-bi-immune.} Theoretical Computer Science 136(2), 1994, pp. 487--506.

\bibitem{54} 
A. Pavan and A.L. Selman. \emph{Complete distributional problems, hard languages, and resource-bounded measure.} Theoretical Computer Science 234.1-2 (2000): 273-286.

\bibitem{47} 
A. Pavan and A.L. Selman. \emph{Bi-immunity separates strong NP-completeness notions.} Information and Computation 188(1), 2004, pp. 116--126.

\bibitem{41} E. L. Post. \emph{Recursively enumerable sets of positive integers and their decision problems.} Bulletin of the A.M.S. 50(5), 1944, pp. 284--316. 

\bibitem{60} K. W. Regan and J. Royer. \emph{On Closure Properties of Bounded Two-Sided Error Complexity Classes}. Mathematical Systems Theory \textbf{28}, 1995, pp. 229--243. 


\bibitem{1} B. Rossman, R. A. Servedio, and L.-Y. Tan,
\emph{An Average-Case Depth Hierarchy Theorem for Boolean Circuits}. In Proceedings of the 56th Annual Symposium on Foundations of Computer Science (FOCS'15), 2015, pp. 1030-1048. Full version: \url{https://arxiv.org/abs/1504.03398v1}.



\bibitem{17} B. Rossman, R. Servedio, and L.-Y. Tan. \emph{The Polynomial Hierarchy, Random Oracles, and Boolean Circuits.} ACM SIGACT News 46(4), 2015, pp. 50--68. DOI: 10.1145/2852040.2852052.

\bibitem{40}
Jörg Rothe. \emph{Immunity and Simplicity for Exact Counting and Other Counting Classes.} RAIRO 33(2) (1999), pp. 159--176.



\bibitem{22}
A.~Shamir,
\newblock \emph{IP = PSPACE},
\newblock Journal of the A.C.M.\textbf{39}(4)
(1992), 869--877.
\newblock DOI:
\href{https://doi.org/10.1145/146585.146609}
{10.1145/146585.146609}.

\bibitem{28}
G.~Tardos,
\newblock \emph{Query complexity, or why is it difficult to separate
$NP^A\cap coNP^A$ from $P^A$ by random oracles $A$?}.
\newblock \emph{Combinatorica} \textbf{9}(4)
(1989), 385--392.
\newblock DOI:
\href{https://doi.org/10.1007/BF02125350}
{10.1007/BF02125350}.

\bibitem{3} N. K. Vereshchagin. \emph{Relationships between NP-sets, co-NP-sets, and P-sets relative to random oracles.} In Proceedings of the Eighth Annual Structure in Complexity Theory Conference. IEEE, 1993, pp. 132-138. \href{https://doi.org/10.1109/SCT.1993.336533}{doi:10.1109/SCT.1993.336533}

\bibitem{63} N.K. Vereshchagin, \emph{NP-sets are Co-NP-immune relative to a random oracle.} Proceedings Third Israel Symposium on the Theory of Computing and Systems, pp. 40--45, 1995.  \href{https://doi.org/10.1109/ISTCS.1995.377048}{doi:10.1109/ISTCS.1995.377048}.
\end{thebibliography}
\end{document}